\newcommand*{\QUANTUM}{}
\newtheorem{theorem}{Theorem}
\newtheorem{definition}[theorem]{Definition}
\newtheorem{lemma}[theorem]{Lemma}
\newtheorem{corollary}[theorem]{Corollary}
\newtheorem{problem}[theorem]{Problem}
\newcommand{\bI}{\mathbb{I}}
\newcommand{\id}{\mathrm{id}}
\newcommand{\rd}{\mathrm{d}}
\newcommand{\bvec}[1]{\mathbf{#1}}
\DeclareMathAlphabet{\mymathbb}{U}{BOONDOX-ds}{m}{n}
\newcommand{\vg}{\bvec{g}}
\renewcommand{\Re}{\mathrm{Re}}
\renewcommand{\Im}{\mathrm{Im}}
\newcommand{\I}{\mathrm{i}}
\newcommand{\mc}[1]{\mathcal{#1}}
\newcommand{\wt}[1]{\widetilde{#1}}
\newcommand{\abs}[1]{\left\lvert#1\right\rvert}
\newcommand{\norm}[1]{\left\lVert#1\right\rVert}
\newcommand{\Or}{\mathcal{O}}
\newcommand{\NN}{\mathbb{N}}
\newcommand{\RR}{\mathbb{R}}
\newcommand{\CC}{\mathbb{C}}
\newcommand{\half}{\frac{1}{2}}
\DeclareMathOperator\arcsinh{arcsinh}
\DeclareMathOperator\arccosh{arccosh}
\begin{document}

\title{Infinite quantum signal processing}
\author{Yulong Dong} 
        \affiliation{Department of Mathematics, University of California, Berkeley,  CA 94720, USA.}
        \orcid{0000-0003-0577-2475}
\author{Lin Lin}
        \affiliation{Department of Mathematics, University of California, Berkeley,  CA 94720, USA.}
        \affiliation{Applied Mathematics and Computational Research Division, Lawrence Berkeley National Laboratory, Berkeley, CA 94720, USA}
        \affiliation{Challenge Institute for Quantum Computation, University of California, Berkeley,  CA 94720, USA}
        \orcid{0000-0001-6860-9566}
\author{Hongkang Ni} 
        \affiliation{
Institute for Computational and Mathematical Engineering (ICME), Stanford University, Stanford, CA 94305, USA.}
\author{Jiasu Wang} 
        \affiliation{Department of Mathematics, University of California, Berkeley, CA 94720, USA.}
        \orcid{0000-0002-1321-2649}

\begin{abstract}
Quantum signal processing (QSP) represents a real scalar polynomial of degree $d$ using a product of unitary matrices of size $2\times 2$, parameterized by $(d+1)$ real numbers called the phase factors. This innovative representation of polynomials has a wide range of applications in quantum computation. 
When the polynomial of interest is obtained by truncating an infinite polynomial series, a natural question is whether the phase factors have a well defined limit as the degree $d\to \infty$. 
While the phase factors are generally not unique, we find that there exists a consistent choice of parameterization so that the limit is well defined in the $\ell^1$ space. 
This generalization of QSP, called the infinite quantum signal processing, can be used to represent a large class of non-polynomial functions.
Our analysis  reveals a surprising connection between the regularity of the target function and the decay properties of the phase factors.
Our analysis also inspires a very simple and efficient algorithm to approximately compute the phase factors in the $\ell^1$ space. 
The algorithm uses only double precision arithmetic operations, and provably converges when the $\ell^1$ norm of the Chebyshev coefficients of the target function is upper bounded by a constant that is independent of $d$. This is also the first numerically stable algorithm for finding phase factors with provable performance guarantees in the limit $d\to \infty$.

\end{abstract}

\maketitle

\newpage
\tableofcontents
        
\section{Introduction}

\subsection{Background}\label{sec:background}
The study of the representation of polynomials has a long history, with rich applications in a diverse range of fields.
It is therefore exciting that a  new way of representing polynomials, called quantum signal processing\footnote{The term ``signal processing'' is due to an analogy to digital filter designs on classical computers.} (QSP)~\cite{LowChuang2017,GilyenSuLowEtAl2019}, has emerged recently in the context of quantum computation. 
The motivation for this development can be seen as follows.
For simplicity let $H\in\CC^{N\times N}$ be a Hermitian matrix with all eigenvalues in the interval $[-1,1]$, and let $f\in\RR[x]$ be a real scalar polynomial of degree $d$. We would like to efficiently encode  a matrix polynomial $f(H)$ using a unitary matrix that can be efficiently implemented on a quantum computer.
An inherent difficulty of this task is that a quantum algorithm is given by the product of a sequence of unitary matrices, but in general neither $H$ nor $f(H)$ is unitary. In an extreme scenario, let $H=x\in[-1,1]$ be a scalar, and we are interested in a representation of the polynomial $f(x)$ in terms of unitary matrices. 

Quantum signal processing proposes the following solution to the problem above: 
Let 
\begin{displaymath}
W(x) = e^{\I \arccos(x) X}=\left(\begin{array}{cc}{x} & {\I \sqrt{1-x^{2}}} \\ {\I \sqrt{1-x^{2}}} & {x}\end{array}\right), \text{ with } X=\begin{pmatrix}
0 & 1\\
1 & 0
\end{pmatrix},
\end{displaymath}
be a $2\times 2$ unitary matrix parameterized by $x\in[-1,1]$. 
Then the following expression 
\begin{equation}\label{eqn:qsp-unitary}
    U(x, \Psi) := e^{\I \psi_0 Z} \prod_{j=1}^{d} \left[ W(x) e^{\I \psi_j Z} \right], \text{ with } Z=\begin{pmatrix}
1 & 0 \\
0 & -1
\end{pmatrix},
\end{equation}
is a unitary matrix for any choice of phase factors $\Psi:=(\psi_0,\psi_1,\cdots,\psi_d)\in\RR^{d+1}$.
Here $X,Z$ are Pauli matrices. 
One can verify that the top-left entry of $U(x,\Psi)$ is a complex polynomial in $x$.
Moreover, for any \textit{target polynomial} $f\in\RR[x]$ satisfying (1) $\deg(f)=d$, (2) the  parity of $f$ is $d \bmod 2$, (3) $\norm{f}_{\infty}:=\max_{x\in[-1,1]} \abs{f(x)}\le 1$, we can find phase factors $\Psi\in\RR^{d+1}$ such that $f(x)$ is equal to the real (or imaginary) part of the top-left entry of $U(x,\Psi)$ for all $x\in[-1,1]$~\cite{GilyenSuLowEtAl2019}.
By setting $x=\frac{z+z^{-1}}{2}$ with $z=e^{\I \theta},\theta\in[0,2\pi)$, the representation in \cref{eqn:qsp-unitary} can be viewed as a  $2\times 2$  matrix Laurent polynomial, and we are interested in its values on the unit circle.

\cref{eqn:qsp-unitary} is an innovative way of encoding the information of a polynomial in terms of $2\times 2$ unitary matrices. 
It also leads to a very compact quantum algorithm for constructing a unitary matrix that encodes  the matrix polynomial $f(H)$, called \emph{quantum singular value transformation (QSVT)}~\cite{GilyenSuLowEtAl2019}. Assume that $H$ is accessed via its \emph{block encoding}
\[
W=\begin{pmatrix}
H & *\\
* & *
\end{pmatrix},
\]
where $W\in\CC^{MN\times MN}$ is a unitary matrix ($M\ge 2$), the matrix $H$ is its top-left $N\times N$ matrix subblock, and $*$ indicates matrix entries irrelevant to the current task. 
When given the phase factors $\Psi$ and the block encoding $W$, QSVT constructs a unitary $U\in \CC^{2MN\times 2MN}$ (by introducing one ancillary qubit) such that
\[
U=\begin{pmatrix}
f(H) & *\\
* & *
\end{pmatrix}.
\]
In other words, although $f(H)$ is not a unitary matrix, it can be block encoded by a unitary matrix of a larger size that can be efficiently implemented on quantum computers. This construction has found many applications, such as Hamiltonian simulation~\cite{LowChuang2017,GilyenSuLowEtAl2019},  linear system of equations~\cite{GilyenSuLowEtAl2019,LinTong2020,MartynRossiTanEtAl2021}, eigenvalue problems~\cite{LinTong2020a,DongLinTong2022}, Gibbs states preparation~\cite{GilyenSuLowEtAl2019}, Petz recovery channel~\cite{GilyenLloydMarvianEtAl2022}, benchmarking quantum systems~\cite{DongLin2021,DongWhaleyLin2021}, to name a few.
We refer interested readers to Refs.~\cite{GilyenSuLowEtAl2019,MartynRossiTanEtAl2021}. 

To implement QSVT, we need to efficiently calculate the phase factors $\Psi$ corresponding to a target polynomial of degree $d$. Many of the aforementioned applications are formulated as the evaluation of a matrix function $f(H)$, where $f:[-1,1]\to \RR$ is not a polynomial but a smooth function, which can be expressed as an infinite polynomial series (e.g., the Chebyshev polynomial series). To approximate $f(H)$, we need to first truncate the polynomial series to $f^{(d)}$ with a proper degree $d$ so that the difference between $f^{(d)}$ and $f$ is sufficiently small. 
Then for each $f^{(d)}$ we can find (at least) one set of phase factors $\Psi^{(d)}$. When $d$ is fixed, there has been significant progress in computing the phase factors in the past few years~\cite{GilyenSuLowEtAl2019,Haah2019,ChaoDingGilyenEtAl2020,DongMengWhaleyEtAl2021,Ying2022}.
The questions we would like to answer in this paper are as follows. 

\begin{enumerate}

\item As $d\to \infty$, can the phase factors $\{\Psi^{(d)}\}$ be chosen to have a well-defined limit $\Psi^{\star}$ in a properly chosen space? 

\item If $f$ is smooth, its Chebyshev coefficients decay rapidly. 
Does the tail of $\Psi^{\star}$ exhibit decay properties? If so, how is it related to the smoothness of $f$?  

\item Is there an efficient algorithm to approximately compute $\Psi^{\star}$?

\end{enumerate}
Our work is to first generalize QSP to represent smooth functions with a set of infinitely long phase factors, and we dub the resulting limit \emph{infinite quantum signal processing} (iQSP).

\subsection{Setup of the problem}

We follow the bra-ket notation widely used in quantum mechanics. Specifically, we define two ``kets'' as basis vectors of $\CC^2$, namely, 
\[
\ket{0} := \begin{pmatrix}
1\\0
\end{pmatrix}, \quad \ket{1} := 
\begin{pmatrix}
0\\ 1
\end{pmatrix}.
\]
 The ``bra'' can be viewed as row vectors induced from the corresponding ket by taking Hermitian conjugate. In the bra notation, $\bra{0} = (1,0), \bra{1} =(0,1)$. The inner product is written as $\braket{x|y} := (\ket{x}, \ket{y})$. 
Using this notation, the upper left element of $U(x,\Psi)$ in \cref{eqn:qsp-unitary} can be written as $\langle 0|U(x,\Psi)|0\rangle$. 
Direct calculation shows that the real part of $\langle 0|U(x,\Psi)|0\rangle$ can be recovered from the imaginary part by adding $\frac{\pi}{4}$ to both $\psi_0$ and $\psi_d$:
\begin{equation}\label{eq:re_im_equiv}
    \Re[\langle0|U(x,\Psi)|0\rangle]=\Im\left[\langle 0|e^{\I \frac{\pi}{4}Z}U(x,\Psi)e^{\I \frac{\pi}{4}Z}|0\rangle\right].
\end{equation}
For convenience, throughout this paper, we  focus on the imaginary part of $\langle 0|U(x,\Psi)|0\rangle$, which is denoted by $g(x,\Psi)$, i.e.,
\begin{equation}\label{eqn:qsp-polynomial}
    g(x,\Psi):=\Im[\langle 0|U(x,\Psi)|0\rangle].
\end{equation}

Due to the parity constraint, the number of degrees of freedom in a given target polynomial $f\in\RR[x]$ is only $\wt{d}:=\lceil \frac{d+1}{2} \rceil$. Therefore the phase factors $\Psi$ cannot be uniquely defined. 
To address this problem, Ref.~\cite{DongMengWhaleyEtAl2021} suggests that phase factors $\Psi:=(\psi_0,\psi_1,\ldots,\psi_d)$ can be restricted to be symmetric:
\begin{equation}
\psi_j=\psi_{d-j}, \quad \forall j = 0, 1, \cdots, d.
\label{eqn:symmetry_phase}
\end{equation}
Without loss of generality, we define the \emph{reduced phase factors} as follows,
\begin{equation}
    \Phi=(\phi_0,\phi_1,\ldots,\phi_{\wt{d}-1}):=\begin{cases}
    (\frac{1}{2}\psi_{\wt{d}-1}, \psi_{\wt{d}},\cdots,\psi_{d}), & d \text{ is even},\\
    (\psi_{\wt{d}}, \psi_{\wt{d}+1},\cdots,\psi_{d}), & d \text{ is odd}.\\
    \end{cases}
\end{equation}
The number of reduced phase factors is equal to $\wt{d}=\lceil \frac{d+1}{2} \rceil$, and matches the number of degrees of freedom in $f$.
With some abuse of notation, we identify $U(x,\Phi)$ with $U(x,\Psi)$ and
$g(x,\Phi)$ with $g(x,\Psi)$, and $\Phi$ is always referred to as the \textit{reduced phase factors} of a full set of phase factors $\Psi$. 
For a given target polynomial, the existence of the symmetric phase factors is proved in~\cite[Theorem 1]{WangDongLin2021}, but the choice is still not unique. However, near the trivial phase factors $\Phi=(0,\ldots,0)$, there exists a unique and consistent choice of symmetric phase factors called the maximal solution~\cite{WangDongLin2021}.

Let $\ell^1$ denote the set of all infinite dimensional vectors with finite 1-norm:
\begin{equation}
    \ell^1:=\{v=(v_0,v_1,\cdots): \norm{v}_1<\infty\}, \quad \norm{v}_1:=\sum_{k=0}^{\infty}\abs{ v_k},\quad v=(v_0,v_1,\cdots).
\end{equation}
The vector space $\ell^1$ is complete, i.e., every Cauchy sequence of points in $\ell^1$ has a limit that is also in $\ell^1$. 
Let $\RR^{\infty}$ be the set of all infinite dimensional vectors with only a finite number of nonzero elements.

\begin{definition}[Target function]
\label{def:target_function}
A target function $f:\RR\to \RR$ is an infinite Chebyshev polynomial series with a definite parity
\begin{equation}
f(x)=\begin{cases}
\sum_{j=0}^{\infty} c_j T_{2j}(x),& f \mbox{ is even},\\
\sum_{j=0}^{\infty} c_j T_{2j+1}(x), & f \mbox{ is odd},
\end{cases}
\label{eqn:f_expand}
\end{equation}
The coefficient vector $c=(c_0,c_1,\ldots)\in\ell^1$, and $f$ satisfies the norm constraint
\begin{equation}
\norm{f}_{\infty}=\max_{x\in[-1,1]} \abs{f(x)}\le 1.
\end{equation}
In other words, the set of even target functions is
\begin{equation}
    S_e = \left\{f:[-1,1]\to [-1,1] : f(x) = \sum_{j=0}^{\infty} c_j T_{2j}(x),\quad\sum_{j=0}^{\infty}\abs{c_j} < \infty\right\}, 
\end{equation}
and the set of odd target functions is
\begin{equation}
    S_o = \left\{f:[-1,1]\to [-1,1] : f(x) = \sum_{j=0}^{\infty} c_j T_{2j+1}(x),\quad\sum_{j=0}^{\infty}\abs{c_j} < \infty\right\}. 
\end{equation}

\end{definition}

If we truncate the Chebyshev coefficients to be $c^{(\wt{d})}=(c_0,c_1,\ldots,c_{\wt{d}-1},0,\ldots)\in\RR^{\infty}$, the corresponding Chebyshev polynomial $f^{(d)}$ is of degree $d$ (recall that $\wt{d}=\lceil \frac{d+1}{2} \rceil$ and hence $d$ is determined by $\wt{d}$ and the parity of the function).
Furthermore, $c\in\ell^{1}$ implies that
$\lim_{d\to \infty} \norm{f^{(d)}-f}_{\infty}=0$. 
Throughout the paper, $f^{(d)}$ will be referred to as a \textit{target polynomial} approximating the target function $f$.

In order to compare phase factors of different lengths, an important observation is that if we pad $\Phi=(\phi_0,\phi_1,\ldots,\phi_{\wt{d}})$ with an arbitrary number of $0$'s at the right end and obtain $\wt{\Phi}=(\phi_0,\phi_1,\ldots,\phi_{\wt{d}},0,\ldots,0)$, we have $g(x,\Phi)=g(x,\wt{\Phi})$ (see \cref{lm:pad_phi_0}).  Therefore $g(x,\cdot)$ is a well defined mapping in $\RR^{\infty}$, and we can identify $\Phi$ with $\wt{\Phi}$.
Let $\mc{F}$ be the linear mapping from a target polynomial to its Chebyshev-coefficient vector $c\in\RR^{\infty}$ as defined in \cref{eqn:f_expand}. This induces a mapping 
\begin{equation}\label{eq:F_dfn}
F:\RR^{\infty}\to \RR^{\infty}, \quad F(\Phi):= \mathcal{F}(g(x,\Phi)),
\end{equation}
which maps the reduced phase factors $\Phi\in\RR^{\infty}$ to the Chebyshev coefficients of $g(x,\Phi)$. 

Note that $\RR^{\infty}$ is dense in $\ell^1$, i.e., any point in $\ell^1$ is either a point in $\RR^{\infty}$ or a limit point of $\RR^{\infty}$. By exploiting nice properties that $F$ and its Jacobian matrix are Lipschitz continuous, we can define $\overline{F}: \ell^1\to \ell^1$ to be the extension of $F$, such that $\overline{F}(\Phi)$ agrees with $F(\Phi)$ for any $\Phi\in\RR^{\infty}$.
Then the problem of infinite quantum signal processing asks whether the inverse of the mapping $\overline{F}$ exists.
\begin{problem}[Infinite quantum signal processing]
For a target function in \cref{def:target_function} given by its Chebyshev coefficients $c\in\ell^1$, does there exist $\Phi^{\star}\in \ell^1$ such that $\overline{F}(\Phi^{\star})=c$? 
\label{prob:iqsp}
\end{problem}

\subsection{Main results}
\begin{theorem}[Invertibility of $\overline{F}$]
\label{thm:main_l1}
There exists a universal constant $r_c\approx 0.902$, so that  
        $\overline{F}$ has an inverse map $\overline{F}^{-1}: B(0,r_c)\subset \ell^1\to \ell^1$, where $B(a,r):=\{v\in \ell^1:\norm{v-a}_1<r\}$. 
\end{theorem}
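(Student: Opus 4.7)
The plan is to prove the theorem by a quantitative inverse function theorem for $\overline{F}$ at the origin, whose three ingredients are a linearization that is a universal bijection of $\ell^1$, a dimension-independent quadratic bound on the nonlinear remainder coming from the Banach-algebra property of the Wiener algebra, and a Newton-type contraction whose radius matches $r_c$.

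For the linearization, I would verify $\overline{F}(0) = 0$ by observing that $U(x,0) = W(x)^{d}$ has $(0,0)$-entry $T_d(x)$, a real polynomial, so $g(x,0) \equiv 0$. Differentiating once and using $Z|0\rangle = |0\rangle$ together with $W(x)^m = \cos(m\arccos x)\, I + i\sin(m\arccos x)\, X$ produces
\begin{equation*}
\Im\,\langle 0|\, W(x)^{d-j}\, (iZ)\, W(x)^{j}\, |0\rangle = T_{d-2j}(x).
\end{equation*}
After the symmetry identification $\psi_j \leftrightarrow \psi_{d-j}$ defining the reduced phase factors, this shows $D\overline{F}(0)$ is a bounded linear bijection of $\ell^1$ with inverse of operator norm $\tfrac12$, and its structure is universal in $d$. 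This universal factor $2$ is the only quantitative input the linearization needs to provide.

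The analytic workhorse for the nonlinear remainder is the Banach-algebra property of the Wiener algebra of Laurent polynomials in $z = e^{i\arccos x}$ with absolutely summable coefficients. In the variable $z$, every entry of $W(x)$ is a Laurent monomial of unit $\ell^1$-norm, so expanding each factor $e^{i\phi_j Z} = \sum_{n\ge 0} (i\phi_j Z)^n/n!$ and multiplying all QSP factors produces a multivariate power series whose Chebyshev coefficients are controlled termwise by the Banach-algebra inequality. Reading off the imaginary part of the $(0,0)$-entry yields a majorant
\begin{equation*}
\norm{\overline{F}(\Phi) - D\overline{F}(0)\Phi}_1 \le h\!\left(\norm{\Phi}_1\right),
\end{equation*}
where $h$ is an explicit real-analytic function with $h(t) = O(t^2)$ near $t = 0$ and a positive radius of convergence. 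Because each $W(x)$ is absorbed by the algebra inequality at unit cost, the bound is uniform in $d$, and a companion estimate on $D\overline{F}(\Phi) - D\overline{F}(0)$ follows by differentiating $h$ termwise.

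For $c \in B(0,r_c) \subset \ell^1$ I would close the argument by analyzing the Newton iteration $T_c(\Phi) := \Phi - D\overline{F}(0)^{-1}(\overline{F}(\Phi) - c)$, whose fixed points coincide with the preimages of $c$ under $\overline{F}$. The estimates above turn $T_c$ into a strict contraction on a closed $\ell^1$-ball around $D\overline{F}(0)^{-1} c$; the Banach fixed-point theorem then provides a unique $\Phi^\star \in \ell^1$ with $\overline{F}(\Phi^\star) = c$, and the uniform contraction rate delivers continuous dependence on $c$, defining $\overline{F}^{-1}$ on all of $B(0,r_c)$. The constant $r_c \approx 0.902$ is the transcendental radius extracted from $h$ as the largest value for which this Newton map remains a strict self-contraction on some invariant ball. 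The hard part of the program is producing the majorant $h$ with a constant genuinely uniform in $d$: a naive multinomial expansion of the $d+1$ QSP factors introduces combinatorial prefactors that diverge, and only by exploiting the unit $\ell^1$-norm of each $W(x)$ factor together with the symmetry reduction that halves the number of free variables does the bound collapse to a single-variable analytic function of $\norm{\Phi}_1$ with a $d$-independent radius; once this is secured, the passage from $\RR^\infty$ to $\ell^1$ is automatic by density and completeness.
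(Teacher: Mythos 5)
Your proposal is essentially correct and reaches the right radius $r_c\approx 0.902$, but it takes a genuinely different route for the existence step than the paper does. For the linearization $D\overline{F}(0)=2\mathbb{I}$ and for the uniform (dimension-independent) majorant on the remainder, you are doing what the paper does: expanding the $e^{\I\psi_j Z}$ factors, exploiting that each Chebyshev polynomial $T_k$ has unit $\ell^1$ Chebyshev norm, and extending from $\RR^\infty$ to $\ell^1$ by density and Lipschitz continuity. The divergence is in how the preimage is produced. You propose a direct Banach fixed-point argument for the Newton map $T_c(\Phi)=\Phi-\tfrac12(\overline F(\Phi)-c)$ on a ball in $\ell^1$; the paper instead proves invertibility of $F$ on $\RR^\infty$ (\cref{lm:inverse}) by a continuation (path-lifting) argument — tracking preimages of $sZ$ as $s$ increases, showing they remain inside $B(0,r_\Phi)$ and form a Cauchy family, and invoking the inverse mapping theorem at the limit — and only later, in \cref{sec:convergence}, runs a fixed-point analysis for the \emph{algorithm}, but there obtains only the smaller radius $\wt r_c\approx 0.861$ because the iterate has to be tracked from the specific start $\Phi^0=0$. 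Your proposal is potentially cleaner: a Banach fixed-point argument on a closed ball of radius $\rho<r_\Phi$ in $\ell^1$, using the sharp integral bound $\norm{\Phi-\tfrac12 F(\Phi)}_1\le \tfrac12\sinh(2\norm{\Phi}_1)-\norm{\Phi}_1$ rather than the crude Lipschitz bound, does give the self-map condition $\norm{c}_1\le H(\rho)$ and hence recovers the full radius $r_c=\sup_{\rho<r_\Phi}H(\rho)$ in one shot, bypassing the continuation argument. Two cautionary remarks. First, the Wiener-algebra framing needs care at the matrix level: the induced $1$-norm of the $2\times2$ matrix $W(x)$ of Laurent monomials is $2$, not $1$, so a naive submultiplicative bound gives $2^d$ and diverges; the paper avoids this by the commutation $W(x)Z=ZW(x)^{-1}$, which collapses each monomial to a single power $W^k$ whose $(0,0)$-entry $T_k$ has unit $\ell^1$ norm — this is the crucial step your "only by exploiting the unit $\ell^1$-norm of each $W(x)$ factor" gestures at but does not actually carry out. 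Second, the symmetry reduction isn't what produces the dimension-independent bound (\cref{lm:upper_bound_1_norm} already holds for the full, non-symmetric $\Psi$); symmetry is what makes the problem square so that $D\overline F(0)$ is a bijection rather than a half-dimensional projection.
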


\cref{thm:main_l1} provides a positive answer to \cref{prob:iqsp} as well as to the first question raised in \cref{sec:background}, when the 1-norm of the Chebyshev coefficients is upper bounded by a constant. 
Note that for a given target function $f$, we can always multiply it by a constant $C$, so that the $Cf$ satisfies the condition of \cref{thm:main_l1}. The main technical tools are a series of vector $1$-norm estimates of $F$, and  matrix $1$-norm estimates of the Jacobian matrix $DF$. These estimates do not explicitly depend on the length of phase factors, and can therefore be extended to $\overline{F}$. 
A more detailed statement of \cref{thm:main_l1} is \cref{thm:inverse_l1}, which will be presented in \cref{sec:extension}.

Since $\Phi^{\star}=(\phi_0,\phi_1,\ldots)\in\ell^1$, the tail of $\Phi^{\star}$ must exhibit decay properties, i.e.,  $\lim_{n\to \infty} \sum_{k>n}\abs{\phi_k}=0$. \cref{fig:numerical-pf-decay} in \cref{sec:numerics} shows that the tail decay of $\Phi^{\star}$ closely matches that of the Chebyshev coefficients $c$. 
The duality between the smoothness of a function and the decay of its Fourier / Chebyshev coefficients is well studied (see e.g. \cite[Chapter 7]{Trefethen2019Approximation}). However, it is surprising that the tail decay of the reduced phase factors can be directly related to the smoothness of the target function.
Such a behavior was first numerically observed in Ref. \cite{DongMengWhaleyEtAl2021}, in which an explanation of the phenomenon was also given in the perturbative regime. 
Using the tools developed in proving \cref{thm:main_l1}, we provide a refined and non-perturbative analysis of the tail decay in  \cref{thm:main_decay}.

 \begin{theorem}[Decay properties of reduced phase factors]\label{thm:main_decay}
        Given a target function $f$ with $\norm{c}_1< r_c$, and $\Phi^{\star} := \overline{F}^{-1}(c)=(\phi_0,\phi_1,\ldots)\in \ell^1$, then there exist constants $C, C'$ such that for any $n$,  
        \begin{equation}\label{eq:ineq-weak-decay}
                C'\sum_{k>n} \abs{c_k} \le \sum_{k>n}\abs{\phi_k}\le C\sum_{k>n} \abs{c_k}.
        \end{equation}
\end{theorem}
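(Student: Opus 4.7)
The plan is to combine a Lipschitz estimate for $\overline{F}^{-1}$ on $B(0,r_c)$ with a truncation argument that exploits the finite-support property of phase factors representing finite-degree polynomials. Concretely, I would compare the true reduced phase factors $\Phi^\star$ against the reduced phase factors corresponding to the truncated Chebyshev series $c^{(n)}:=(c_0,\dots,c_n,0,0,\dots)$ and bound the difference in $\ell^1$.

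First, the matrix $1$-norm estimates on $DF$ established in the course of proving \cref{thm:main_l1} should, via a standard mean-value / inverse-function argument along line segments inside the convex ball $B(0,r_c)$, yield a uniform Lipschitz constant $L$ with
\begin{equation*}
\|\overline{F}^{-1}(c)-\overline{F}^{-1}(c')\|_1 \;\le\; L\,\|c-c'\|_1, \qquad c,c'\in B(0,r_c).
\end{equation*}
The only ingredients beyond what is already needed for \cref{thm:main_l1} are convexity of the ball and a uniform upper bound on $\|(DF)^{-1}\|$ in operator $1$-norm, both of which are in hand.

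Second, for each $n$, $\|c^{(n)}\|_1\le\|c\|_1<r_c$, so $\Phi^{(n)}:=\overline{F}^{-1}(c^{(n)})\in\ell^1$ is well defined. The crucial observation is that $c^{(n)}$ is the Chebyshev coefficient vector of a \emph{finite-degree} polynomial: by \cite[Theorem 1]{WangDongLin2021} together with the maximal-solution selection of \cite{WangDongLin2021}, there is a finite-support reduced phase factor vector $\tilde{\Phi}\in\RR^\infty$ supported in indices $\{0,1,\dots,n\}$ with $F(\tilde{\Phi})=c^{(n)}$ (the index bookkeeping matches in both parity cases since $\tilde d = n+1$). Since $\tilde{\Phi}\in\RR^\infty\subset\ell^1$ and $\overline{F}^{-1}$ is single-valued on $B(0,r_c)$ by \cref{thm:main_l1}, we must have $\Phi^{(n)}=\tilde{\Phi}$. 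In particular, $\phi^{(n)}_k=0$ for all $k>n$. Letting $Q_n$ denote the projection onto coordinates with index greater than $n$, $Q_n\Phi^{(n)}=0$ and therefore
\begin{equation*}
\sum_{k>n}\abs{\phi_k^\star}
=\|Q_n\Phi^\star\|_1
=\|Q_n(\Phi^\star-\Phi^{(n)})\|_1
\le\|\Phi^\star-\Phi^{(n)}\|_1
\le L\,\|c-c^{(n)}\|_1
= L\sum_{k>n}\abs{c_k},
\end{equation*}
yielding the bound with $C=L$.

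The main obstacle I anticipate is the identification $\Phi^{(n)}=\tilde{\Phi}$, which requires that the $\ell^1$-extension $\overline{F}$ agree with $F$ on $\RR^\infty$ and that the inverse furnished by \cref{thm:main_l1} be globally single-valued on $B(0,r_c)$ rather than a purely local inverse near the origin; without this, a finite-support preimage of $c^{(n)}$ and the $\ell^1$-preimage $\Phi^{(n)}$ could in principle differ. A secondary but routine technical point is upgrading the pointwise Jacobian bounds underlying \cref{thm:main_l1} to the uniform Lipschitz estimate above, which should follow by integrating $(DF)^{-1}$ along segments in $B(0,r_c)$.
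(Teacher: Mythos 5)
Your overall strategy matches the paper's: compare $\Phi^\star$ with $\overline{F}^{-1}(c^{(n)})$, use the finite support of the latter, and apply a Lipschitz-type bound for $\overline{F}^{-1}$. The Lipschitz constant you want is exactly what the paper supplies in \cref{lm:equiv_dist_l1} (the lower bound $\wt{C}(\theta)\norm{\Phi^{(1)}-\Phi^{(2)}}_1 \le \norm{c^{(1)}-c^{(2)}}_1$), so that portion of your argument is in line with the paper.

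Where your proof has a genuine gap is the identification $\Phi^{(n)}=\tilde\Phi$. You obtain $\tilde\Phi$ as a finite-support reduced phase factor representing the truncated polynomial via \cite[Theorem 1]{WangDongLin2021}, and argue that because $\overline{F}^{-1}$ is single-valued on $B(0,r_c)$, the two preimages must coincide. But single-valuedness of the inverse furnished by \cref{thm:main_l1} only says there is a unique preimage of $c^{(n)}$ \emph{within the ball} $B(0,r_\Phi)$. It does not rule out preimages outside that ball, and phase factors are generically non-unique across the full domain. The $\tilde\Phi$ from \cite{WangDongLin2021} is uniquely characterized in a domain $D_d$ and by a maximal-solution convention that are formulated in terms of $\norm{f}_\infty$, not the $\ell^1$ norm of $\Phi$, so you cannot conclude a priori that $\norm{\tilde\Phi}_1 < r_\Phi$. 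Without that, your equality $\Phi^{(n)}=\tilde\Phi$ could fail, and the desired vanishing $\phi^{(n)}_k=0$ for $k>n$ doesn't follow. The paper sidesteps this by proving directly, as the first part of the proof of \cref{lm:inverse}, that the local inverse preserves effective length: for any $\Phi$ with $\norm{\Phi}_1\le r_\Phi$, the effective length of $F(\Phi)$ equals that of $\Phi$ (the argument goes through the leading Chebyshev coefficient and shows that a strict length drop would force some $\phi_i=\frac{\pi}{4}j$, contradicting $\norm{\Phi}_1<r_\Phi$). This is the missing ingredient: it gives the finite support of $\overline{F}^{-1}(c^{(n)})$ intrinsically, without any reference to an external existence theorem for phase factors. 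If you replace your appeal to \cite{WangDongLin2021} with that effective-length-preservation lemma, the rest of your argument (the projection $Q_n$ and the Lipschitz bound) goes through exactly as the paper's proof.
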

The proof is given in \cref{sec:decay} with an explicit characterization of constants $C,C'$.
Assume the target function $f$ is of $C^{\alpha}$ smoothness for some $\alpha > 0$, then the Chebyshev coefficients decay algebraically in the sense of $\sum_{k>n} \abs{c_k} = \Or\left(n^{-\alpha}\right)$. Then, it induces a decay of the corresponding reduced phase factors, namely, $\sum_{k>n}\abs{\phi_k} = \Or\left(n^{-\alpha }\right)$.  If $f$ is of $C^{\infty}$ or $C^{\omega}$ smoothness, then the tail of its Chebyshev-coefficient vector decays super-algebraically or exponentially respectively, and so does the tail of the reduced factors. We summarize this in the following corollary.
\begin{corollary}
    If $f$ is of $C^{\alpha}$ smoothness for some $\alpha > 0$, then the tail of the corresponding reduced phase factors decay algebraically, i.e., $\sum_{k>n}\abs{\phi_k} = \Or\left(n^{-\alpha }\right)$. 
    Furthermore, if $f$ is of $C^{\infty}$ or $C^{\omega}$ smoothness, then the tail of the corresponding reduced phase factors decays super-algebraically or exponentially respectively.
\end{corollary}
These results are also verified numerically in \cref{fig:numerical-pf-decay}.
These results provide a positive answer to the second question raised in \cref{sec:background}.

\cref{thm:main_l1} also has algorithmic implications. 
It has been empirically observed that a quasi-Newton optimization based algorithm is highly effective for finding the phase factors~\cite{DongMengWhaleyEtAl2021}. However, the theoretical justification of the optimization based algorithm has only been shown if the target function satisfies $\norm{f}_{\infty}\leq C_f \wt{d}^{-1}$, where $C_f\approx 0.028$ is a universal constant~\cite[Corollary 7]{WangDongLin2021}. Hence as $\wt{d}$ increases, existing theoretical results fail to predict the effectiveness of the algorithm, even if the target function is a fixed polynomial of finite degree (in this case, we pad the Chebyshev coefficients with zeros to increase $\wt{d}$).

Inspired by our analysis of the Jacobian map $DF$, we propose a very simple iterative algorithm to find phase factors for a given target polynomial $f$ (\cref{alg:iterative-qsp}). This algorithm can be viewed as finding the fixed point of the mapping $G(x):=x-\frac{1}{2}(F\left(x\right)-c)$ by means of a fixed point iteration $\Phi^{t+1}=G(\Phi^{t})$. 
This can also be viewed as an inexact Newton algorithm~\cite[Chapter 11]{NocedalWright1999}, as the inverse of the Jacobian matrix at $0\in\ell^{1}$ satisfies $[DF(0)]^{-1}=\frac12 \mathbb{I}$ (\cref{lm:DF0}).
\begin{algorithm}[htbp]
\caption{Fixed-point iteration algorithm for solving reduced phase factors}
\label{alg:iterative-qsp}
\begin{algorithmic}
\STATE{\textbf{Input:}  Chebyshev-coefficient vector $c$ of a target polynomial, and stopping criteria.}
\STATE{Initiate $\Phi^{0} = 0$, $t=0$;}
\WHILE{stopping criterion is not satisfied}
        \STATE{Update $\Phi^{t+1}=\Phi^{t}-\frac{1}{2}\left( F\left(\Phi^{t}\right)-c\right)$;}
        \STATE{Set $t=t+1$;
                }
\ENDWHILE
\STATE{\textbf{Output:} Reduced phase factors $\Phi^{t}$}.
\end{algorithmic}
\end{algorithm}

For a given target function, the Chebyshev-coefficient vector can be efficiently evaluated using the fast Fourier transform (FFT).
The fixed-point iteration algorithm (\cref{alg:iterative-qsp}) is the simplest algorithm thus far to evaluate phase factors. This algorithm provably converges when $\norm{c}_1$ is upper bounded by a constant. 

\begin{theorem}[Convergence of the fixed-point iteration algorithm]
There exists a universal constant $\wt{r}_c\approx 0.861$ so that when $\norm{c}_1 \le \wt{r}_c$, 

(i) \cref{alg:iterative-qsp} converges $Q$-linearly to $\Phi^{\star}=\overline{F}^{-1}(c)$, i.e., there exists a constant $C$ and the error satisfies
\begin{equation}
\norm{\Phi^{t}-\Phi^{\star}}_1 \le C \wt{\gamma}^{t}, \quad \wt{\gamma} \approx 0.8189 ,\quad t\ge 1.
\end{equation}
 
(ii) The overall time complexity is $\mathcal{O}(d^2\log\frac{1}{\epsilon})$, where $d$ is the degree of target polynomial and $\epsilon$ is the desired precision.  
 \label{thm:main_converge}
 \end{theorem}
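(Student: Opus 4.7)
The plan is to view \cref{alg:iterative-qsp} as a fixed-point iteration $\Phi^{t+1}=G(\Phi^{t})$ for the map $G(\Phi) := \Phi - \tfrac{1}{2}(F(\Phi)-c)$, whose fixed points in $B(0,r_c)$ coincide with $\Phi^{\star}=\overline{F}^{-1}(c)$ by \cref{thm:main_l1}, and to establish convergence through the Banach contraction principle in $\ell^1$. The crucial observation is that the Fr\'echet derivative is $DG(\Phi) = \mathbb{I} - \tfrac{1}{2} DF(\Phi)$, and by \cref{lm:DF0} we have $DF(0) = 2\mathbb{I}$, so $DG(0) = 0$. Thus $G$ is strongly contractive near the origin, and the main task is to quantify how large a neighborhood this remains true on.

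First I would fix a closed ball $B = \{\Phi \in \ell^1 : \|\Phi\|_1 \le R\}$ and, using the matrix $1$-norm Jacobian estimates already developed in proving \cref{thm:main_l1}, show that $\sup_{\Phi \in B}\|\mathbb{I} - \tfrac{1}{2} DF(\Phi)\|_{1\to 1} \le \wt{\gamma}$ for some explicit $\wt{\gamma} < 1$. The fundamental theorem of calculus along the segment from $\Phi'$ to $\Phi$ then yields
\begin{equation}
G(\Phi) - G(\Phi') = \int_0^1 \left[\mathbb{I} - \tfrac{1}{2} DF\bigl(\Phi' + s(\Phi-\Phi')\bigr)\right] \mathrm{d}s \cdot (\Phi-\Phi'),
\end{equation}
and hence $\|G(\Phi) - G(\Phi')\|_1 \le \wt{\gamma}\|\Phi - \Phi'\|_1$ whenever the segment lies in $B$. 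For invariance, I would choose $R$ compatibly with $\wt{r}_c$ so that $\Phi^{\star} \in B$; since $\Phi^{\star}$ is a fixed point of $G$, contraction implies that $G$ maps the sub-ball $\{\Phi : \|\Phi - \Phi^{\star}\|_1 \le \|\Phi^{\star}\|_1\} \subset B$ (which contains $\Phi^0 = 0$) into itself. Iterating gives $\|\Phi^t - \Phi^{\star}\|_1 \le \wt{\gamma}^{t} \|\Phi^{\star}\|_1$, and absorbing $\|\Phi^{\star}\|_1$ into the prefactor $C$ proves (i). The explicit values $\wt{r}_c \approx 0.861$ and $\wt{\gamma} \approx 0.8189$ arise from optimizing $R$ against the derivative bound.

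For part (ii), each iteration requires one evaluation of $F(\Phi^t)$, which I would implement by sampling $g(x,\Phi^t)$ at $O(d)$ Chebyshev nodes; each sample costs $O(d)$ to multiply the $d+1$ size-$2$ matrices in \cref{eqn:qsp-unitary}, followed by an $O(d\log d)$ Chebyshev transform, giving $O(d^2)$ per iteration. The $Q$-linear rate in (i) requires $O(\log(1/\epsilon))$ iterations to reach accuracy $\epsilon$, producing the overall $O(d^2 \log(1/\epsilon))$ bound.

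The main obstacle is the non-perturbative matrix $1$-norm estimate $\|\mathbb{I} - \tfrac{1}{2} DF(\Phi)\|_{1\to 1} \le \wt{\gamma} < 1$ on the largest possible ball: the identity $DG(0) = 0$ is merely pointwise, whereas the argument requires uniform control across $B$. The technical inputs should already be contained in the Jacobian estimates underlying \cref{thm:main_l1}, but the remaining work is a careful bookkeeping of constants, and in particular striking the right balance between the size of $B$ (which must contain $\Phi^{\star}$) and the ensuing rate $\wt{\gamma}$, to extract the explicit $\wt{r}_c \approx 0.861$ and $\wt{\gamma} \approx 0.8189$.
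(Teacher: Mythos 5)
Your overall plan matches the paper's: view \cref{alg:iterative-qsp} as a fixed-point iteration for $G(\Phi)=\Phi-\tfrac12(F(\Phi)-c)$, use $DG(0)=0$ and the Lipschitz/Jacobian estimates from the proof of \cref{thm:main_l1} (via \cref{lm:DF-2I_estimate}, giving $\norm{DG(\Phi)}_1\le\tfrac12 h(\norm{\Phi}_1)$), and conclude $Q$-linear convergence by a Banach contraction argument. Part (ii) is also exactly as in the paper.

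The gap is in the invariance step, and it is quantitatively significant. You insist on a $G$-invariant sub-ball $\{\Phi:\norm{\Phi-\Phi^{\star}}_1\le\norm{\Phi^{\star}}_1\}$ that \emph{contains} $\Phi^{0}=0$. Points in this ball can have $\norm{\Phi}_1$ up to $2\norm{\Phi^{\star}}_1$, so the contraction bound $\tfrac12 h(\norm{\Phi}_1)<1$ needs $2\norm{\Phi^{\star}}_1<r_{\Phi}$, i.e.\ $\norm{\Phi^{\star}}_1<r_{\Phi}/2\approx 0.329$, which (via \cref{eq:F_inverse}) only yields $\norm{c}_1<H(r_{\Phi}/2)\approx 0.607$ — well short of the claimed $\wt r_c\approx 0.861$. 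The paper's proof (\cref{lm:cm,lm:contraction}) sidesteps this by observing that the first iterate $\Phi^{1}=\tfrac12 c=\tfrac12 F(\Phi^{\star})$ is already far closer to $\Phi^{\star}$ than $\Phi^{0}=0$ is: using the fundamental-theorem-of-calculus bound along the segment from $0$ to $\Phi^\star$ one gets $\norm{\Phi^{1}-\Phi^{\star}}_1=\norm{\tfrac12 F(\Phi^{\star})-\Phi^{\star}}_1\le\tfrac12\sinh(2\norm{\Phi^{\star}}_1)-\norm{\Phi^{\star}}_1$, which is much smaller than $\norm{\Phi^{\star}}_1$. It therefore suffices to prove contraction on the smaller ball $B\left(\Phi^{\star},\tfrac12\sinh(2\norm{\Phi^{\star}}_1)-\norm{\Phi^{\star}}_1\right)$, inside which $\norm{\Phi}_1<\tfrac12\sinh(2\norm{\Phi^{\star}}_1)$, and requiring this to be $\le r_{\Phi}$ gives $\norm{\Phi^{\star}}_1\le\tfrac12\arcsinh(\arccosh(2))=\wt r_{\Phi}\approx 0.544$ and hence $\norm{c}_1\le\wt r_c\approx 0.861$. (A further minor technicality handled in the paper: $\Phi^{1}$ sits exactly on the boundary of that open ball, so one needs one extra integral estimate to see that $\Phi^{2}$ lands strictly inside before the contraction property can be iterated, and the explicit rate $\wt\gamma\approx 0.8189$ comes out of that finer line integral rather than from $\sup_{B}\norm{DG}_1$.) So your proof as written is correct in structure but cannot recover the stated constants; the essential missing idea is to anchor the contraction ball at $\Phi^{1}=\tfrac12 c$ rather than at $\Phi^{0}=0$.
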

A more accurate characterization about the region where \cref{alg:iterative-qsp} converges and the convergence rate is presented in \cref{sec:convergence}. 
In \cref{sec:numerics}, numerical experiments demonstrate that \cref{alg:iterative-qsp} is an efficient algorithm, and we observe that its convergence radius can be much larger than the theoretical prediction. These results provide a positive answer to the third question raised in \cref{sec:background}. We implement \cref{alg:iterative-qsp} with more examples as part of QSPPACK, an open-source package for finding phase factors~\footnote{The examples are available on the website \url{https://qsppack.gitbook.io/qsppack/} and the codes are open-sourced in \url{https://github.com/qsppack/QSPPACK}.}.

\subsection{Related works}

The original QSP paper~\cite{LowChuang2017} demonstrated the existence of phase factors without providing a constructive algorithm, and finding phase factors was considered to be a main bottleneck of the approach~\cite{ChildsMaslovNamEtAl2018}.
In the past few years, there has been significant progresses in computing the phase factors. 
Refs.~\cite{GilyenSuLowEtAl2019,Haah2019} developed the factorization based method. 
For a given target (real) polynomial $f^{(d)}$, one needs to find a complementary (real) polynomial satisfying the requirement of \cite[Corollary 5]{GilyenSuLowEtAl2019} (also see \cref{thm:qsp}). 
This step is based on finding roots of high degree polynomials to high precision, and this is not a numerically stable procedure. Specifically, the algorithm requires $\Or(d\log(d/\epsilon))$ bits of precision~\cite{Haah2019}. 
There have been two recent improvements of the factorization based method, based on the capitalization method~\cite{ChaoDingGilyenEtAl2020}, and the Prony method~\cite{Ying2022}, respectively. 
Although the two methods differ significantly, empirical results indicate that both methods are numerically stable, and are applicable to polynomials of large degrees. 
Furthermore, both methods take advantage of that the mapping from the Chebyshev coefficients $c$ to the full phase factors $\Psi$ \textit{is not necessarily} well-defined in $\ell^1$. For instance, a key step in \cite{ChaoDingGilyenEtAl2020} is to introduce a very small perturbation to the high order Chebyshev coefficients, which can nonetheless induce a large change in the phase factors $\Psi$. As a result, the question raised in \cref{prob:iqsp} cannot be well defined in such factorization based methods, and the tail of the phase factors $\Psi$ does not exhibit decay properties.

The optimization based method developed in Ref.~\cite{DongMengWhaleyEtAl2021} uses a different approach, and computes the symmetric phase factors without explicitly constructing the complementary polynomials. Empirical results show that the quasi-Newton optimization method in \cite{DongMengWhaleyEtAl2021} is numerically stable and can be applicable to polynomials of large degrees. 
Ref.~\cite{WangDongLin2021} analyzes the symmetric QSP, and proves that starting from a fixed initial guess of the reduced phase factors $\Phi=(0,\cdots,0)$, a simpler optimization method (the projected gradient method) converges linearly to a unique maximal solution, when the target polynomial satisfies $\norm{f^{(d)}}_{\infty}\le C d^{-1}$ for some constant $C$.
The fixed point iteration method in \cref{alg:iterative-qsp} is the simplest algorithm thus far for finding phase factors, and is the first provably numerically stable algorithm in the limit $d\to \infty$.

\section{Preliminaries on quantum signal processing}

The set  $[n]:=\{0,1,\cdots, n-1\}$ is referred to as the index set generated by a positive integer $n$. The row and column indices of a $n$-by-$n$ matrix run from $0$ to $n-1$, namely in the index set $[n]$. For a matrix $A\in\CC^{m\times n}$, the transpose, Hermitian conjugate and complex conjugate are denoted by $A^{\top}$, $A^{\dag}$, $A^*$, respectively. The same notations are also used for the operations on a vector.

For a matrix $M$ of infinite dimension, we equip it with the induced 1-norm, i.e.,
\begin{equation*}
    \norm{M}_1:=\max_{v\in \ell^1, \norm{v}_1=1}\norm{Mv}_1.
\end{equation*}

For any function $f$ over $[-1,1]$, we define its infinity norm as $\norm{f}_\infty := \max_{-1\le x \le 1} \abs{f(x)}$. The key to quantum signal processing (QSP) is a representation theorem for certain matrices  in $\text{SU}(2)$:
\begin{theorem}[\textbf{Quantum signal processing} {\cite[Theorem 4]{GilyenSuLowEtAl2019}}]\label{thm:qsp}
    For any $P, Q \in \mathbb{C}[x]$ and a positive integer $d$ such that
    \begin{enumerate}[label=(\arabic*)]
    \item $\deg(P) \leq d, \deg(Q) \leq d-1$,
    \item $P$ has parity $(d\mod2)$ and $Q$ has parity $(d-1 \mod 2)$,
    \item (Normalization condition) $|P(x)|^2 + (1-x^2) |Q(x)|^2 = 1, \forall x \in [-1, 1]$.
\end{enumerate}
Then, there exists a set of phase factors $\Psi := (\psi_0, \cdots, \psi_d) \in [-\pi, \pi)^{d+1}$ such that
\begin{equation}
\label{eq:qsp-gslw}
        U(x, \Psi) = e^{\I \psi_0 Z} \prod_{j=1}^{d} \left[ W(x) e^{\I \psi_j Z} \right] = \left( \begin{array}{cc}
        P(x) & \I Q(x) \sqrt{1 - x^2}\\
        \I Q^*(x) \sqrt{1 - x^2} & P^*(x)
        \end{array} \right)
\end{equation}
where 
\begin{displaymath}
W(x) = e^{\I \arccos(x) X}=\left(\begin{array}{cc}{x} & {\I \sqrt{1-x^{2}}} \\ {\I \sqrt{1-x^{2}}} & {x}\end{array}\right).
\end{displaymath}
\end{theorem}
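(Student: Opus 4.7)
My plan is to argue by induction on $d$, peeling off one layer from the right at each step. The base case $d=0$ is immediate: conditions (1) and (2) force $Q \equiv 0$, and then (3) reduces to $|P|^2 \equiv 1$, so $P$ is a unimodular constant, say $P = e^{\I \psi_0}$ with $\psi_0 \in [-\pi, \pi)$. Then $U(x, (\psi_0)) = e^{\I \psi_0 Z}$ exhibits exactly the claimed form.

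For the inductive step, I first observe that the structured matrix form on the right-hand side of \cref{eq:qsp-gslw} is closed under right multiplication by $e^{\I\psi Z} W(x)$: a one-time direct computation shows that $\begin{pmatrix} P & \I Q\sqrt{1-x^2} \\ \I Q^*\sqrt{1-x^2} & P^* \end{pmatrix} \cdot e^{\I\psi Z} W(x)$ has the same shape with new $(P', Q')$. Given $P,Q$ satisfying (1)--(3) at level $d$, I propose to choose a phase $\psi_d$ and define
\[
\tilde P(x) = e^{-\I\psi_d} x P(x) + e^{\I\psi_d}(1-x^2) Q(x), \qquad \tilde Q(x) = e^{\I\psi_d} x Q(x) - e^{-\I\psi_d} P(x),
\]
obtained by right-multiplying the target matrix by $e^{-\I\psi_d Z} W(x)^{-1}$. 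Both are polynomials, and the parity claims at level $d-1$ are automatic from those of $P, Q$ combined with the factors $x$ or $(1-x^2)$. The substantive step is degree reduction: writing $P = p_d x^d + \cdots$ and $Q = q_{d-1} x^{d-1} + \cdots$, the coefficients of $x^{d+1}$ in $\tilde P$ and of $x^d$ in $\tilde Q$ both equal $\pm(p_d e^{-\I\psi_d} - q_{d-1} e^{\I\psi_d})$. The crucial input is that the normalization identity (3), when one compares the coefficient of $x^{2d}$, forces $|p_d| = |q_{d-1}|$. Hence I can solve $e^{2\I\psi_d} = p_d/q_{d-1}$ whenever the right-hand side is defined (and take any $\psi_d$ when both leading coefficients vanish), thereby annihilating both leading terms simultaneously. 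The preserved normalization $|\tilde P|^2 + (1-x^2)|\tilde Q|^2 \equiv 1$ follows because this quantity equals the determinant of the structured matrix, and both $e^{-\I\psi_d Z}$ and $W(x)^{-1}$ have unit determinant.

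Once $(\tilde P, \tilde Q)$ are verified to satisfy conditions (1)--(3) at level $d-1$, the induction hypothesis supplies $(\psi_0, \ldots, \psi_{d-1})$ realizing them, and appending $\psi_d$ yields the desired phase factors. I expect the main obstacle to be the bookkeeping in the degenerate case $p_d = q_{d-1} = 0$: the choice of $\psi_d$ is free, but one must invoke the parity of $\tilde P$ and $\tilde Q$ to conclude that their degrees actually drop to $d-1$ and $d-2$ respectively, not merely that the single top coefficient vanishes. A secondary but essential point is the closure of the structured matrix form under the QSP layer, which is elementary but must be verified carefully to ensure that $\I \tilde Q^*\sqrt{1-x^2}$ and $\tilde P^*$ really occupy the bottom row after the peel-off; otherwise the inductive hypothesis does not apply.
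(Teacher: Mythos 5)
The paper does not prove this theorem; it imports it as \cite[Theorem 4]{GilyenSuLowEtAl2019}, so there is no internal proof to compare against. Your argument is correct and is essentially the standard inductive ``peel off the right-most layer'' construction used in the cited reference: computing $\wt U = U\,e^{-\I\psi_d Z}W(x)^{-1}$ gives exactly your $\wt P = e^{-\I\psi_d}xP + e^{\I\psi_d}(1-x^2)Q$ and $\wt Q = e^{\I\psi_d}xQ - e^{-\I\psi_d}P$, the normalization identity at $x^{2d}$ does force $\abs{p_d}=\abs{q_{d-1}}$, and the choice $e^{2\I\psi_d}=p_d/q_{d-1}$ simultaneously kills the top coefficients of both. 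Two small clarifications worth making explicit in a write-up: the parity argument is what drops the degree in \emph{every} case (not only the degenerate one) — cancelling the $x^{d+1}$ coefficient of $\wt P$ and invoking parity $(d-1)\bmod 2$ removes $x^d$ as well; and since the theorem quantifies over positive integers $d$, the induction should either start at $d=1$ or explicitly note that the $d=0$ statement ($Q\equiv 0$, $P$ a unimodular constant, empty product) is being proved as an auxiliary base. Neither affects correctness.
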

Here, the complex conjugate of a complex polynomial is defined by taking complex conjugate on all of its coefficients. $X,Z$ are Pauli matrices. In most applications, we are only interested in using the real part of $P$. The following corollary is a slight variation of \cite[Corollary 5]{GilyenSuLowEtAl2019}, which states that the condition on the real part of $P$ can be easily satisfied.  Due to the relation between the real and imaginary components given in \cref{eq:re_im_equiv}, the conditions on the imaginary part of $P$ are the same.
\begin{corollary}[\textbf{Quantum signal processing with real target polynomials} {\cite[Corollary 5]{GilyenSuLowEtAl2019}}]\label{cor:complementary}
Let $f\in \RR[x]$ be a degree-d polynomial for some $d\geq 1$ such that 
\begin{itemize}
    \item $f(x)$ has parity $(d \mod 2)$,
    \item $\abs{f(x)}\le 1, \forall x\in[-1,1]$. 
\end{itemize}
Then there exists some $P,Q\in \CC[x]$ satisfying properties (1)-(3) of \cref{thm:qsp} such that $f(x)=\Im[P(x)]$.
\end{corollary}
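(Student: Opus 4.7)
The plan is to build the complementary polynomials via the ansatz
\[
P(x) = A(x) + \I f(x), \qquad Q(x) \in \RR[x] \subset \CC[x],
\]
for some real polynomials $A, Q$ to be determined. Under this ansatz, property (3) of \cref{thm:qsp} becomes the polynomial identity
\[
A(x)^2 + (1-x^2)\, Q(x)^2 \;=\; 1 - f(x)^2 \;=:\; R(x),
\]
while properties (1) and (2) translate into degree and parity constraints on $A$ and $Q$. The polynomial $R$ is real, of even parity, of degree at most $2d$, and nonnegative on $[-1,1]$ by the hypothesis $\|f\|_\infty \le 1$. Hence the corollary reduces to decomposing $R$ as a weighted sum of two squares of the prescribed shape, from which we then read off $P$ and $Q$.

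The decomposition of $R$ is a consequence of the Markov--Lukács theorem, which asserts that any real polynomial nonnegative on a compact interval admits an explicit sum-of-squares representation of Fejér--Riesz type. To respect parity, I would first use that $R$ is even in $x$ to write $R(x) = \widetilde R(x^2)$, where $\widetilde R$ is a polynomial of degree at most $d$ that is nonnegative on $[0,1]$. Applying Markov--Lukács on $[0,1]$ then yields
\[
\widetilde R(y) = \alpha(y)^2 + y(1-y)\,\beta(y)^2 \quad \text{or} \quad \widetilde R(y) = y\,\alpha(y)^2 + (1-y)\,\beta(y)^2,
\]
according to whether $\deg \widetilde R$ is even or odd, with explicit degree bounds on $\alpha$ and $\beta$. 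Substituting $y = x^2$ and setting $A(x) = \alpha(x^2)$, $Q(x) = x\,\beta(x^2)$ in the first case, or $A(x) = x\,\alpha(x^2)$, $Q(x) = \beta(x^2)$ in the second, recovers the required identity $A(x)^2 + (1-x^2)Q(x)^2 = R(x)$.

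Finally I would verify the degree and parity constraints. In the $d$-even case the first Markov--Lukács form applies with $n = d/2$, giving $\deg A \le d$ with even parity and $\deg Q \le d-1$ with odd parity; in the $d$-odd case the second form applies with $n = (d-1)/2$, giving $\deg A \le d$ with odd parity and $\deg Q \le d-1$ with even parity. In both cases the parity of $A$ matches that of $f$, so $P = A + \I f$ has the correct parity $d \bmod 2$, and by construction $\Im[P(x)] = f(x)$. The main subtlety is the parity bookkeeping: a naive Markov--Lukács decomposition of $R$ on $[-1,1]$ need not respect the symmetry $x \mapsto -x$, and the reduction through $y = x^2$ is what guarantees the required parity almost for free, at the modest price of case-splitting on the parity of $d$. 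Boundary roots of $R$ at $\pm 1$, which arise precisely when $f(\pm 1) = \pm 1$, are absorbed automatically by the $(1-y)$ factor in the $[0,1]$ representation.
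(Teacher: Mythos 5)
The paper does not prove \cref{cor:complementary}: it cites \cite[Corollary 5]{GilyenSuLowEtAl2019} and merely observes, via \cref{eq:re_im_equiv}, that the real-part and imaginary-part formulations are equivalent. Your argument is correct and self-contained, and it takes a genuinely different route from the cited reference. Gily\'en et al.\ (and Haah's closely related construction in \cite{Haah2019}) factor $1-f(x)^2$ over $\CC[x]$ and explicitly partition its roots --- complex-conjugate pairs, $x\mapsto -x$ reflections, boundary roots at $\pm 1$ --- to assemble $A$ and the factor $(1-x^2)Q^2$, a constructive Fej\'er--Riesz argument with fairly delicate parity bookkeeping. You instead substitute $y=x^2$ to exploit the evenness of $R=1-f^2$, then invoke Markov--Luk\'acs on $[0,1]$ as a black box; the two canonical forms of that theorem, keyed to $\deg\widetilde R \bmod 2 = d\bmod 2$, land exactly on the required parities for $A$ and $Q$ after lifting back, and the degree bounds $\deg A\le d$, $\deg Q\le d-1$ and the normalization $\abs{P}^2+(1-x^2)\abs{Q}^2 = A^2 + f^2 + (1-x^2)Q^2 = 1$ all check out. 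This is cleaner and sidesteps the root-pairing analysis entirely, at the cost of invoking Markov--Luk\'acs rather than producing the factorization directly. One inaccuracy in a side remark: a root of $\widetilde R$ at $y=1$ is not ``absorbed by the $(1-y)$ factor'' --- in both Markov--Luk\'acs forms the term carrying $(1-y)$ already vanishes there, so such a root instead forces $\alpha(1)=0$; Markov--Luk\'acs handles this case automatically, so nothing in the proof breaks, but the explanation is off.
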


Since we are interested in $P$, we may further restrict $Q\in\RR[x]$. In such a case, the phase factors can be restricted to be symmetric.
Let $D_d$ denote the domain of the symmetric phase factors:
\begin{equation}
    D_d=\begin{cases}
    [-\frac{\pi}{2},\frac{\pi}{2})^{\frac{d}{2}} \times [-\pi,\pi) \times [-\frac{\pi}{2},\frac{\pi}{2})^{\frac{d}{2}}, & d \mbox{ is even,}\\
    [-\frac{\pi}{2},\frac{\pi}{2})^{d+1}, &d \mbox{ is odd.}\\
    \end{cases}
\end{equation}

\begin{theorem}[\textbf{Quantum signal processing with symmetric phase factors} {\cite[Theorem 1]{WangDongLin2021}}]
\label{thm:sym_qsp}
Consider any $P\in \mathbb{C}[x]$ and $Q\in \mathbb{R}[x]$ satisfying the following conditions
\begin{enumerate}[label=(\arabic*)]
    \item $\deg(P)= d$ and $\deg(Q)= d-1$.
    \item $P$ has parity $(d \bmod 2)$ and $Q$ has parity $(d-1 \bmod 2)$.
    \item (Normalization condition) $\forall x\in[-1,1]: |P(x)|^2+(1-x^2)|Q(x)|^2=1$.
    \item \label{itm:4} If $d$ is odd, then the leading coefficient of $Q$ is positive.
\end{enumerate}
There exists a unique set of symmetric phase factors $\Psi:=(\psi_0,\psi_1,\cdots,\psi_1,\psi_0)\in D_d$ such that 
\begin{equation}\label{eq:UPQ}
U(x,\Psi)=\begin{pmatrix}
P(x) & \I Q(x)\sqrt{1-x^2}\\
\I Q(x) \sqrt{1-x^2} & P^* (x)
\end{pmatrix}.
\end{equation}
\end{theorem}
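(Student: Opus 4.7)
My plan is to prove existence and uniqueness of the symmetric $\Psi \in D_d$ by strong induction on $d$. The base cases $d = 0$ and $d = 1$ are handled by direct computation: conditions (1)--(4) constrain $(P, Q)$ to very restrictive forms (a unimodular constant and zero, respectively $p_1 x$ with $|p_1| = 1$ and $Q \equiv 1$), and matching against $U(x, (\psi_0))$ or $U(x, (\psi_0, \psi_0))$ uniquely identifies $\psi_0$ in the prescribed range of $D_d$.

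For the inductive step $d \ge 2$, I would make the symmetric ansatz $\Psi = (\psi_0, \Psi', \psi_0)$ with $\Psi' = (\psi_1, \ldots, \psi_{d-1})$ itself symmetric of length $d - 1$, and study
\begin{equation*}
U(x, \Psi) = e^{\I\psi_0 Z}\, W(x)\, U(x, \Psi')\, W(x)\, e^{\I\psi_0 Z}.
\end{equation*}
Letting $U(x, \Psi')$ correspond to $(P', Q')$ in the form of \cref{eq:UPQ} with $Q' \in \RR[x]$ and $P' = A' + \I B'$, matching entries against the target yields
\begin{align*}
(2x^2 - 1) A'(x) - 2x(1-x^2) Q'(x) &= \Re\!\left[e^{-2\I\psi_0} P(x)\right],\\
2x\, A'(x) + (2x^2 - 1) Q'(x) &= Q(x),\\
B'(x) &= \Im\!\left[e^{-2\I\psi_0} P(x)\right].
\end{align*}
The $2\times 2$ coefficient matrix of the first two equations has determinant $(2x^2-1)^2 + 4x^2(1-x^2) \equiv 1$, so $(A', Q')$ is uniquely determined as a pair of polynomials once $\psi_0$ is fixed. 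Comparing the $x^{d+2}$ coefficient in the first equation (invoking parity to kill $p_{d-1}, q_d, q_{d-2}$) gives the scalar condition $\Re(e^{-2\I\psi_0} p_d) = q_{d-1}$; combined with the top-order identity $|p_d| = |q_{d-1}|$ extracted from condition (3), and condition (4) (forcing $q_{d-1}>0$ for odd $d$), this pins down $e^{2\I\psi_0} = p_d/|p_d|$, giving a unique $\psi_0 \in [-\pi/2, \pi/2)$; for even $d$ an analogous sign analysis of $q_{d-1}$ places $\psi_0$ in the correct outer range of $D_d$.

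With $\psi_0$ fixed, I would verify conditions (1)--(4) for $(P', Q')$ at degree $d-2$. The degree reductions $\deg A' \le d-2$ and $\deg Q' \le d-3$ do not follow from the scalar condition on $\psi_0$ alone; intermediate $x^d, x^{d-1}, \ldots$ coefficients of $A'$ must also vanish. This is where the full polynomial identity in (3) is used: matching the $x^{2d-2}$ coefficient of $|P|^2 + (1-x^2)|Q|^2 = 1$ yields exactly the relation $2r_{d-2} + q_{d-1} - 2q_{d-3} = 0$ (where $r_k$ is the $x^k$-coefficient of $\Re(e^{-2\I\psi_0}P)$) needed to annihilate the $x^d$ coefficient of $A'$, and analogous identities at lower orders cascade down. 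The normalization $|P'|^2 + (1-x^2)|Q'|^2 = 1$ then follows from unitarity of $U(x, \Psi)$ combined with that of $e^{\I\psi_0 Z} W(x)$, and condition (4) for $(P', Q')$ (required when $d-2$ is odd) reduces to a sign analysis of the leading coefficient of $Q' = -2x\,\Re(e^{-2\I\psi_0} P) + (2x^2 - 1) Q$. Applying the inductive hypothesis to $(P', Q')$ then produces a unique symmetric $\Psi' \in D_{d-2}$, and concatenation gives the desired $\Psi \in D_d$.

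The main obstacle is verifying that \emph{all} higher-order coefficients of $A'$ and $Q'$ above the target degree cancel: the scalar condition on $\psi_0$ controls only the very top order, while the remaining cancellations must be read out, order by order, from the polynomial identity $|P|^2 + (1-x^2)|Q|^2 = 1$. This cascading use of condition (3) is what makes the whole recursion consistent. A subsidiary subtlety is tracking the sign condition (4) across the induction (ensuring the reduced $Q'$ has the correct leading-coefficient sign when $d-2$ is odd), and handling the even-$d$ convention for $D_d$ in which the middle coordinate is allowed range $[-\pi, \pi)$ while the outer ones lie in $[-\pi/2, \pi/2)$.
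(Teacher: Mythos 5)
The paper does not supply its own proof of \cref{thm:sym_qsp}; it is imported verbatim from [WangDongLin2021, Theorem~1], so there is no internal argument for me to compare your attempt against. Assessed on its own merits, your peel-and-recurse induction is the right and standard approach, and the key computations are correct: the linear system relating $(A',Q')$ to $\Re[e^{-2\I\psi_0}P]$ and $Q$ has determinant identically equal to $1$; the top-order coefficient identity from condition~(3) gives $|p_d|=|q_{d-1}|$; combined with the $x^{d+2}$ cancellation this pins down $e^{2\I\psi_0}=p_d/q_{d-1}$, hence a unique $\psi_0\in[-\pi/2,\pi/2)$; and the cascading use of the normalization identity at orders $x^{2d-2},x^{2d-4},\dots$ to annihilate the excess high-order coefficients of $A'$ and $Q'$ is exactly the mechanism that makes the recursion close. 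You also correctly observe that $|P'|^2+(1-x^2)Q'^2=1$ follows from $\mathrm{SU}(2)$-valuedness of $W^{-1}e^{-\I\psi_0 Z}\,U(x,\Psi)\,e^{-\I\psi_0 Z}W^{-1}$.

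One genuine gap: you never argue that $\deg P' = d-2$ and $\deg Q' = d-3$ hold with \emph{equality}, which is required by condition~(1) of the inductive hypothesis — the cascading argument only gives the upper bounds $\deg P'\le d-2$, $\deg Q'\le d-3$. The fix is short but should not be skipped: from the normalization $|P'|^2+(1-x^2)Q'^2=1$ one gets $\deg Q'=\deg P'-1$ whenever $\deg P'\ge 1$; and if $\deg P'<d-2$ then the reconstruction
\begin{equation*}
P(x)=e^{2\I\psi_0}\bigl(x^2P'(x)+(x^2-1)P'^*(x)-2x(1-x^2)Q'(x)\bigr)
\end{equation*}
would have degree strictly less than $d$, contradicting $\deg P=d$. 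A second, smaller imprecision: the ``outer'' coordinates of $D_d$ lie in $[-\pi/2,\pi/2)$ for \emph{both} parities of $d$ (only the central entry for even $d$ gets the enlarged range $[-\pi,\pi)$), so the sign analysis of $q_{d-1}$ in the even case is needed only to single out the right solution of $e^{2\I\psi_0}=p_d/q_{d-1}$ within $[-\pi/2,\pi/2)$, not to change the range. With these two points addressed, the argument is complete.
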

We want to emphasize that the set of symmetric phase factors is unique only if both $P(x)$ and $Q(x)$ are determined. If only $f = \Im [P]$ is given, then the set of symmetric phase factors may not be unique as mentioned above.
When we are only interested in $f(x)=\Im[P(x)]$ represented by symmetric phase factors, the conditions on $f$ are the same as those in \cref{cor:complementary}. 
This is proved constructively in \cite[Theorem 4]{WangDongLin2021}.

Throughout the paper, unless otherwise specified, we refer to $\Psi:=(\psi_0,\cdots,\psi_d)$ as the full set of phase factors and use $\Phi:=(\phi_0,\cdots,\phi_{\wt{d}-1})$ to denote the set of reduced phase factors after imposing symmetry constraint on $\Psi$, where $\wt{d}:=\lceil \frac{d+1}{2} \rceil$. 

In this paper, in order to characterize decay properties, we choose the second half of $\Psi$ to be the corresponding reduced phase factors. Specifically, when $d$ is odd, the set of reduced phase factors is
\begin{equation}
    \Phi=(\phi_0,\cdots,\phi_{\wt{d}-1}):=(\psi_{\wt{d}},\cdots,\psi_{d}).
\end{equation}
When $d$ is even, the set of reduced phase factors is
\begin{equation}
    \Phi=(\phi_0,\cdots,\phi_{\wt{d}-1}):=(\frac{1}{2}\psi_{\wt{d}-1}, \psi_{\wt{d}},\cdots,\psi_{d}).
\end{equation}

In this way, one has 
\begin{equation*}
        U(x,\Psi)=e^{i \phi_{\wt{d}-1} Z} W(x) e^{i \phi_{\wt{d}-2} Z}  \cdot \ldots \cdot W(x) e^{i \phi_{0} Z} W(x) e^{i \phi_{0} Z} W(x)\cdot \ldots\cdot e^{i \phi_{\wt{d}-2} Z} W(x) e^{i \phi_{\wt{d}-1} Z} 
\end{equation*}
for the odd case, and
\begin{equation*}
        U(x,\Psi)=e^{i \phi_{\wt{d}-1} Z} W(x) e^{i \phi_{\wt{d}-2} Z}  \cdot \ldots \cdot W(x) e^{2i \phi_{0} Z} W(x)\cdot \ldots\cdot e^{i \phi_{\wt{d}-2} Z} W(x) e^{i \phi_{\wt{d}-1} Z} 
\end{equation*}
for the even case. 

\begin{lemma}[Phase-factor padding]
\label{lm:pad_phi_0}
For any symmetric phase factors $\Psi$,
\begin{equation}
    \Im\left(\langle 0|U(x,\Psi)|0\rangle\right) = \Im \left(\langle 0| W(x) U(x,\Psi) W(x)|0\rangle\right).
\end{equation}
\end{lemma}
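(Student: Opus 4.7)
The plan is to prove the identity by direct block-matrix computation, leveraging the fact that the symmetry of $\Psi$ constrains the off-diagonal entries of $U(x,\Psi)$ to come from a \emph{real} polynomial $Q$.

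First, I would invoke (or rederive) the structural form guaranteed by \cref{thm:sym_qsp}: for any symmetric phase factors $\Psi$, one has
\[
U(x,\Psi) = \begin{pmatrix} P(x) & \I Q(x)\sqrt{1-x^2} \\ \I Q(x)\sqrt{1-x^2} & P^*(x) \end{pmatrix}
\]
with $Q \in \RR[x]$. A self-contained justification comes from transposing the defining product $e^{\I \psi_0 Z}W(x)e^{\I \psi_1 Z}\cdots W(x)e^{\I \psi_d Z}$: the identities $Z^\top = Z$ and $W(x)^\top = W(x)$ leave each factor unchanged while reversing the order, and the symmetry $\psi_j = \psi_{d-j}$ then restores the original ordering. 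Thus $U(x,\Psi)^\top = U(x,\Psi)$, and combining this with $U(x,\Psi)\in \mathrm{SU}(2)$ forces the off-diagonal entries to coincide and to be purely imaginary, i.e., of the form $\I Q(x)\sqrt{1-x^2}$ with $Q\in\RR[x]$.

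Next, I would carry out two routine multiplications. Writing $s := \sqrt{1-x^2}$ for brevity, the top-left entry of $W(x)U(x,\Psi)W(x)$ works out to
\[
\langle 0 | W(x)\, U(x,\Psi)\, W(x) | 0 \rangle \;=\; x^2 P(x) \;-\; 2x(1-x^2)\, Q(x) \;-\; (1-x^2)\, P^*(x).
\]

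Finally, I would take imaginary parts. Since $x$, $Q(x)$, and $1-x^2$ are all real, the middle term is purely real and drops out, while
\[
\Im\bigl[x^2 P(x) - (1-x^2) P^*(x)\bigr] = x^2 \Im[P(x)] + (1-x^2)\,\Im[P(x)] = \Im[P(x)],
\]
using $x^2 + (1-x^2) = 1$. Since $\Im[P(x)] = \Im\langle 0|U(x,\Psi)|0\rangle$, this is exactly the claimed identity. There is no real obstacle; the only subtle point is that the \emph{reality} of $Q$ (not just its structural form) is what kills the unwanted cross term $-2x(1-x^2)Q(x)$, and this reality is precisely what the symmetry hypothesis on $\Psi$ buys us.
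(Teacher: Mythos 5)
Your proof is correct and follows essentially the same route as the paper's: invoke the structural form of $U(x,\Psi)$ with $Q\in\RR[x]$, compute the top-left entry of $W(x)U(x,\Psi)W(x)$ by block matrix multiplication, and observe that the $-2x(1-x^2)Q(x)$ term is real and hence drops out under $\Im[\,\cdot\,]$. The one addition you make — the self-contained transpose argument ($Z^\top=Z$, $W^\top=W$, $\psi_j=\psi_{d-j}$ forces $U^\top=U$, hence $Q$ real) — is a nice touch that the paper simply attributes to ``the symmetry of $\Psi$'' without spelling it out, but it does not change the essential strategy.
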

\begin{proof}
    According to the definition, $U(x,\Psi)$ takes the form $$\begin{pmatrix}
P(x) & \I Q(x)\sqrt{1-x^2}\\
\I Q(x) \sqrt{1-x^2} & P^* (x)
\end{pmatrix}.$$
Here, $P(x)\in \CC[x]$, and $Q(x)\in \RR[x]$ due to the symmetry of $\Psi$. Direct computation shows:
\begin{equation}
\begin{split}
    &W(x) U(x,\Psi) W(x)\\
    &= \begin{pmatrix}
x & \I \sqrt{1-x^2}\\
\I \sqrt{1-x^2} & x
\end{pmatrix}\begin{pmatrix}
P(x) & \I Q(x)\sqrt{1-x^2}\\
\I Q(x) \sqrt{1-x^2} & P^* (x)
\end{pmatrix} \begin{pmatrix}
x & \I \sqrt{1-x^2}\\
\I \sqrt{1-x^2} & x
\end{pmatrix}\\
&= \begin{pmatrix}
xP-(1-x^2)Q & \I (xQ+P^*)\sqrt{1-x^2}\\
\I (P+xQ) \sqrt{1-x^2} & xP^*-(1-x^2)Q
\end{pmatrix} \begin{pmatrix}
x & \I \sqrt{1-x^2}\\
\I \sqrt{1-x^2} & x
\end{pmatrix}.\\
\end{split}
\end{equation}
The upper-left entry of $W(x) U(x,\Psi) W(x)$ is
\begin{equation}
    x^2P-x(1-x^2)Q -(xQ+P^*)(1-x^2)= x^2P+(x^2-1)P^*-2x(1-x^2)Q.
\end{equation}
Hence
\begin{equation}
    \Im \left(\langle 0| W(x) U(x,\Psi) W(x)|0\rangle\right)= \Im \left(x^2P+(x^2-1)P^*\right) = x^2 P_\Im + (1-x^2)P_\Im = P_\Im.
\end{equation}
Note that $P_\Im = \Im\left(\langle 0|U(x,\Psi)|0\rangle\right)$, which completes the proof.
\end{proof}
Recall that we are interested in $g(x,\Psi):=\Im[\langle 0|U(x,\Psi)|0\rangle]$, and $g(x,\Phi)$ is identified with $g(x,\Psi)$. \cref{lm:pad_phi_0} implies that for reduced phase factors $\Phi$, $g(x,\Phi)$ remains the same if we pad $\Phi$ with an arbitrary number of  $0$'s at the right end. In this way, we are able to identify $\Phi$ with the infinite dimensional vector $(\phi_0,\cdots,\phi_{\wt{d}-1},0,\cdots)$ in $\RR^{\infty}$. Then for any $\Phi^{(1)},\Phi^{(2)}\in \RR^{\infty}$, the distance $\norm{\Phi^{(1)}-\Phi^{(2)}}_1$ is well defined. 

\begin{definition}
The effective length of $\Phi\in\RR^{\infty}$ is the largest index of the nonzero elements of $\Phi$.
If $\Phi=(\phi_0,\cdots,\phi_{\wt{d}-1},0,\cdots)$ and $\phi_{\wt{d}-1}\ne 0$, then its effective length is $\wt{d}$. 
\end{definition}

By viewing reduced phase factors $\Phi$ as an infinite dimensional vector in $\RR^{\infty}$, the problem of \emph{symmetric quantum signal processing} is to find reduced phase factors $\Phi\in \RR^{\infty}$ such that
\begin{equation}
    F(\Phi):=\mathcal{F}\left(g(x,\Phi)\right) = \mathcal{F}(f)
\end{equation}
holds for a target polynomial $f$.

\section{Infinite quantum signal processing}\label{sec:inf_sym_QSP}

We use $DF(\Phi)$ to denote the Jacobian matrix of $F(\Phi)$, which is a matrix of infinite dimension. Following the construction of the mapping $F$, for any $k\in\NN$, the $k$-th column of $DF(\Phi)$ is 
\begin{equation}\label{eqn:DF_column}
    \frac{\partial F(\Phi)}{\partial \phi_k}= \mathcal{F}\left(\frac{\partial g(x,\Phi)}{\partial \phi_k}\right).
\end{equation}
Similarly, for any $r,s\in\NN$, the second order derivative is
\begin{equation}\label{eqn:DF2_entry}
    \frac{\partial^2 F(\Phi)}{\partial \phi_r \partial \phi_s}= \mathcal{F}\left(\frac{\partial ^2 g(x,\Phi)}{\partial \phi_r \partial \phi_s}\right).
\end{equation}
As a remark, both $\frac{\partial F(\Phi)}{\partial \phi_k}$ and $\frac{\partial^2 F(\Phi)}{\partial \phi_r \partial \phi_s}$ are infinite dimensional vectors.

The main goal of this section is to prove \cref{thm:main_l1}.
We first present a useful estimate of the vector $1$-norm of $F$ and $\frac{\partial^2 F(\Phi)}{\partial \phi_r \partial \phi_s}$ in \cref{sec:1-norm-bounds-on-F}. 
This allows us to estimate the matrix $1$-norm of the Jacobian $DF$ and prove the invertibility of $DF$ in \cref{sec:lipschitz}. 
Based on these technical preparations, we prove the invertibility of the mapping $F$ in $\RR^{\infty}$ in \cref{sec:invert_F_Rinf}. 
As a final step, since $\RR^{\infty}$ is dense in $\ell^1$ and all derived estimates are independent of the effective length of $\Phi$, we extend the result to the invertibility of $\overline{F}$ in \cref{sec:extension}.  
The analysis leverages some facts about the Banach space, which are summarized in \cref{sec:banach} for completeness.

Without loss of generality, we consider the case that the target function is even in this section. The analytical results can be similarly generalized to the odd case.

\subsection{Estimating the vector 1-norm of $F$ and its second-order derivatives}\label{sec:1-norm-bounds-on-F}

We first summarize the main goal of this subsection as the following lemmas. To prove them, we consider a more general case where phase factors are not necessarily symmetric. Consequentially, we prove stronger results in \cref{lm:upper_bound_1_norm,lm:hess_1_norm}. \cref{lm:sym_upper_bound_1_norm} and \cref{lm:sym_hess_1_norm} are consequences of \cref{lm:upper_bound_1_norm} and \cref{lm:hess_1_norm} respectively by restricting to the symmetric phase factors. As a remark, the upper bounds are independent of the effective length of the reduced phase factors, which will enable the generalization to $\ell^1$.
    \begin{lemma}\label{lm:sym_upper_bound_1_norm}
            For any $\Phi\in \RR^{\infty}$, it holds that
            \begin{equation}
                    \norm{F(\Phi)}_1\le \sinh\left(2\norm{\Phi}_1\right).
            \end{equation}
    \end{lemma}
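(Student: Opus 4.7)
I plan to derive this from the stronger non-symmetric statement that $\|F(\Psi)\|_1 \le \sinh(\|\Psi\|_1)$ for every $\Psi \in \RR^\infty$ (the upcoming \cref{lm:upper_bound_1_norm}). Once that is known, a direct comparison of norms does the job: when $\Psi$ is symmetric with reduced phase factors $\Phi$, mirrored entries of $\Psi$ contribute twice and (for even $d$) the halved convention $\phi_0 = \tfrac{1}{2}\psi_{\wt{d}-1}$ recovers the central entry, so $\|\Psi\|_1 = 2\|\Phi\|_1$ and therefore $\sinh(\|\Psi\|_1) = \sinh(2\|\Phi\|_1)$.

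For the non-symmetric bound, the main device is the substitution $x = \cos\theta$. Under this change of variable the Chebyshev $\ell^1$-norm of $g(\cdot,\Psi)$ coincides with the Wiener ($\ell^1$-Fourier) norm of $\theta \mapsto g(\cos\theta,\Psi)$, and $W(\cos\theta) = e^{\I\theta X}$, so
\begin{equation*}
U(\cos\theta,\Psi) = e^{\I\psi_0 Z}\prod_{j=1}^d \bigl[e^{\I\theta X} e^{\I\psi_j Z}\bigr].
\end{equation*}
Expanding $e^{\I\theta X} = e^{\I\theta} Q_+ + e^{-\I\theta} Q_-$ with the projectors $Q_\pm = \tfrac12(I\pm X)$ onto $|\pm\rangle$, and using the transfer identity $Q_\epsilon e^{\I\psi Z} Q_{\epsilon'} = (\cos\psi\,\mathbf{1}_{\epsilon = \epsilon'} + \I\sin\psi\,\mathbf{1}_{\epsilon \ne \epsilon'})|\epsilon\rangle\langle\epsilon'|$, the $(0,0)$ entry unfolds as a finite sum over binary paths $\vec\epsilon \in \{\pm\}^d$:
\begin{equation*}
\langle 0|U(\cos\theta,\Psi)|0\rangle = \sum_{\vec\epsilon} e^{\I\theta\sum_j \epsilon_j} c_{\vec\epsilon}(\Psi), \quad c_{\vec\epsilon}(\Psi) = \frac{e^{\I(\psi_0+\psi_d)}}{2}\prod_{j=1}^{d-1} m(\psi_j,\epsilon_j,\epsilon_{j+1}),
\end{equation*}
where $m(\psi,\epsilon,\epsilon') = \cos\psi$ if $\epsilon = \epsilon'$ and $\I\sin\psi$ otherwise.

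The crucial combinatorial step is to take the imaginary part and group paths by the parity of the flip count $N(\vec\epsilon) := \#\{j : \epsilon_j \ne \epsilon_{j+1}\}$. A direct computation shows that even-$N$ paths contribute $|\sin(\psi_0+\psi_d)|\cdot\frac{|R(\vec\epsilon,\Psi)|}{2}$ while odd-$N$ paths contribute $|\cos(\psi_0+\psi_d)|\cdot\frac{|R(\vec\epsilon,\Psi)|}{2}$, with $R$ collecting the $\sin\psi_j$/$\cos\psi_j$ factors prescribed by $\vec\epsilon$. Summing over $\vec\epsilon$ yields
\begin{equation*}
\|F(\Psi)\|_1 \le \tfrac12 \bigl[|\sin(\psi_0+\psi_d)|(A+B) + |\cos(\psi_0+\psi_d)|(A-B)\bigr],
\end{equation*}
where $A = \prod_{j=1}^{d-1}(|\cos\psi_j|+|\sin\psi_j|)$ and $B = \prod_{j=1}^{d-1}(|\cos\psi_j|-|\sin\psi_j|)$. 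Applying $|\cos\psi| \le \cosh|\psi|$ and $|\sin\psi| \le \sinh|\psi|$ to each factor, and using the generating-function identity $\prod_j(\cosh|\psi_j| \pm \sinh|\psi_j|) = e^{\pm S}$ with $S = \sum_{j=1}^{d-1}|\psi_j|$, gives $A+B \le 2\cosh S$ and $A-B \le 2\sinh S$. The $\sinh$ addition formula together with $|\psi_0+\psi_d| \le |\psi_0|+|\psi_d|$ then yields $\|F(\Psi)\|_1 \le \sinh(|\psi_0+\psi_d|+S) \le \sinh(\|\Psi\|_1)$.

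The main obstacle is the parity-based cancellation forced by the imaginary part: a naive triangle-inequality bound on $|c_{\vec\epsilon}|$ that ignores the $\Im$ operation gives only $\|F(\Psi)\|_1 \le A \le e^S$, which does not vanish at $\Psi = 0$ and hence cannot match the sharp $\sinh$ behavior required by $F(0)=0$. It is the interplay between the endpoint phase $e^{\I(\psi_0+\psi_d)}$ and the power of $\I$ accumulated along each flip-path (governed by $N(\vec\epsilon)\bmod 2$) that induces the even/odd splitting, and matching this splitting against the $e^{\pm S}$ decomposition is what produces $\sinh$ in place of the loose $e^{\|\Psi\|_1}$ that one would otherwise obtain.
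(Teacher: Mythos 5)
Your argument is correct, and it reaches the same intermediate target (the non-symmetric bound $\norm{\mathcal{F}(g(x,\Psi))}_1 \le \sinh(\norm{\Psi}_1)$, then $\norm{\Psi}_1 = 2\norm{\Phi}_1$ by symmetry and the halved-center convention), but the path to the non-symmetric bound is genuinely different. The paper works in $x$-space: it writes $e^{\I\psi_k Z} = \cos\psi_k\,\bI + \I\sin\psi_k\,Z$, commutes all the $Z$'s through via $W(x)Z = Z\,W(x)^{-1}$ so every monomial collapses to a signed power of $W(x)$ whose $(0,0)$ entry is a single Chebyshev polynomial, and then extracts the imaginary part by keeping only terms with an odd number of $Z$ insertions; this directly exhibits $\norm{\mathcal{F}(g)}_1$ as a sum over odd-cardinality subsets of $\prod \abs{\sin\psi_{j_i}}\prod\abs{\cos\psi_k}$, which it then bounds by $\sinh\bigl(\sum\abs{\sin\psi_k}\bigr)$. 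You instead work in $\theta$-space after $x = \cos\theta$: you diagonalize $W(\cos\theta)=e^{\I\theta X}$ in the $\ket{\pm}$ basis, expand over binary paths $\vec\epsilon$, and observe that each interior flip contributes a factor of $\I$ to the path amplitude, so the parity of the flip count $N(\vec\epsilon)$ dictates whether the endpoint phase $e^{\I(\psi_0+\psi_d)}$ enters through its sine or its cosine. After converting $\abs{\cos}\to\cosh$, $\abs{\sin}\to\sinh$ and using $\prod(\cosh\pm\sinh)=e^{\pm S}$, the $\sinh$ addition formula finishes the job. Both arguments hinge on the same essential point — taking the imaginary part imposes an odd-parity selection rule, which is what turns the naive $e^{\norm{\Psi}_1}$ bound into $\sinh(\norm{\Psi}_1)$ — but the selection rule manifests differently (odd number of $Z$'s vs.\ flip parity interacting with the boundary phase). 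Your version yields the marginally tighter intermediate quantity $\sinh(\abs{\psi_0+\psi_d}+\sum_{j=1}^{d-1}\abs{\psi_j})$ before relaxing to $\sinh(\norm{\Psi}_1)$, while the paper's gives $\sinh(\sum\abs{\sin\psi_k})$; neither refinement is exploited downstream. One small bookkeeping point to make explicit when you write this up: summing $\abs{R(\vec\epsilon,\Psi)}$ over all $2^d$ paths doubles the single-$\epsilon_1$ generating-function count, so the $A\pm B$ in your display should be read as arising from summing over both choices of $\epsilon_1$, which is consistent with the $\tfrac12$ prefactor from the two boundary overlaps $\braket{0|\epsilon_1}\braket{\epsilon_d|0}=\tfrac12$.
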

    \begin{lemma}\label{lm:sym_hess_1_norm}
            For any $\Phi\in\RR^{\infty}$, and $r,s\in \NN$, it holds that 
            \begin{equation}
                    \norm{\frac{\partial^2 F(\Phi)}{\partial \phi_r \partial \phi_s}}_1\leq 4\sinh\left(2\norm{\Phi}_1\right).
            \end{equation}
    \end{lemma}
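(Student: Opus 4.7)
The plan is to deduce \cref{lm:sym_hess_1_norm} from a non-symmetric companion \cref{lm:hess_1_norm}, exactly as the surrounding text suggests, and to treat the non-symmetric estimate as the main technical lemma. The reduction is purely algebraic: by the parametrization of symmetric phase factors, each reduced phase factor $\phi_k$ corresponds to at most two entries of $\Psi$ (a mirror pair $\psi_j,\psi_{d-j}$, or the single middle entry in the even case, whose factor $\half$ we absorb). Consequently $\partial/\partial \phi_k$ equals a sum of at most two derivatives $\partial/\partial \psi_j$ of combined scalar weight at most $2$, and $\partial^2/\partial \phi_r\partial\phi_s$ expands into at most four full second derivatives of combined scalar weight at most $4$. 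Since $\norm{\Psi}_1 = 2\norm{\Phi}_1$ in both parities, applying the non-symmetric bound (of the form $\sinh(\norm{\Psi}_1)$) to each term gives $4\sinh(2\norm{\Phi}_1)$.

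For the general estimate, the plan is to differentiate \cref{eqn:qsp-unitary} directly: $\partial^2 U/\partial \psi_r\partial \psi_s$ inserts $\I Z$ at positions $r$ and $s$ in the alternating product of $e^{\I\psi_j Z}$ and $W(x)$ factors. Switching to $\theta=\arccos x$ gives $W(\cos\theta)=e^{\I\theta X}=P_+ e^{\I\theta}+P_- e^{-\I\theta}$ with $P_\pm=(I\pm X)/2$, so $U(\cos\theta,\Psi)$ becomes a matrix-valued trigonometric polynomial in $\theta$, and the Fourier coefficients of its $(0,0)$-entry are exactly the Chebyshev coefficients we must control. To bound this Wiener-type norm, I would expand each $e^{\I\psi_j Z}$ as its Taylor series $\sum_n (\I\psi_j Z)^n/n!$; each resulting term is a product of $I$'s and $Z$'s interleaved with the $W$-factors, carrying a scalar $\prod_j (\I\psi_j)^{n_j}/n_j!$ whose modulus sums well across the multi-indices $(n_j)$, while the operator contributions ($\norm{\I Z}=\norm{P_\pm}=1$) carry no length-dependent cost.

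The main difficulty will be showing that, after collecting these terms at the $(0,0)$-entry, the scalar contributions collapse to $\sinh(\norm{\Psi}_1)$ rather than the easier $\cosh(\norm{\Psi}_1)$ or $e^{\norm{\Psi}_1}$. The two $\I Z$ insertions from differentiation shift the $Z$-parity in the relevant blocks by one, so only terms of odd total $Z$-count survive at the $(0,0)$-entry; that restricted Taylor sum is precisely the defining series of $\sinh$. A clean execution would factor the scalar and matrix parts independently and then collapse the product over $j$ by an inductive use of $\sinh(a)\cosh(b)+\cosh(a)\sinh(b)=\sinh(a+b)$ into a single $\sinh(\sum_j|\psi_j|)$. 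A subsidiary but essential point is that every bound must be genuinely independent of the effective length of $\Psi$, which is exactly what will later allow $F$ to be extended from $\RR^{\infty}$ to $\ell^1$ in \cref{sec:extension}.
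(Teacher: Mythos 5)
Your proposal matches the paper's proof in essence: the reduction from symmetric to general phase factors (expanding each $\partial/\partial\phi_k$ as a sum of at most two $\partial/\partial\psi_j$'s, noting $\norm{\Psi}_1 = 2\norm{\Phi}_1$, and collecting four second-derivative terms) is exactly the paper's argument, and your $\I Z$-insertion view of $\partial_r\partial_s U$ is literally the paper's identity $\partial_r\partial_s U(x,\Psi)=U(x,\Psi+\tfrac{\pi}{2}e_r+\tfrac{\pi}{2}e_s)$. Your Taylor-series/$P_\pm$ bookkeeping for the non-symmetric estimate is a cosmetic variant of the paper's use of $e^{\I\psi_j Z}=\cos\psi_j\,\bI+\I\sin\psi_j\,Z$ together with the $W^k$ Chebyshev identity: it passes through the intermediate bound with $\sinh\abs{\psi_j},\cosh\abs{\psi_j}$ factors in place of the paper's tighter $\abs{\sin\psi_j},\abs{\cos\psi_j}$, but both collapse to the same $\sinh(\norm{\Psi}_1)$.
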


To prove the previous lemmas, we start from a general setup where the phase factors are not necessarily symmetric.
\begin{lemma}\label{lm:upper_bound_1_norm}
For any full set of phase factors $\Psi:=\left(\psi_0,\psi_1,\cdots,\psi_d\right)$, it holds that 
\begin{equation}
    \norm{\mathcal{F}\left(g(x,\Psi)\right)}_1\leq \sinh\left(\norm{\Psi}_1\right).
\end{equation}
\end{lemma}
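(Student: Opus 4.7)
The plan is to expand $U(x,\Psi)$ into $2^{d+1}$ operator terms, each of which contributes a scalar multiple of a single Chebyshev polynomial to $\langle 0|U|0\rangle$, and then use the parity structure of $\Im$ to restrict to an odd-weight subsum that collapses to $\sinh(\norm{\Psi}_1)$ via the hyperbolic addition formula. Concretely, the first step is to substitute $e^{\I\psi_j Z}=\cos\psi_j\,I+\I\sin\psi_j\,Z$ into the definition of $U(x,\Psi)$ and distribute. Indexing each choice by $\tau\in\{0,1\}^{d+1}$ (with $\tau_j=1$ picking the $\I\sin\psi_j\,Z$ branch) yields
\begin{equation*}
U(x,\Psi)=\sum_{\tau\in\{0,1\}^{d+1}}c_\tau\,V_\tau(x),\quad c_\tau=\I^{|\tau|}\prod_{\tau_j=0}\cos\psi_j\prod_{\tau_j=1}\sin\psi_j,
\end{equation*}
where $V_\tau(x)=\sigma_0\,W(x)\,\sigma_1\,W(x)\cdots W(x)\,\sigma_d$ with $\sigma_j\in\{I,Z\}$ according to $\tau_j$.

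The second step uses $ZW(x)Z=W(x)^{-1}$ (which follows from $ZXZ=-X$) to push every $Z$ in the word to the right. Straightforward bookkeeping---each $Z$ flips the exponent of every $W(x)$-block to its right when moved---shows $V_\tau(x)=W(x)^{B_\tau}Z^{|\tau|}$ for an integer $B_\tau$ equal to the alternating signed sum of the $W$-block lengths determined by $\tau$. Since $Z\ket{0}=\ket{0}$ and $\langle 0|W(x)^{B}|0\rangle=\cos(B\arccos x)=T_{|B|}(x)$, this produces the explicit representation
\begin{equation*}
\langle 0|U(x,\Psi)|0\rangle=\sum_{\tau}c_\tau\,T_{|B_\tau|}(x).
\end{equation*}
Because $c_\tau$ is real for $|\tau|$ even and purely imaginary for $|\tau|$ odd, only the odd-weight terms survive when extracting the imaginary part, and grouping by $|B_\tau|$ together with the triangle inequality on each Chebyshev coefficient gives
\begin{equation*}
\norm{\mathcal{F}(g(x,\Psi))}_1\le \sum_{\tau:\,|\tau|\text{ odd}}\prod_{\tau_j=0}|\cos\psi_j|\prod_{\tau_j=1}|\sin\psi_j|.
\end{equation*}

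Finally, the termwise bounds $|\cos\psi|\le\cosh(|\psi|)$ and $|\sin\psi|\le\sinh(|\psi|)$ together with the $(d+1)$-variable addition formula
\begin{equation*}
\sinh\!\Bigl(\textstyle\sum_{j=0}^{d}a_j\Bigr)=\sum_{\tau:\,|\tau|\text{ odd}}\prod_{\tau_j=0}\cosh(a_j)\prod_{\tau_j=1}\sinh(a_j),
\end{equation*}
applied with $a_j=|\psi_j|$, close the estimate at $\sinh(\norm{\Psi}_1)$. The only step demanding genuine care is the index bookkeeping in step two: many distinct $\tau$'s collapse to the same value of $|B_\tau|$, so the coefficient of $T_n$ in $g$ is a priori a signed sum over a fibre, and one might worry that cancellation invalidates the sum-of-absolute-values estimate. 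However, since we only need an $\ell^1$ bound rather than an identification of individual coefficients, the triangle inequality bypasses any cancellation harmlessly; and the odd-$|\tau|$ restriction forced by $\Im$ is precisely what matches the odd-weight slice of the $\cosh/\sinh$ addition formula, yielding the sharp constant $\sinh(\norm{\Psi}_1)$ rather than a crude $e^{\norm{\Psi}_1}$.
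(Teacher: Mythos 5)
Your proof is correct and follows essentially the same approach as the paper: expand $e^{\I\psi_j Z}$ into the $\cos$/$\sin$ branches, push the $Z$'s through the $W$-blocks, recognize each term as a scaled Chebyshev polynomial, keep only odd-weight terms for the imaginary part, and apply the triangle inequality. The only difference is cosmetic in the last step --- the paper drops the $|\cos\psi_k|$ factors via $|\cos|\le 1$ and then bounds the remaining odd-weight sum by $\sinh(\sum_k|\sin\psi_k|)\le\sinh(\norm{\Psi}_1)$ with a counting inequality, whereas you bound $|\cos\psi_j|\le\cosh|\psi_j|$ and $|\sin\psi_j|\le\sinh|\psi_j|$ termwise and close the estimate with the exact multivariate $\sinh$ addition identity; both routes yield the same constant.
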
 

\begin{proof}
Let $\Psi=\left(\psi_0,\psi_1,\cdots,\psi_d\right)$ be a full set of phase factors. The corresponding QSP matrix can be expanded as
\begin{equation}
    \begin{split}
        U(x,\Psi):&=e^{\I \psi_0 Z}\prod_{k=1}^d \left(W(x) e^{\I \psi_k Z}\right)\\
        &=\left(\cos( \psi_0)\bI+\I \sin(\psi_0) Z\right)\prod_{k=1}^d \left(W(x) \left(\cos( \psi_k)\bI+\I \sin(\psi_k) Z\right)\right)\\
        &=\left(\prod_{k=0}^d \cos(\psi_k)\right) \left(\bI+\I t_0 Z\right) \prod_{j=1}^d\left( W(x)\left(\bI+\I t_j Z\right)\right),
    \end{split}
\end{equation}
where $t_j:=\tan{\psi_j}$. Notice that $W(x)Z=Z W(x)^{-1}$ and
\begin{equation}\label{eqn:W_power}
    W(x)^k=\begin{pmatrix}
T_k(x) &\I \sqrt{1-x^2} U_{k-1}(x)\\
\I \sqrt{1-x^2} U_{k-1}(x) & T_k(x)
\end{pmatrix},
\end{equation}
where $T_k(x)$ and $U_k(x)$ are Chebyshev polynomials of the first and second kind respectively. Then,
\begin{equation}
    \begin{split}
        U(x,\Psi)&=\left(\prod_{k=0}^d \cos(\psi_k)\right) \left(\bI+\I t_0 Z\right)\prod_{j=1}^d \left(W(x) \left(\bI+\I t_j Z\right)\right)\\
        &=\left(\prod_{k=0}^d \cos(\psi_k)\right)\Big(W^d(x)+ \I  \sum_{j_1=0}^d t_{j_1} Z W^{d-2j_1}(x)-\sum_{j_1<j_2} t_{j_1} t_{j_2} W^{d-2(j_2-j_1)}(x) \\
        & \quad-\I \sum_{j_1<j_2<j_3} t_{j_1} t_{j_2} t_{j_3} Z W^{d-2(j_3-j_2+j_1)}(x)+...\Big).
    \end{split}
\end{equation}
Note that each term is a matrix whose upper left element is of the following form 
\begin{equation*}
    \I^l\cdot t_{j_1} t_{j_2}\cdots t_{j_l} T_k(x),
\end{equation*}
where $j_1<j_2<\cdots<j_l $ for some $l $, and $k = d-2\sum_{i=1}^l(-1)^{l-i}j_i$. 

When considering the Chebyshev coefficients of the imaginary part of the upper left element of $U(x,\Psi)$, only those terms with odd number of $t_j$'s matter. Then we have the estimate
\begin{equation}\label{eq:F_1_norm}
    \begin{split}
        \norm{\mathcal{F}\left(g(x,\Psi)\right)}_1&\leq \prod_{k=0}^d \abs{\cos(\psi_k)}\sum_{l \text{ is odd}}\sum_{j_1<j_2<\cdots<j_l} \prod_{i=1}^{l}\abs{t_{j_i}}\\
        &=\sum_{l \text{ is odd}}\sum_{j_1<j_2<\cdots<j_l} \prod_{i=1}^{l}\abs{\sin(\psi_{j_i})}\prod_{k\ne j_1,\cdots,j_l} \abs{\cos(\psi_k)}\\
        &\leq \sum_{l \text{ is odd}}\sum_{j_1<j_2<\cdots<j_l} \prod_{i=1}^{l}\abs{\sin(\psi_{j_i})}\\
        &\leq \sinh\left(\sum_{k=0}^d \abs{\sin(\psi_k)}\right)\\
        &\leq \sinh\left(\norm{\Psi}_1\right). 
    \end{split}
\end{equation}
We remark that the last inequality holds because $\sinh(x)$ is monotonic increasing, and the penultimate equality is due to the following observation:
\begin{equation}
    \begin{split}
        & \sinh\left(\sum_{k=0}^d \abs{\sin(\psi_k)}\right)=\sum_{l \text{ is odd}} \frac{1}{l!}\left(\sum_{k=0}^d \abs{\sin(\psi_k)}\right)^l =\sum_{l \text{ is odd}} \frac{1}{l!}\sum_{j_1,j_2,\cdots, j_l} \prod_{i=1}^{l}\abs{\sin(\psi_{j_i})}\\
        &\geq \sum_{l \text{ is odd}} \frac{1}{l!}\sum_{\substack{j_1,\cdots, j_l \\\text{ are all different}}} \prod_{i=1}^{l}\abs{\sin(\psi_{j_i})} =  \sum_{l \text{ is odd}} \sum_{j_1<j_2<\cdots<j_l} \prod_{i=1}^{l}\abs{\sin(\psi_{j_i})}.
    \end{split}
\end{equation}
The proof is completed.
\end{proof}

\begin{corollary}\label{lm:hess_1_norm}
For any full set of phase factors $\Psi:=\left(\psi_0,\psi_1,\cdots,\psi_d\right)$ and any $r,s \in [d+1]$, it holds that 
\begin{equation}
    \norm{\mathcal{F}\left(\frac{\partial^2 g(x,\Psi)}{\partial \psi_r\partial \psi_s}\right)}_1\leq \sinh(\norm{\Psi}_1).
\end{equation}
\end{corollary}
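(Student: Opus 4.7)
The plan is to mimic the binary-expansion argument in the proof of \cref{lm:upper_bound_1_norm}, exploiting the fact that differentiating $e^{\I \psi_j Z}=\cos\psi_j\,\bI+\I\sin\psi_j\,Z$ with respect to $\psi_j$ produces $-\sin\psi_j\,\bI+\I\cos\psi_j\,Z$, which merely swaps the roles of cosine and sine at position $j$ (up to a harmless sign) while preserving the two-term $\{\bI,Z\}$ structure. Hence $\partial^2 U/(\partial\psi_r\partial\psi_s)$ admits an expansion of the same combinatorial shape as $U$ itself, and the same $1$-norm estimates should be applicable.

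First I would handle the case $r\neq s$. Write $a_j=\cos\psi_j$, $b_j=\sin\psi_j$, and expand
\begin{equation*}
\frac{\partial^2 U(x,\Psi)}{\partial\psi_r\partial\psi_s}=\sum_{S\subseteq[d+1]}\alpha_S(\Psi)\,M_S(x),
\end{equation*}
where $S$ records the positions at which the factor takes its $Z$-component, $M_S(x)$ is the interleaved matrix product obtained by placing $Z$ at positions $j\in S$ and $\bI$ elsewhere, and the scalar $\alpha_S(\Psi)$ carries $\I^{|S|}$ together with either an $a_j$ or a $b_j$ (with an appropriate sign) at each position, with the identification between $a/b$ and $S$-membership \emph{swapped} at the two special positions $r$ and $s$. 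Using $W(x)Z=ZW(x)^{-1}$ and \cref{eqn:W_power}, each $M_S(x)$ reduces to $\pm Z^{|S|}W(x)^{k(S)}$, whose upper-left entry equals $T_{k(S)}(x)$; only $|S|$ odd contributes to the imaginary part.

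The key combinatorial step is to reindex by $T$, the set of positions whose factor contributes a $|\sin|$-term in absolute value: for $j\notin\{r,s\}$ no swap occurred, so $j\in T\iff j\in S$, whereas for $j\in\{r,s\}$ the swap inverts this relation, so $j\in T\iff j\notin S$. Consequently $T=S\,\triangle\,\{r,s\}$, and since $|\{r,s\}|=2$ the cardinalities of $T$ and $S$ share the same parity, so $|T|$ is odd. The map $S\mapsto T$ restricted to odd-cardinality subsets is a bijection, and the absolute value of each contributing term equals $\prod_{j\in T}|\sin\psi_j|\prod_{k\notin T}|\cos\psi_k|$. Summing over $|T|$ odd then yields exactly the same bound derived in the proof of \cref{lm:upper_bound_1_norm}, namely $\sinh(\norm{\Psi}_1)$.

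For the diagonal case $r=s$, direct differentiation gives $\partial^2(a_r\bI+\I b_r Z)/\partial\psi_r^2=-(a_r\bI+\I b_r Z)$, so $\partial^2 g/\partial\psi_r^2=-g(x,\Psi)$ and the bound follows immediately from \cref{lm:upper_bound_1_norm}. The only subtle point in the whole argument is the bookkeeping around the symmetric-difference bijection $S\mapsto S\,\triangle\,\{r,s\}$ and the verification that parity is preserved; no new analytic estimate is required beyond those already developed for \cref{lm:upper_bound_1_norm}.
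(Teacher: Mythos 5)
Your proof is correct and takes essentially the same route as the paper. The only cosmetic difference is in how the reduction is organized: the paper first invokes the identity $\partial_r\partial_s U(x,\Psi)=U(x,\Psi+\tfrac{\pi}{2}e_r+\tfrac{\pi}{2}e_s)$ so that the expansion from \cref{lm:upper_bound_1_norm} applies verbatim to $\tilde\Psi$, and then closes the interchange-invariance step via the algebraic identity $\tfrac12\prod_k(\abs{\cos\tilde\psi_k}+\abs{\sin\tilde\psi_k})-\tfrac12\prod_k(\abs{\cos\tilde\psi_k}-\abs{\sin\tilde\psi_k})$, whereas you expand $\partial_r\partial_s U$ directly and argue the invariance through the parity-preserving bijection $S\mapsto S\,\triangle\,\{r,s\}$ on odd-cardinality index sets; both are correct realizations of the same observation that swapping $\sin\leftrightarrow\cos$ at two positions leaves the odd-parity sum unchanged.
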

\begin{proof}
If $r=s$, then $\partial_r^2 U(x,\Psi)=-U(x,\Psi)$, and the desired result can be obtained by directly applying \cref{lm:upper_bound_1_norm}. Then we consider the case $r\ne s$. Note that
\begin{equation}
    \partial_r\partial_s U(x,\Psi)=U\left(x,\Psi+\frac{\pi}{2}e_r+\frac{\pi}{2}e_s\right),
\end{equation}
where $e_r=(0,\ldots,0,1,0,\ldots,0)$ denotes the $r$-th standard unit vector. Then
\begin{equation}
\begin{split}
    \frac{\partial^2 g(x,\Psi)}{\partial \psi_r\partial \psi_s}&=\frac{\partial^2 \Im[\langle 0| U(x,\Psi)|0\rangle]}{\partial \psi_r\partial \psi_s}=\Im[\langle 0| \partial_r\partial_s U(x,\Psi)|0\rangle]\\
    &=\Im[\langle 0| U(x,\Psi+\frac{\pi}{2}e_r+\frac{\pi}{2}e_s)|0\rangle]=g(x,\Psi+\frac{\pi}{2}e_r+\frac{\pi}{2}e_s).
\end{split}
\end{equation}
To simplify the notation, let $\tilde{\Psi} = \Psi+\frac{\pi}{2}e_r+\frac{\pi}{2}e_s$, and $\tilde{\psi}_k$ be the components of $\tilde{\Psi}$. Similar to the proof of \cref{lm:upper_bound_1_norm}, we have
\begin{equation}
    \begin{split}
        &\norm{\frac{\partial^2 g(x,\Psi)}{\partial \psi_r\partial \psi_s}}_1= \norm{g(x,\tilde{\Psi})}_1\\
        &\leq\sum_{l \text{ is odd}}\sum_{j_1<j_2<\cdots<j_l} \prod_{i=1}^{l}\abs{\sin(\tilde{\psi}_{j_i})}\prod_{k\ne j_1,\cdots,j_l} \abs{\cos(\tilde{\psi}_k)}\\
        &= \frac{1}{2}\prod_{k}\left(\abs{\cos(\tilde{\psi}_k)}+\abs{\sin(\tilde{\psi}_k)}\right)-\frac{1}{2}\prod_{k}\left(\abs{\cos(\tilde{\psi}_k)}-\abs{\sin(\tilde{\psi}_k)}\right)\\
        &\stackrel{(*)}{=}\sum_{l \text{ is odd}}\sum_{j_1<j_2<\cdots<j_l} \prod_{i=1}^{l}\abs{\sin(\psi_{j_i})}\prod_{k\ne j_1,\cdots,j_l} \abs{\cos(\psi_k)}\\
        &\leq  \sum_{l \text{ is odd}}\sum_{j_1<j_2<\cdots<j_l} \prod_{i=1}^{l}\abs{\sin(\psi_{j_i})}\\
        &\leq\sinh\left(\norm{\Psi}_1\right).
    \end{split}
\end{equation}
To see that equality $(*)$ holds, we note that $|\sin(\tilde{\psi}_k)|= |\cos({\psi}_k)|$ and $|\cos(\tilde{\psi}_k)| = |\sin({\psi}_k)|$ for $k=r\text{ or }s$ which interchanges two pairs of sine and cosine. Because this interchange operation leaves the production invariant, we can directly replace $\tilde \Psi$ by $\Psi$ in the expression. The proof is completed.
\end{proof}
        
        \cref{lm:sym_upper_bound_1_norm} is a direct application of \cref{lm:upper_bound_1_norm}. Now we use \cref{lm:hess_1_norm} to prove \cref{lm:sym_hess_1_norm}.
        \begin{proof}[Proof of \cref{lm:sym_hess_1_norm}]
            Choose $n\geq \max(r,s)$ such that all elements of $\Phi$ with index  $> n$ are zero. Then we may view $\Phi$ as a vector of length $n+1$, i.e., $(\phi_0,\cdots,\phi_n)$. Since we only consider the even case, we let $\Psi:=(\psi_0,\psi_1,\cdots,\psi_{2n})$ be the corresponding full set of phase factors and then
            \begin{equation}
                \psi_k=\begin{cases}
                \phi_{n-k} & k<n,\\
                2\phi_0 &k=n,\\
                \phi_{k-n} & k>n.
                \end{cases}
            \end{equation}
            For first-order derivative, when $k>0$,
            \begin{equation}
                \frac{\partial g(x,\Phi)}{\partial \phi_k}=\frac{\partial g(x,\Psi)}{\partial \psi_{n+k}}+\frac{\partial g(x,\Psi)}{\partial \psi_{n-k}},
            \end{equation} 
            and when $k=0$,
            \begin{equation}
                \frac{\partial g(x,\Phi)}{\partial \phi_k}=2\frac{\partial g(x,\Psi)}{\partial \psi_n}=\frac{\partial g(x,\Psi)}{\partial \psi_{n+k}}+\frac{\partial g(x,\Psi)}{\partial \psi_{n-k}}.
            \end{equation}
            Similarly, the second-order derivative is
            \begin{equation}
                \frac{\partial^2 g(x,\Phi)}{\partial \phi_r \partial \phi_s}=\frac{\partial^2 g(x,\Psi)}{\partial \psi_{n+r} \partial \psi_{n+s}}+\frac{\partial^2 g(x,\Psi)}{\partial \psi_{n-r} \partial \psi_{n+s}}+\frac{\partial^2 g(x,\Psi)}{\partial \psi_{n+r} \partial \psi_{n-s}}+\frac{\partial^2 g(x,\Psi)}{\partial \psi_{n-r} \partial \psi_{n-s}}.
            \end{equation}
            Invoking the triangle inequality for 1-norm and applying \cref{lm:hess_1_norm}, the results follow
            \begin{equation}
            \begin{split}
                \norm{\frac{\partial^2 F(\Phi)}{\partial \phi_r \partial \phi_s}}_1&=\norm{\mathcal{F}\left(\frac{\partial^2 g(x,\Phi)}{\partial \phi_r \partial \phi_s}\right)}_1\\
                &\leq\norm{\mathcal{F}\left(\frac{\partial^2 g(x,\Psi)}{\partial \psi_{n+r} \partial \psi_{n+s}}\right)}_1+\norm{\mathcal{F}\left(\frac{\partial^2 g(x,\Psi)}{\partial \psi_{n-r} \partial \psi_{n+s}}\right)}_1\\
                &\quad+\norm{\mathcal{F}\left(\frac{\partial^2 g(x,\Psi)}{\partial \psi_{n+r} \partial \psi_{n-s}}\right)}_1+\norm{\mathcal{F}\left(\frac{\partial^2 g(x,\Psi)}{\partial \psi_{n-r} \partial \psi_{n-s}}\right)}_1\\
                &\leq 4\sinh\left(2\norm{\Phi}_1\right).
            \end{split}
            \end{equation}
        \end{proof}

\subsection{Matrix $1$-norm estimates of $DF$}\label{sec:lipschitz}

The following lemma characterizes the Lipschitz continuity of $DF$.
\begin{lemma}[Lipschitz continuity of $DF$]\label{lm:DF_lipschitz}
        For any $\delta>0$ and any $\Phi^{(j)}\in \RR^{\infty}$ with $\norm{\Phi^{(j)}}_1\le \delta$, $j=1,2$, it holds that
        \begin{equation}
                \norm{DF(\Phi^{(1)}) - DF(\Phi^{(2)})}_1\le C_2(\delta) \norm{\Phi^{(1)}-\Phi^{(2)}}_1,
        \end{equation}
        where $C_2(\delta) = 4\sinh(2\delta)$. 
\end{lemma}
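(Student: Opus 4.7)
The plan is to derive the Lipschitz bound from the uniform bound on the Hessian established in \cref{lm:sym_hess_1_norm}, via a standard ``fundamental theorem of calculus along a line segment'' argument.

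First, I would fix an arbitrary $v \in \RR^{\infty}$ and consider the function $t \mapsto DF(\Phi(t))\, v$ along the segment $\Phi(t) := (1-t)\Phi^{(2)} + t\Phi^{(1)}$, noting that by convexity $\norm{\Phi(t)}_1 \le \delta$ for all $t \in [0,1]$. Since only finitely many coordinates of $\Phi^{(1)}$ and $\Phi^{(2)}$ are nonzero, $\Phi(t)$ lives in a finite-dimensional slice and $F$ is smooth there, so by the fundamental theorem of calculus
\begin{equation*}
\bigl(DF(\Phi^{(1)}) - DF(\Phi^{(2)})\bigr)v \;=\; \int_0^1 \sum_{r,s} \frac{\partial^2 F(\Phi(t))}{\partial \phi_r \partial \phi_s}\, \bigl(\Phi^{(1)} - \Phi^{(2)}\bigr)_r \, v_s \, dt,
\end{equation*}
where for each $t$ the sum over $(r,s)$ has only finitely many nonzero terms, so there is no convergence issue.

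Next I would take the vector $1$-norm inside the integral and the sum via the triangle inequality and apply \cref{lm:sym_hess_1_norm} entrywise:
\begin{equation*}
\norm{\bigl(DF(\Phi^{(1)}) - DF(\Phi^{(2)})\bigr)v}_1 \;\le\; \int_0^1 \sum_{r,s} \norm{\frac{\partial^2 F(\Phi(t))}{\partial \phi_r \partial \phi_s}}_1 \bigl|(\Phi^{(1)} - \Phi^{(2)})_r\bigr| \, |v_s| \, dt.
\end{equation*}
Since $\norm{\partial^2 F(\Phi(t)) / \partial \phi_r \partial \phi_s}_1 \le 4\sinh(2\norm{\Phi(t)}_1) \le 4\sinh(2\delta)$ uniformly in $t$, $r$, $s$, the right-hand side factorizes and is bounded by $4\sinh(2\delta)\, \norm{\Phi^{(1)}-\Phi^{(2)}}_1\, \norm{v}_1$. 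Taking the supremum over $v$ with $\norm{v}_1 = 1$ yields the desired matrix $1$-norm bound.

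The only nontrivial point is justifying that this suffices to bound the operator $1$-norm. Because both $\Phi^{(1)}, \Phi^{(2)} \in \RR^{\infty}$ have finite effective length and the bound $4\sinh(2\delta)$ is entirely independent of this length, one may restrict attention to test vectors $v \in \RR^{\infty}$ (which form a $1$-norm-dense subset of $\ell^1$) and the estimate is preserved. Thus no genuine obstacle appears; the whole argument is a direct reduction to the already-established uniform Hessian bound.
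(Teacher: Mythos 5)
Your argument is correct and is essentially the paper's proof in a slightly different packaging: the paper first reduces the matrix $1$-norm to the maximum column $1$-norm and applies the mean value inequality to each column $\partial F/\partial\phi_k$, while you write the difference as an integral against an arbitrary test vector $v$ and expand via the double sum, but both reduce to the exact same uniform Hessian bound from the lemma on $\norm{\partial^2 F/\partial\phi_r\partial\phi_s}_1$ and yield $C_2(\delta)=4\sinh(2\delta)$. The closing remark about restricting to $v\in\RR^\infty$ and passing to $\ell^1$ by density is sound and matches the role of the paper's $\max_k$ reduction.
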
     

\begin{proof}
Using the definition of the 1-norm of infinite dimensional matrix, one has
    \begin{equation}
        \begin{split}
            \norm{DF(\Phi^{(1)}) - DF(\Phi^{(2)})}_1&=\max_{\norm{v}_1=1} \norm{\left(DF(\Phi^{(1)}) - DF(\Phi^{(2)})\right)v}_1\\
            &=\max_{\norm{v}_1=1}\norm{\sum_{k=0}^{\infty}\left(\frac{\partial F(\Phi^{(1)})}{\partial \phi_k}-\frac{\partial F(\Phi^{(2)})}{\partial \phi_k}\right)v_k }_1\\
            &\le \max_{\norm{v}_1=1} \sum_{k=0}^{\infty} \norm{\frac{\partial F(\Phi^{(1)})}{\partial \phi_k}-\frac{\partial F(\Phi^{(2)})}{\partial \phi_k}}_1 \abs{v_k}\\
            &= \max_{k} \norm{\frac{\partial F(\Phi^{(1)})}{\partial \phi_k}-\frac{\partial F(\Phi^{(2)})}{\partial \phi_k}}_1.
        \end{split}
    \end{equation}
    For fixed $k$, $\Phi^{(1)}$ and $\Phi^{(2)}$, applying mean value inequality to the function
    \begin{equation*}
        y(t):=\frac{\partial F}{\partial \phi_k}\left(\Phi^{(1)}+t\left(\Phi^{(2)}-\Phi^{(1)}\right)\right),
    \end{equation*}
    we have
    \begin{equation}
        \begin{split}
                \norm{\frac{\partial F(\Phi^{(1)})}{\partial \phi_k}-\frac{\partial F(\Phi^{(2)})}{\partial \phi_k}}_1
                &\le\max_{\substack{\Phi' = (1-t)\Phi^{(1)}+t\Phi^{(2)}\\0\le t\le 1}}\norm{\nabla\frac{\partial F(\Phi')}{\partial \phi_k}\cdot\left(\Phi^{(1)}-\Phi^{(2)}\right)}_1\\
                &=\max_{\substack{\Phi' = (1-t)\Phi^{(1)}+t\Phi^{(2)}\\0\le t\le 1}}\norm{\sum_{l=0}^{n} \frac{\partial^2 F(\Phi')}{\partial\phi_l\partial \phi_k} \left(\Phi^{(1)}_l-\Phi^{(2)}_l\right)}_1\\
                &\le \max_{\substack{\Phi' = (1-t)\Phi^{(1)}+t\Phi^{(2)}\\0\le t\le 1}} \sum_{l=0}^n \norm{\frac{\partial^2 F(\Phi')}{\partial\phi_l\partial \phi_k}}_1 \abs{\Phi^{(1)}_l-\Phi^{(2)}_l}\\
                &\le C_2(\delta)\sum_{l=0}^{n} \abs{\Phi^{(1)}_l-\Phi^{(2)}_l}\\
                &=C_2(\delta)\norm{\Phi^{(1)}-\Phi^{(2)}}_1
        \end{split}
    \end{equation}
    where $n$ is the effective length of $\Phi^{(1)}-\Phi^{(2)}$. The last inequality follows \cref{lm:sym_hess_1_norm}, and notice that $\norm{\Phi'}_1 = \norm{(1-t)\Phi^{(1)}+t\Phi^{(2)}}_1$ is still bounded by $\delta$. Since $k$ can be arbitrary, the proof is completed.
\end{proof}     

\begin{lemma}\label{lm:DF0}
    $DF(0)=2\bI$, where $0\in\RR^{\infty}$ is the vector with all elements equal to zero, and $\bI$ is the identity matrix of infinite dimension.
\end{lemma}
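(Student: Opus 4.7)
The plan is to directly compute the columns of $DF(0)$ via the chain of identifications
\[
\frac{\partial F(\Phi)}{\partial \phi_k}\bigg|_{\Phi=0} = \mathcal{F}\!\left(\frac{\partial g(x,\Phi)}{\partial \phi_k}\bigg|_{\Phi=0}\right),
\]
showing that each column equals $2e_k$. Since we are in the even case, I would set $d=2n$, $\wt{d}=n+1$, and use the relations between the reduced phase factors $\Phi = (\phi_0,\ldots,\phi_n,0,\ldots)$ and the full phase factors $\Psi$: namely $\psi_n = 2\phi_0$ and $\psi_{n\pm k} = \phi_k$ for $k \ge 1$. So evaluating at $\Phi = 0$ corresponds to evaluating at $\Psi = 0$, where $U(x,0) = W(x)^d$ by the product structure in \eqref{eqn:qsp-unitary}.

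First I would compute $\partial_{\psi_j} U(x,\Psi)|_{\Psi = 0}$ by product rule: the $i Z$ factor arising from differentiating $e^{i\psi_j Z}$ is inserted between $W(x)^j$ and $W(x)^{d-j}$, yielding $W(x)^j \cdot iZ \cdot W(x)^{d-j}$. Then, using the commutation relation $ZW(x) = W(x)^{-1}Z$ (which follows from $W(x) = e^{i\arccos(x) X}$ and $ZX = -XZ$), this simplifies to $i\, W(x)^{2j-d} Z$. Extracting the imaginary part of the $(0,0)$-entry via the explicit form \eqref{eqn:W_power} of $W(x)^m$ (together with $T_{-m}=T_m$) gives
\[
\frac{\partial g(x,\Psi)}{\partial \psi_j}\bigg|_{\Psi=0} = T_{|2j-d|}(x).
\]

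Next I would apply the chain rule to convert to reduced phase factors. For $k\ge 1$, $\partial_{\phi_k} g(x,\Phi)|_0 = \partial_{\psi_{n-k}} g|_0 + \partial_{\psi_{n+k}} g|_0 = T_{2k}(x) + T_{2k}(x) = 2T_{2k}(x)$, where the two indices $n\pm k$ yield the same $|2j-d| = 2k$. For $k=0$, the factor of $2$ in $\psi_n = 2\phi_0$ produces $\partial_{\phi_0} g|_0 = 2\partial_{\psi_n} g|_0 = 2T_0(x) = 2$. So in every case $\partial_{\phi_k} g(x,\Phi)|_0 = 2T_{2k}(x)$. Finally, $\mathcal{F}(T_{2k}) = e_k$ by the definition of $\mathcal{F}$ for even functions, giving $\partial_{\phi_k} F(0) = 2e_k$, i.e.\ the $k$-th column of $DF(0)$ is $2e_k$. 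Hence $DF(0) = 2\bI$.

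There is no genuine obstacle; the proof is a direct computation. The only bookkeeping care required is (i) the sign/duplication behavior when combining contributions from $\psi_{n-k}$ and $\psi_{n+k}$, and (ii) the special factor $2$ hidden in $\psi_n = 2\phi_0$ for the even case — both conspire to give a uniform answer $2T_{2k}(x)$, which is what makes $DF(0)$ a clean scalar multiple of the identity. The odd case is analogous, with $\psi_{n-k} = \phi_k$ and $\psi_{n+k+1} = \phi_k$ yielding the same final expression.
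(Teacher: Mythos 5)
Your proof is correct. The route is close in spirit to the paper's—both reduce to showing $\partial_{\phi_k} g(x,0) = 2T_{2k}(x)$ by a direct calculation—but the bookkeeping is organized differently, and the difference is worth noting. The paper exploits \cref{lm:pad_phi_0}: it fixes the effective length to be exactly $k+1$, so that $\phi_k$ sits at both outermost positions of the symmetric circuit, $U = e^{\I\phi_k Z}(\cdots)e^{\I\phi_k Z}$. Differentiating then just deposits $\I Z$ at the two ends, and since $Z\ket{0}=\ket{0}$, one immediately gets $\I Z\, W^{2k} + W^{2k}\,\I Z \mapsto 2\Re\!\braket{0|W^{2k}|0} = 2T_{2k}$, with no need to move $Z$ through any $W$'s. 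You instead pick a generic even length $d=2n$, differentiate the full parameterization at an interior index $j$, and use $ZW = W^{-1}Z$ to collapse $W^j\,\I Z\,W^{d-j}$ to $\I\,W^{2j-d}Z$, then convert to reduced phase factors via the chain rule (with the $\psi_{n\pm k}$ duplication and the factor $2$ in $\psi_n = 2\phi_0$ conspiring to give $2T_{2k}$ uniformly). Both arguments are equally elementary; the paper's choice of effective length just makes the commutation step unnecessary, while yours makes the relation between full and reduced derivatives more explicit. One small point you leave implicit: the independence of the answer from the chosen padding length $n$ is what makes $DF(0)$ well defined on $\RR^\infty$, and that is exactly what \cref{lm:pad_phi_0} guarantees — it would be cleaner to cite it.
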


\begin{proof}
    It is equivalent to show that $\frac{\partial F(0)}{\partial \phi_k}= 2e_k$, where $e_k\in\RR^{\infty}$ is the vector with all components equal to $0$ except for the $k$-th component which is equal to $1$. Recall \cref{eqn:DF_column} and notice that $\mathcal{F}(T_{2k}(x))= e_k$, we can prove that $\frac{\partial g(x,0)}{\partial \phi_k}= 2T_{2k}(x)$ instead. Invoking \cref{lm:pad_phi_0}, we only need to show that $\frac{\partial g(x,\Phi)}{\partial \phi_k}= 2T_{2k}(x)$, where $\Phi=(0,\cdots,0)\in \RR^{k+1}$. We know that $U(x,\Phi)= W^{2k}(x)$ as well as \cref{eqn:W_power}. Then direct computation gives that
    \begin{equation}
    \begin{split}
        \frac{\partial g(x,\Phi)}{\partial \phi_k} &= \Im \left(\braket{0|\frac{\partial U(x,\Phi)}{\partial \phi_k}|0}\right)=\Im \left(\braket{0|\I Z U(x,\Phi) + U(x,\Phi)\I Z|0}\right)\\
        &= 2\Re \left(\braket{0|U(x,\Phi)|0}\right)= 2\Re \left(\braket{0| W^{2k}(x)|0}\right)= 2T_{2k}(x).
    \end{split}
    \end{equation}
\end{proof}
By choosing $\Phi^{(2)}$ to be $0$ in \cref{lm:DF_lipschitz}, we obtain a rough estimate about $DF(\Phi)$,
\begin{equation*}
    \norm{DF(\Phi)-2\bI}_1\le C_2(\norm{\Phi}_1) \norm{\Phi}_1.
\end{equation*}
However, this estimate can be refined, which is given in the following lemma.
    \begin{lemma}\label{lm:DF-2I_estimate}
                Define
                \begin{equation}
                        h(x) := \int_0^x C_2(\delta)\rd \delta = 2\cosh(2x)-2.
                \end{equation} 
                For any $\Phi\in \RR^{\infty}$, 
                \begin{equation}
                        \norm{DF(\Phi)- 2\bI}_1\le h(\norm{\Phi}_1).
                \end{equation}
        \end{lemma}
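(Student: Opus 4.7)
\medskip
\noindent\textbf{Proof proposal for \cref{lm:DF-2I_estimate}.}

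The plan is to parameterize a straight line from $0$ to $\Phi$ in $\RR^{\infty}$, apply the Lipschitz estimate of \cref{lm:DF_lipschitz} on infinitesimal segments, sum telescopically along a partition, and pass to an integral in the refinement limit. Concretely, for $t\in[0,1]$ let $\Phi(t):=t\Phi$, so that $\norm{\Phi(t)}_1=t\norm{\Phi}_1$. By \cref{lm:DF0} we have $DF(0)=2\bI$, so it suffices to control $\norm{DF(\Phi)-DF(0)}_1$.

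Given any partition $0=t_0<t_1<\cdots<t_N=1$, the triangle inequality for the matrix $1$-norm gives
\begin{equation*}
\norm{DF(\Phi)-DF(0)}_1 \;\le\; \sum_{i=1}^{N} \norm{DF(\Phi(t_i))-DF(\Phi(t_{i-1}))}_1.
\end{equation*}
Both $\Phi(t_{i-1})$ and $\Phi(t_i)$ have $\ell^1$-norm at most $t_i\norm{\Phi}_1$, so \cref{lm:DF_lipschitz} with $\delta=t_i\norm{\Phi}_1$ applied to the pair $(\Phi(t_i),\Phi(t_{i-1}))$ yields
\begin{equation*}
\norm{DF(\Phi(t_i))-DF(\Phi(t_{i-1}))}_1 \;\le\; C_2(t_i\norm{\Phi}_1)\,(t_i-t_{i-1})\norm{\Phi}_1.
\end{equation*}
Summing, this bounds $\norm{DF(\Phi)-DF(0)}_1$ by a Riemann sum of the continuous integrand $t\mapsto C_2(t\norm{\Phi}_1)\norm{\Phi}_1$ on $[0,1]$.

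Taking the mesh of the partition to zero, the Riemann sum converges to
\begin{equation*}
\int_0^1 C_2(t\norm{\Phi}_1)\,\norm{\Phi}_1\,\rd t \;=\; \int_0^{\norm{\Phi}_1} C_2(\delta)\,\rd \delta \;=\; h(\norm{\Phi}_1),
\end{equation*}
by the change of variables $\delta=t\norm{\Phi}_1$ and the definition of $h$. This gives the desired bound $\norm{DF(\Phi)-2\bI}_1\le h(\norm{\Phi}_1)$.

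The only step needing a moment of care is justifying that the bound from the Riemann sum indeed passes to the integral, but since $C_2(\delta)=4\sinh(2\delta)$ is continuous and the upper bounds $C_2(t_i\norm{\Phi}_1)(t_i-t_{i-1})\norm{\Phi}_1$ are standard right-endpoint Riemann sums for a continuous function, their limit is exactly $\int_0^{\norm{\Phi}_1} C_2(\delta)\,\rd\delta$, so no technical obstacle arises. The argument never uses the effective length of $\Phi$, which will be important for the $\ell^1$ extension in \cref{sec:extension}.
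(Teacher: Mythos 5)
Your proposal is correct and takes essentially the same route as the paper: partition the segment from $0$ to $\Phi$, apply the Lipschitz estimate of \cref{lm:DF_lipschitz} on each subinterval with $\delta = t_i\norm{\Phi}_1$, recognize the resulting bound as a Riemann sum for $\int_0^{\norm{\Phi}_1} C_2(\delta)\,\rd\delta$, and pass to the limit. The paper phrases this without the explicit $\Phi(t)$ notation, but the argument is the same line by line.
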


\begin{proof}
    Note that $DF(0) = 2\bI$. For a fixed $\Phi$, for an arbitrary partition of $[0,1]$
    \begin{equation*}
        0=t_0<t_1<\cdots<t_m=1,
    \end{equation*}
    the following can be obtained by invoking the triangle inequality of 1-norm and applying \cref{lm:DF_lipschitz}
    \begin{equation}
    \begin{split}
        \norm{DF(\Phi)-DF(0)}_1&=\norm{\sum_{k\in[m]}\left(DF(t_{k+1}\Phi)-DF(t_k\Phi)\right)}_1\\
        &\leq \sum_{k\in[m]}\norm{DF(t_{k+1}\Phi)-DF(t_k\Phi)}_1\\
        &\leq \sum_{k\in[m]}C_2(\norm{t_{k+1}\Phi}_1)\norm{t_{k+1}\Phi-t_k\Phi}_1\\
        &=\sum_{k\in[m]}C_2(t_{k+1}\norm{\Phi}_1)\norm{\Phi}_1\abs{t_{k+1}-t_k}.
    \end{split}
    \end{equation}
    Note that $\sum_{k\in[m]}C_2(\norm{t_{k+1}\Phi}_1)\norm{\Phi}_1\abs{t_{k+1}-t_k}$ is a Riemann sum integrating $C_2(t\norm{\Phi}_1)\norm{\Phi}_1$ over $t \in [0,1]$. Since the partition is arbitrary, one gets
    \begin{equation}
        \norm{DF(\Phi)-DF(0)}_1\leq \int_{0}^1 C_2(t\norm{\Phi}_1)\norm{\Phi}_1 \rd t=\int_0^{\norm{\Phi}_1} C_2(\delta)\rd \delta=h(\norm{\Phi}_1).
    \end{equation}
\end{proof}
As an immediate consequence of  \cref{lm:DF-2I_estimate}, for any $\Phi\in \RR^{\infty}$ with bounded norm $\norm{\Phi}_1\le \delta$, it holds that
\begin{equation}
\label{eqn:def_C1}
  \norm{DF(\Phi)}_1\le 2\cosh(2x)-2 + 2=2\cosh(2\delta):=C_1(\delta).
\end{equation}
We will use this upper bound on $\norm{DF(\Phi)}_1$ to prove the Lipschitz continuity of $F$.
        \begin{corollary}[Lipschitz continuity of $F$]\label{re:equiv_dist_upper}
        For any $\Phi^{(j)}\in \RR^{\infty}$ with bounded norm $\norm{\Phi^{(j)}}_1\le \delta$ where $j=1,2$ it holds that
                \begin{equation}
                        \norm{F(\Phi^{(1)})-F(\Phi^{(2)})}_1\le C_1(\delta)\norm{\Phi^{(1)}-\Phi^{(2)}}_1.
                \end{equation}
        \end{corollary}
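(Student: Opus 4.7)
The plan is to imitate the integration argument already used in the proof of \cref{lm:DF-2I_estimate}, but applied to $F$ itself rather than to $DF$. The key input is the pointwise bound $\norm{DF(\Phi)}_1 \le C_1(\delta)$ stated in \cref{eqn:def_C1}, which is the content of the previous lemma combined with the triangle inequality. Once this bound is in hand, the corollary is a standard consequence of the mean-value theorem in vector form.

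Concretely, fix $\Phi^{(1)},\Phi^{(2)}\in\RR^{\infty}$ with $\norm{\Phi^{(j)}}_1\le \delta$ and set $\Phi(t) := (1-t)\Phi^{(2)} + t\Phi^{(1)}$ for $t\in[0,1]$. By convexity of the $1$-norm, $\norm{\Phi(t)}_1\le \delta$ for every $t$, so \cref{eqn:def_C1} gives $\norm{DF(\Phi(t))}_1 \le C_1(\delta)$ uniformly in $t$. Next, for any partition $0 = t_0 < t_1 < \cdots < t_m = 1$, telescope and apply the triangle inequality:
\begin{equation*}
\norm{F(\Phi^{(1)}) - F(\Phi^{(2)})}_1 = \Big\lVert \sum_{k\in[m]} \bigl(F(\Phi(t_{k+1})) - F(\Phi(t_k))\bigr) \Big\rVert_1 \le \sum_{k\in[m]} \norm{F(\Phi(t_{k+1})) - F(\Phi(t_k))}_1.
\end{equation*}
Each increment $F(\Phi(t_{k+1})) - F(\Phi(t_k))$ is, to first order in $(t_{k+1}-t_k)$, equal to $DF(\Phi(t_k))(\Phi^{(1)}-\Phi^{(2)})(t_{k+1}-t_k)$, so the Riemann sum is bounded by $\sum_{k} C_1(\delta) \norm{\Phi^{(1)}-\Phi^{(2)}}_1 (t_{k+1}-t_k)$. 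Refining the partition and passing to the limit, we obtain
\begin{equation*}
\norm{F(\Phi^{(1)}) - F(\Phi^{(2)})}_1 \le \int_0^1 \norm{DF(\Phi(t))}_1 \norm{\Phi^{(1)} - \Phi^{(2)}}_1 \, \rd t \le C_1(\delta)\norm{\Phi^{(1)}-\Phi^{(2)}}_1,
\end{equation*}
which is the desired inequality.

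The only mildly subtle point is justifying the passage from the finite Riemann-sum estimate to the integral bound in the infinite-dimensional setting $\RR^{\infty}$. This is not an actual obstacle because, after padding with zeros, the effective lengths of $\Phi^{(1)}$ and $\Phi^{(2)}$ are finite, so along the segment $\Phi(t)$ only finitely many coordinates vary and the map $t\mapsto F(\Phi(t))$ lives in a finite-dimensional subspace on which $F$ is smooth by construction. Thus the standard finite-dimensional fundamental theorem of calculus applies coordinate-wise, and the uniform $1$-norm bound on $DF(\Phi(t))$ transfers directly to the $1$-norm of the increment, exactly as in the proof of \cref{lm:DF-2I_estimate}.
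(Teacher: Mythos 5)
Your proof is correct and takes essentially the same route as the paper: both reduce the claim to the uniform bound $\norm{DF(\Phi)}_1\le C_1(\delta)$ from \cref{eqn:def_C1} and then apply the mean value inequality (you unpack it as a telescoping Riemann sum, the paper invokes it directly), so there is no substantive difference.
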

\begin{proof}
Applying the mean value inequality, there exists $\Phi'$ which is some convex combination of $\Phi^{(1)}$ and $\Phi^{(2)}$ so that
        \begin{equation}
        \norm{F(\Phi^{(1)})-F(\Phi^{(2)})}_1\leq \norm{DF(\Phi')}_1 \norm{\Phi^{(1)}-\Phi^{(2)}}_1.
        \end{equation}
        Invoking \cref{eqn:def_C1} and recalling the condition $\norm{\Phi'}_1\leq \delta$, one has $\norm{DF(\Phi')}_1\leq C_1(\delta)$, which completes the proof.
\end{proof} 

\subsection{Invertibility of $F$  in $\RR^{\infty}$}
\label{sec:invert_F_Rinf}
According to the inverse mapping theorem and \cref{lm:DF0}, we know that
$F$ is invertible near 0. We now prove a stronger result about the invertibility of $F$ in a neighborhood of the origin. This is obtained via an upper bound on $\norm{F^{-1}(c)}_1$ in terms of $\norm{c}_1$. The proof of the following lemma is given in \cref{sec:pf_inverse}. Since $F$ is not an injection globally, the $F^{-1}$ we refer to only means a continuous function whose domain is an open subset of $\RR^{\infty}$ containing 0 and satisfies $F\circ F^{-1} = \id$.
        
        \begin{lemma}[Invertibility of $F$ in $\RR^{\infty}$]\label{lm:inverse}
                Define
                \begin{equation}\label{eq:dfn_H}
                        H(x) := \int_0^x 2 - h(t)\rd t = 4x - \sinh(2x),
                \end{equation}
                \begin{equation}\label{eq:dfn_r1}
                r_{\Phi}:=h^{-1}(2)\approx 0.658,
        \end{equation}
                and 
                \begin{equation}\label{eq:dfn_r2}
                        r_c := H(r_{\Phi})  \approx 0.902.
                \end{equation}
                $F$ has an inverse map $F^{-1}: B(0,r_c)\subset \RR^{\infty}\to \mathbb{R}^{\infty}$. Moreover, for any $c\in  \RR^{\infty}$ with $\norm{c}_1<r_c$, it holds that
                \begin{equation}\label{eq:F_inverse}
                        \norm{F^{-1}(c)}_1\le H^{-1}(\norm{c}_1).
                \end{equation}
        \end{lemma}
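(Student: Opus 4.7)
The plan is to construct the inverse $F^{-1}$ by a homotopy/continuation argument based on the quantitative invertibility of $DF$ established in \cref{lm:DF-2I_estimate}. First, I would observe that whenever $\norm{\Phi}_1 < r_{\Phi}$, the estimate $\norm{DF(\Phi) - 2\bI}_1 \le h(\norm{\Phi}_1) < h(r_{\Phi}) = 2$ combined with the Neumann series yields that $DF(\Phi)$ is invertible as a bounded operator on $\ell^1$, with the explicit bound
\begin{equation*}
    \norm{[DF(\Phi)]^{-1}}_1 \le \frac{1}{2-h(\norm{\Phi}_1)}.
\end{equation*}

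Next, fix $c\in\RR^{\infty}$ with $\norm{c}_1<r_c$ and consider the Davidenko-type initial value problem
\begin{equation*}
    \dot{\Phi}(t) = [DF(\Phi(t))]^{-1} c, \quad \Phi(0)=0, \quad t\in[0,1].
\end{equation*}
Since $F(0)=0$, the chain rule gives $\tfrac{d}{dt} F(\Phi(t))=c$, so any solution satisfies $F(\Phi(t))=tc$; in particular $\Phi(1)$ would be the desired preimage. Because $c$ is finitely supported, one can restrict the ODE to a finite-dimensional coordinate subspace containing the support of $c$ and the full range of the right-hand side, where the classical Picard--Lindel\"of theorem gives local existence and uniqueness (the Lipschitz continuity of $\Phi \mapsto [DF(\Phi)]^{-1}$ in the invertibility region follows from \cref{lm:DF_lipschitz} together with the resolvent identity). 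What remains is to show that the solution can be continued all the way to $t=1$ without $\norm{\Phi(t)}_1$ reaching $r_{\Phi}$.

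For this I would compare with the scalar IVP $\dot{y}(t)=\norm{c}_1/(2-h(y(t)))$, $y(0)=0$. Taking norms in the ODE and using the Neumann-series bound yields the differential inequality $\tfrac{d}{dt}\norm{\Phi(t)}_1 \le \norm{\dot\Phi(t)}_1 \le \norm{c}_1/(2-h(\norm{\Phi(t)}_1))$ a.e., and a standard scalar comparison (separation of variables gives $H(y(t))=\norm{c}_1 t$, i.e.\ $y(t)=H^{-1}(\norm{c}_1 t)$) implies $\norm{\Phi(t)}_1 \le H^{-1}(\norm{c}_1 t)$. Since $\norm{c}_1<r_c=H(r_{\Phi})$ and $H$ is strictly increasing on $[0,r_{\Phi}]$, we obtain $\norm{\Phi(t)}_1<r_{\Phi}$ throughout, so $DF(\Phi(t))$ never degenerates and the solution extends to $t=1$. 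Evaluating at $t=1$ gives the bound \cref{eq:F_inverse}.

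It remains to upgrade this existence result to the existence of a well-defined inverse map on $B(0,r_c)$. For uniqueness, suppose $F(\Phi^{(1)})=F(\Phi^{(2)})=c$ with both $\Phi^{(j)}$ lying in the ball where $\norm{\Phi^{(j)}}_1 \le H^{-1}(\norm{c}_1)<r_{\Phi}$; then on the line segment $\Phi(s)=(1-s)\Phi^{(1)}+s\Phi^{(2)}$ one has $0=F(\Phi^{(2)})-F(\Phi^{(1)})=\int_0^1 DF(\Phi(s))\,ds\,(\Phi^{(2)}-\Phi^{(1)})$, and the averaged Jacobian is invertible by the same Neumann argument (its distance from $2\bI$ is bounded by $h(H^{-1}(\norm{c}_1))<2$), forcing $\Phi^{(1)}=\Phi^{(2)}$. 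The main technical obstacle will be a clean justification of the scalar comparison argument, in particular the absolute continuity of $t\mapsto \norm{\Phi(t)}_1$ and the passage from the vector ODE to the scalar differential inequality; working in the finite-dimensional subspace determined by the support of $c$ sidesteps infinite-dimensional subtleties, and the uniform-in-dimension nature of all constants is exactly what will allow the extension to $\overline{F}$ on $\ell^1$ in \cref{sec:extension}.
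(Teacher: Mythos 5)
Your proposal is correct and takes a genuinely different route from the paper, though both rest on the same engine, namely \cref{lm:DF-2I_estimate} and the Neumann-series bound $\norm{[DF(\Phi)]^{-1}}_1\le (2-h(\norm{\Phi}_1))^{-1}$. The paper carries out a topological continuation along the ray $s\mapsto sZ$ with $Z=c/\norm{c}_1$: it defines $s^\ast$ as the infimum of parameters at which the preimage fails to exist, shows that $\{\Phi(s_n)\}$ is Cauchy as $s_n\nearrow s^\ast$, extracts a limit, and then invokes the local inverse mapping theorem at that limit to push past $s^\ast$, a contradiction; the bound \cref{eq:F_inverse} is derived from a separate a-priori estimate (\cref{eq:F_inverse_bound}). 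Your Davidenko ODE $\dot\Phi=[DF(\Phi)]^{-1}c$ replaces that with a constructive homotopy, and the scalar comparison with $\dot y=\norm{c}_1/(2-h(y))$ yields the bound $\norm{\Phi(1)}_1\le H^{-1}(\norm{c}_1)$ by separation of variables at the same time as it rules out breakdown before $t=1$; this is arguably tidier, and it also bypasses the paper's appeal to \cite[Lemma~10]{WangDongLin2021} for tracking the effective length, since the ODE stays in the span of the first $n$ coordinates automatically. One point you should surface rather than leave implicit: that $[DF(\Phi)]^{-1}c$ remains supported on the first $n$ coordinates whenever $\Phi$ and $c$ are is not a triviality—it is precisely the block upper-triangular structure of $DF(\Phi)$ for finitely supported $\Phi$ established in the paper's subsection on the structure of $DF$, and it is what legitimizes both the reduction to a finite-dimensional Picard--Lindel\"of problem and, later, the uniform-in-dimension constants used to extend to $\ell^1$. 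Your uniqueness argument is essentially the paper's, only phrased through the averaged Jacobian $\int_0^1 DF\,ds$ rather than through the contraction $\wt G=\wt F-2\Phi$.
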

        As a remark, the effective length of $F^{-1}(c)$ is always equal to that of $c$, which is implied by the proof of \cref{lm:inverse}.

\begin{corollary}\label{lm:equiv_dist_lower}
For any $c^{(j)}\in \RR^{\infty}$ with $\norm{c^{(j)}}_1\le \theta<r_c$ where $j=1,2$, define $\Phi^{(j)}:= F^{-1}(c^{(j)})$ for $j=1,2$. It holds that
\begin{equation}
    \wt{C}(\theta)\norm{\Phi^{(1)}-\Phi^{(2)}}_1 \le \norm{c^{(1)} - c^{(2)}}_1,
\end{equation}
where $\wt{C}(\theta)=2-h(H^{-1}(\theta))$.
\end{corollary}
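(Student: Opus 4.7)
The plan is to mirror the strategy of the upper Lipschitz bound in \cref{re:equiv_dist_upper}, but instead of bounding $DF$ directly, I would linearize $F$ around $0$ and exploit the fact that $DF(0)=2\bI$ together with the Lipschitz-type control $\|DF(\Phi)-2\bI\|_1\le h(\|\Phi\|_1)$ supplied by \cref{lm:DF-2I_estimate}. Concretely, set $\Phi^{(j)}=F^{-1}(c^{(j)})$ for $j=1,2$. Since $\|c^{(j)}\|_1\le \theta<r_c$, \cref{lm:inverse} gives the a priori bound $\|\Phi^{(j)}\|_1\le H^{-1}(\theta)$. Because both $\Phi^{(j)}$ lie in $\RR^{\infty}$ (so their difference has finite effective length), the fundamental theorem of calculus applied componentwise yields
\begin{equation*}
F(\Phi^{(1)})-F(\Phi^{(2)}) \;=\; \int_{0}^{1} DF\bigl((1-t)\Phi^{(2)}+t\Phi^{(1)}\bigr)\bigl(\Phi^{(1)}-\Phi^{(2)}\bigr)\,\rd t.
\end{equation*}

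Subtracting the leading linear term $2(\Phi^{(1)}-\Phi^{(2)})$ from both sides and using that any convex combination $(1-t)\Phi^{(2)}+t\Phi^{(1)}$ has $1$-norm at most $H^{-1}(\theta)$, I would apply \cref{lm:DF-2I_estimate} under the integral to obtain
\begin{equation*}
\bigl\|F(\Phi^{(1)})-F(\Phi^{(2)})-2\bigl(\Phi^{(1)}-\Phi^{(2)}\bigr)\bigr\|_1 \;\le\; h\bigl(H^{-1}(\theta)\bigr)\,\bigl\|\Phi^{(1)}-\Phi^{(2)}\bigr\|_1.
\end{equation*}
A reverse triangle inequality then gives
\begin{equation*}
\bigl\|c^{(1)}-c^{(2)}\bigr\|_1 \;\ge\; \bigl(2-h(H^{-1}(\theta))\bigr)\bigl\|\Phi^{(1)}-\Phi^{(2)}\bigr\|_1 \;=\; \wt{C}(\theta)\bigl\|\Phi^{(1)}-\Phi^{(2)}\bigr\|_1,
\end{equation*}
which is exactly the desired inequality.

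The only subtlety I anticipate is verifying that $\wt{C}(\theta)>0$ so the estimate is non-trivial. This follows from the definitions: since $H'(x)=2-h(x)$ is positive on $[0,r_{\Phi})$ (because $h(r_{\Phi})=2$ by \cref{eq:dfn_r1}), $H$ is strictly increasing on $[0,r_{\Phi}]$ with $H(r_{\Phi})=r_c$, so $\theta<r_c$ forces $H^{-1}(\theta)<r_{\Phi}$, and hence $h(H^{-1}(\theta))<2$. If desired, one could avoid invoking the integral form of the chain rule by replicating the Riemann-sum/partition argument used in the proof of \cref{lm:DF-2I_estimate} componentwise, which is admissible because $\Phi^{(1)}-\Phi^{(2)}$ has finite effective length; this is a minor technical point rather than a genuine obstacle, since all derivative estimates we need are independent of that effective length.
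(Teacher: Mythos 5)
Your proposal is correct and follows essentially the same route as the paper: the paper likewise uses the integral (fundamental theorem of calculus) representation of $F(\Phi^{(1)})-F(\Phi^{(2)})$, splits off the $2\bI$ term, bounds $\|DF(\cdot)-2\bI\|_1$ along the segment via \cref{lm:DF-2I_estimate} together with the a priori bound $\|\Phi^{(j)}\|_1\le H^{-1}(\theta)$ from \cref{lm:inverse}, and applies a reverse triangle inequality. The only cosmetic difference is that the paper invokes a mean value argument to pick a single convex combination $\Phi'$ rather than keeping the integral inside, and your explicit positivity check $h(H^{-1}(\theta))<2$ (hence $\wt{C}(\theta)>0$) is a correct and worthwhile remark the paper leaves implicit.
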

\begin{proof}
First, one has
\begin{equation}
\begin{split}
    &\norm{c^{(1)}-c^{(2)}}_1= \norm{F(\Phi^{(1)})-F(\Phi^{(2)})}_1\\
    &=\norm{\int_{0}^1 DF\left(s\Phi^{(1)}+(1-s)\Phi^{(2)}\right)\cdot\left(\Phi^{(1)}-\Phi^{(2)}\right)\rd s}_1\\
    &=  \norm{\int_{0}^1 \left(2\bI +DF\left(s\Phi^{(1)}+(1-s)\Phi^{(2)}\right)-2\bI\right)\cdot\left(\Phi^{(1)}-\Phi^{(2)}\right)\rd s}_1\\
    &\geq 2\norm{\Phi^{(1)}-\Phi^{(2)}}_1 - \int_{0}^1 \norm{DF\left(s\Phi^{(1)}+(1-s)\Phi^{(2)}\right)-2\bI}_1 \norm{\Phi^{(1)}-\Phi^{(2)}}_1\rd s
\end{split}
\end{equation}
Hence, there exists $t\in (0,1)$ such that $\Phi'=t\Phi^{(1)}+(1-t) \Phi^{(2)}$ and 
\begin{equation}
    \norm{c^{(1)}-c^{(2)}}_1\geq2\norm{\Phi^{(1)}-\Phi^{(2)}}_1 - \norm{DF(\Phi')-2\bI}_1\norm{\Phi^{(1)}-\Phi^{(2)}}_1.
\end{equation}
From \cref{lm:inverse}, for $j=1,2$, one has
\begin{equation}
\norm{\Phi^{(j)}}_1=\norm{F^{-1}(c^{(j)})}_1\leq H^{-1}(\norm{c^{(j)}}_1)\leq H^{-1}(\theta)<r_\Phi.
\end{equation}
Then $\norm{\Phi'}_1\leq H^{-1}(\theta)$ follows by the convexity of 1-norm ball. According to \cref{lm:DF-2I_estimate}, one has 
$\norm{DF(\Phi')-2\bI}_1\leq  h(\norm{\Phi'}_1)<1$. Thus, we obtain 
\begin{equation}
\wt{C}(\theta)\norm{\Phi^{(1)}-\Phi^{(2)}}_1\leq \norm{c^{(1)}-c^{(2)}}_1,
\end{equation}
where $\wt{C}(\theta):=2-h( H^{-1}(\theta))$.
\end{proof}

 Combining \cref{re:equiv_dist_upper} and \cref{lm:equiv_dist_lower}, we obtain the following theorem.
\begin{lemma}[Equivalence of distance in $\RR^{\infty}$]\label{lm:equiv_dist}
For any $c^{(j)}\in \RR^{\infty}$ with $\norm{c^{(j)}}_1\le \theta<r_c$, $j=1,2$, define $\Phi^{(j)}:= F^{-1}(c^{(j)})$, $j=1,2$. It holds that
\begin{equation}
    \wt{C}(\theta)\norm{\Phi^{(1)}-\Phi^{(2)}}_1 \le \norm{c^{(1)} - c^{(2)}}_1\leq C(\theta)\norm{\Phi^{(1)}-\Phi^{(2)}}_1,
\end{equation}
where $\wt{C}(\theta)=2-h(H^{-1}(\theta))$ and $C(\theta)=C_1(H^{-1}(\theta))$.
\end{lemma}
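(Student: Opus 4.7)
The plan is to obtain this lemma as a direct corollary of the two Lipschitz-type bounds already established, namely the upper bound from \cref{re:equiv_dist_upper} and the lower bound from \cref{lm:equiv_dist_lower}. The lower inequality is in fact the content of \cref{lm:equiv_dist_lower} verbatim, so there is nothing new to prove on that side. All the work lies in verifying that the hypotheses of \cref{re:equiv_dist_upper} are met with the claimed constant $C(\theta) = C_1(H^{-1}(\theta))$.

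For the upper bound, the key observation is that $\norm{c^{(j)}}_1 \le \theta < r_c$ ensures, via \cref{lm:inverse}, that $\Phi^{(j)} = F^{-1}(c^{(j)}) \in \RR^{\infty}$ is well-defined and satisfies
\[
\norm{\Phi^{(j)}}_1 \le H^{-1}(\norm{c^{(j)}}_1) \le H^{-1}(\theta),
\]
where $H^{-1}$ is well-defined on $[0,r_c)$ because $H$ is strictly increasing on $[0, r_\Phi]$ with $H(r_\Phi) = r_c$ by \cref{eq:dfn_r1,eq:dfn_r2}. With the uniform bound $\delta := H^{-1}(\theta)$ in hand, I would then apply \cref{re:equiv_dist_upper} to $\Phi^{(1)}$ and $\Phi^{(2)}$ to conclude
\[
\norm{c^{(1)} - c^{(2)}}_1 = \norm{F(\Phi^{(1)}) - F(\Phi^{(2)})}_1 \le C_1(\delta)\,\norm{\Phi^{(1)}-\Phi^{(2)}}_1 = C(\theta)\,\norm{\Phi^{(1)}-\Phi^{(2)}}_1,
\]
which is exactly the upper inequality.

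Combining the two inequalities yields the double bound. There is essentially no obstacle here; the only subtlety is the bookkeeping on admissibility of the constants: one must check that $\delta = H^{-1}(\theta) < r_\Phi$ so that the Jacobian estimate $\norm{DF(\Phi)-2\bI}_1 \le h(\delta) < 2$ holding along any convex combination $\Phi' = s\Phi^{(1)} + (1-s)\Phi^{(2)}$ is not vacuous. But this is immediate from $\theta < r_c = H(r_\Phi)$ together with the convexity of the $1$-norm ball, which guarantees $\norm{\Phi'}_1 \le \delta$ as well. No further estimates are needed, since all the heavy lifting has already been done in \cref{lm:sym_upper_bound_1_norm,lm:sym_hess_1_norm,lm:DF-2I_estimate,lm:inverse}.
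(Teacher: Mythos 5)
Your proposal is correct and matches the paper's approach exactly: the paper also obtains this lemma by combining \cref{re:equiv_dist_upper} and \cref{lm:equiv_dist_lower}, with \cref{lm:inverse} supplying the bound $\norm{\Phi^{(j)}}_1 \le H^{-1}(\theta)$ that feeds into the Lipschitz constant $C_1$. Your write-up simply spells out the bookkeeping that the paper leaves implicit in the phrase ``combining \cref{re:equiv_dist_upper} and \cref{lm:equiv_dist_lower}.''
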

An immediate consequence of \cref{lm:equiv_dist} is that: for any $c\in \RR^{\infty}$ with $\norm{c}_1\le \theta<r_c$, if we let $\Phi:= F^{-1}(c)$, then
\begin{equation}
    \wt{C}(\theta)\norm{\Phi}_1 \le \norm{c}_1\leq C(\theta)\norm{\Phi}_1.
\end{equation}
To show the sharpness of $r_c$, we plot $\wt{C}(\theta)$ and $C(\theta)$ as function of $\theta$ in \cref{fig:constant_visual}.
\begin{figure}[htbp]
        \centering
        \includegraphics[width =0.5\textwidth]{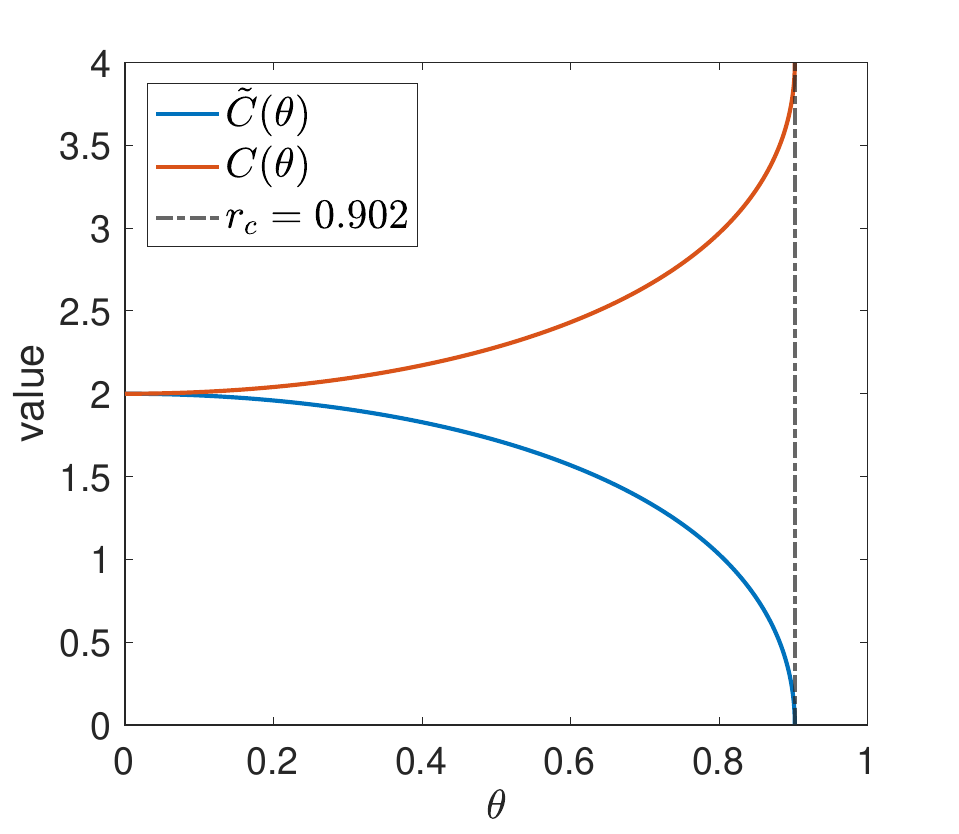}
        \caption{The plot of $\wt{C}(\theta)$ and $C(\theta)$ as function of $\theta$.}\label{fig:constant_visual}
\end{figure}

According to \cref{lm:equiv_dist}, the map $F$ preserves the distance up to a constant, i.e., $F$ is a \emph{quasi isometry} near the origin.

\subsection{Extension to $\ell^1$}\label{sec:extension}

In this section, we extend \cref{lm:inverse} and \cref{lm:equiv_dist} from $\RR^\infty$ to $\ell^1$. The map $\mathcal{F}$ is a well-defined mapping from  $S_e$ (or $S_o$) to $\ell^1$ according to
\begin{equation}
\mathcal{F}(f)=(c_0,c_1,\cdots).
\end{equation}
The following theorem is a more detailed statement of \cref{thm:main_l1}.   
    \begin{theorem}[Invertibility of $F$ in $\ell^1$]\label{thm:inverse_l1}
        The map $F:\RR^\infty\to \RR^\infty$ can be extended to $\overline{F}:\ell^1\to \ell^1$. Furthermore, $\overline{F}$ has an inverse map $\overline{F}^{-1}: B(0,r_c)\subset \ell^1\to \ell^1$. For any $c\in  \RR^{\infty}$ with $\norm{c}_1<r_c$, it holds that
        \begin{equation}
            \norm{\overline{F}^{-1}(c)}_1\le H^{-1}(\norm{c}_1).
        \end{equation}
    \end{theorem}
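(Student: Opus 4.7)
The strategy is a density-completion argument: since $\RR^{\infty}$ is dense in $\ell^1$ and all the Lipschitz-type estimates of the previous subsections are uniform in the effective length of the phase factors, both $F$ and its local inverse $F^{-1}$ extend uniquely to continuous maps on $\ell^1$. This is exactly the BLT-style extension recorded in the Banach-space appendix, adapted to the Lipschitz (rather than purely linear) setting.

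First I will construct $\overline{F}: \ell^1 \to \ell^1$. Given $\Phi \in \ell^1$, take any sequence $\Phi^{(n)} \in \RR^{\infty}$ with $\Phi^{(n)} \to \Phi$, and fix some $\delta > \norm{\Phi}_1$ so that $\norm{\Phi^{(n)}}_1 \le \delta$ eventually. By the Lipschitz bound of Corollary~\ref{re:equiv_dist_upper}, $\norm{F(\Phi^{(m)}) - F(\Phi^{(n)})}_1 \le C_1(\delta)\norm{\Phi^{(m)} - \Phi^{(n)}}_1$, so $\{F(\Phi^{(n)})\}$ is Cauchy in the complete space $\ell^1$; I define $\overline{F}(\Phi)$ as its limit. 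The same Lipschitz estimate yields independence from the choice of approximating sequence and continuity of $\overline{F}$, while Lemma~\ref{lm:sym_upper_bound_1_norm} combined with continuity confirms $\overline{F}(\Phi) \in \ell^1$ with $\norm{\overline{F}(\Phi)}_1 \le \sinh(2\norm{\Phi}_1)$. By construction $\overline{F}|_{\RR^{\infty}} = F$.

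Next I will construct $\overline{F}^{-1}: B(0, r_c) \subset \ell^1 \to \ell^1$ analogously. For $c \in B(0, r_c)$, pick $\theta \in (\norm{c}_1, r_c)$ and a sequence $c^{(n)} \in B(0, \theta) \cap \RR^{\infty}$ with $c^{(n)} \to c$. By the quasi-isometry of Lemma~\ref{lm:equiv_dist},
\[
\norm{F^{-1}(c^{(m)}) - F^{-1}(c^{(n)})}_1 \le \frac{1}{\wt{C}(\theta)}\norm{c^{(m)} - c^{(n)}}_1,
\]
so $\{F^{-1}(c^{(n)})\}$ is Cauchy and I declare its limit to be $\overline{F}^{-1}(c)$. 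The uniform bound $\norm{F^{-1}(c^{(n)})}_1 \le H^{-1}(\theta)$ from Lemma~\ref{lm:inverse} guarantees the limit lies in $\ell^1$, and Lemma~\ref{lm:equiv_dist} again furnishes sequence-independence and continuity of $\overline{F}^{-1}$. The stated norm estimate $\norm{\overline{F}^{-1}(c)}_1 \le H^{-1}(\norm{c}_1)$ for $c \in \RR^{\infty}$ is immediate from $\overline{F}^{-1}|_{\RR^{\infty}} = F^{-1}$ and Lemma~\ref{lm:inverse}; it moreover extends to any $c \in B(0,r_c) \subset \ell^1$ by continuity of $H^{-1}$.

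Finally, to certify that $\overline{F}^{-1}$ is genuinely the inverse of $\overline{F}$ on $B(0, r_c)$, I will invoke continuity: for $c \in B(0, r_c)$ with approximating sequence $c^{(n)} \in \RR^{\infty}$, the identity $F(F^{-1}(c^{(n)})) = c^{(n)}$ holds exactly, and passing to the limit in $\ell^1$ yields $\overline{F}(\overline{F}^{-1}(c)) = c$; injectivity of $\overline{F}$ on the relevant range follows from the lower quasi-isometry bound in Lemma~\ref{lm:equiv_dist}. The only potential obstacle is the verification that the limits are independent of the approximating sequence and remain in $\ell^1$ rather than in some larger abstract completion, but both points are handled precisely because every constant in the Lipschitz and quasi-isometry estimates is uniform in the effective length — an invariance that was deliberately built into the earlier lemmas. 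Consequently this final extension step is essentially mechanical and requires no new analytic input.
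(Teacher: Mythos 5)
Your proof is correct, and it takes a genuinely different route from the paper's. The paper's proof is very compact: it invokes the abstract $C^1$-extension theorem (\cref{thm:extend}) to extend $F$ (and $DF$) to $\ell^1$, and then combines that with the Banach-space inverse mapping theorem (\cref{thm:inverse mapping}) to transfer invertibility, citing \cref{lm:inverse} for the norm bound. You instead carry out the extension by hand: building $\overline{F}$ as the limit of $F$ along Cauchy sequences using the one-sided Lipschitz bound of \cref{re:equiv_dist_upper}, and building $\overline{F}^{-1}$ as the limit of $F^{-1}$ along Cauchy sequences using the lower quasi-isometry bound in \cref{lm:equiv_dist}, then verifying the inverse identity by passing the exact relation $F(F^{-1}(c^{(n)})) = c^{(n)}$ to the limit. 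Your route is more elementary (it needs neither the $C^1$-extension machinery nor the inverse mapping theorem) and is actually more explicit about why invertibility holds on the whole ball $B(0,r_c)$ rather than merely on an unspecified neighborhood of the origin: the quantitative invertibility of \cref{lm:inverse} is carried over directly rather than inferred from a local statement. The paper's route buys a stronger conclusion for free — $\overline{F}$ is $C^1$ on $\ell^1$, which underlies later statements such as \cref{lm:DF_invertible} — whereas your argument establishes Lipschitz continuity but not differentiability; if you wanted the $C^1$-level conclusions you would still need something like \cref{thm:extend}. As a minor remark, your aside about limits possibly living in "some larger abstract completion" is moot: $\ell^1$ is itself complete, so convergence of Cauchy sequences already takes place inside it.
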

\begin{proof}
\cref{lm:DF_lipschitz} and \cref{re:equiv_dist_upper} state that $F$ and $DF$ are both Lipschitz continuous. By \cref{thm:extend} and the fact that $\RR^{\infty}$ is a dense subspace of $\ell^1$, $F$ can be extended to the whole $\ell^1$. 
The inverse mapping theorem for Banach spaces (\cref{thm:inverse mapping}), together with \cref{thm:extend} imply that $\overline{F}$ shares the same property with $F$ within a neighborhood of the origin. This completes the proof due to \cref{lm:inverse}.
\end{proof}

Following the proof of \cref{thm:inverse_l1}, we can also extend \cref{lm:equiv_dist} to $\ell^1$, which states that $\overline{F}$  preserves the distance up to a constant in a neighborhood of the origin.

\begin{theorem} [Equivalence of distance in $\ell^1$]\label{lm:equiv_dist_l1}
For any $c^{(j)}\in \ell^1$ with $\norm{c^{(j)}}_1\le \theta<r_c$ where $j=1,2$, define $\Phi^{(j)}:= \overline{F}^{-1}(c^{(j)})$, $j=1,2$. It holds that
\begin{equation}
    \wt{C}(\theta)\norm{\Phi^{(1)}-\Phi^{(2)}}_1 \le \norm{c^{(1)} - c^{(2)}}_1\leq C(\theta)\norm{\Phi^{(1)}-\Phi^{(2)}}_1,
\end{equation}
where $\wt{C}(\theta)=2-h(H^{-1}(\theta))$ and $C(\theta)=C_1(H^{-1}(\theta))$.
\end{theorem}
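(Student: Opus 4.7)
The plan is to obtain the $\ell^1$ estimate as a direct consequence of its $\RR^{\infty}$ counterpart \cref{lm:equiv_dist} by a density and continuity argument. The three ingredients are: (i) $\RR^{\infty}$ is dense in $\ell^1$; (ii) coordinate truncation of a vector in $\ell^1$ does not increase its $1$-norm, so truncated approximants remain inside the ball $B(0,r_c)$; (iii) by \cref{thm:inverse_l1} together with the inverse mapping theorem for Banach spaces, $\overline{F}^{-1}$ exists and is continuous on $B(0,r_c)$, and agrees with $F^{-1}$ on $B(0,r_c)\cap \RR^{\infty}$.

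First I would set up the truncated approximants. Given $c^{(j)}\in\ell^1$ with $\norm{c^{(j)}}_1\le \theta<r_c$, let $c^{(j),n}\in\RR^{\infty}$ be the truncation to the first $n$ coordinates, so that $c^{(j),n}\to c^{(j)}$ in $\ell^1$. Since $\norm{c^{(j),n}}_1\le\norm{c^{(j)}}_1\le\theta<r_c$, \cref{lm:inverse} yields $\Phi^{(j),n}:=F^{-1}(c^{(j),n})\in\RR^{\infty}$, and \cref{lm:equiv_dist} provides the inequality
\begin{equation*}
\wt{C}(\theta)\norm{\Phi^{(1),n}-\Phi^{(2),n}}_1 \le \norm{c^{(1),n}-c^{(2),n}}_1 \le C(\theta)\norm{\Phi^{(1),n}-\Phi^{(2),n}}_1
\end{equation*}
for every $n$, with the same constants $\wt{C}(\theta)$ and $C(\theta)$ (crucially independent of $n$).

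Next I would pass to the limit $n\to\infty$. The middle term converges to $\norm{c^{(1)}-c^{(2)}}_1$ directly from the convergence of the truncations in $\ell^1$. For the outer terms, note that the extension $\overline{F}^{-1}$ of \cref{thm:inverse_l1} satisfies $\overline{F}^{-1}(c^{(j),n})=F^{-1}(c^{(j),n})=\Phi^{(j),n}$ since $c^{(j),n}\in\RR^{\infty}$. Continuity of $\overline{F}^{-1}$ on $B(0,r_c)$ then forces $\Phi^{(j),n}\to\overline{F}^{-1}(c^{(j)})=\Phi^{(j)}$ in $\ell^1$, whence $\norm{\Phi^{(1),n}-\Phi^{(2),n}}_1\to\norm{\Phi^{(1)}-\Phi^{(2)}}_1$. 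Taking limits in the displayed inequality gives the conclusion.

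The main (and essentially only) subtlety is ensuring that $\overline{F}^{-1}$ is indeed continuous on the whole ball $B(0,r_c)$, not merely in an unspecified neighborhood of $0$. This is already packaged into \cref{thm:inverse_l1}, whose proof invokes the inverse mapping theorem precisely on $B(0,r_c)$; once that statement is in hand the rest of the argument is routine, so I expect no genuine obstacle beyond quoting the abstract Banach-space machinery built up in the preceding subsections.
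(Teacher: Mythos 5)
Your density-and-limit argument is correct, and it is a genuinely different route from the one the paper sketches. The paper's (largely implicit) plan for \cref{lm:equiv_dist_l1} is to repeat the integral and mean-value estimates of \cref{re:equiv_dist_upper} and \cref{lm:equiv_dist_lower} directly in $\ell^1$, now using the extended objects $\overline{F}$ and $D\overline{F}$; the inputs there are \cref{lm:DF-2I_estimate} (transferred to $\ell^1$ via \cref{thm:extend}) and \cref{lm:DF_invertible}. You instead treat the $\RR^{\infty}$ estimate \cref{lm:equiv_dist} as a black box, apply it to coordinate truncations $c^{(j),n}$ --- which stay in $B(0,\theta)$ because truncation is $1$-norm non-increasing, and for which the constants $\wt{C}(\theta)$, $C(\theta)$ are independent of $n$ --- and pass to the limit. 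This is cleaner in that no estimate needs to be rewritten in the Banach setting, but it leans on a continuity input for $\overline{F}^{-1}$ that deserves to be pinned down more carefully than you do: \cref{thm:inverse_l1} as stated only asserts existence of $\overline{F}^{-1}$ on $B(0,r_c)$ together with the norm bound, not continuity. What actually makes $\Phi^{(j),n}\to\Phi^{(j)}$ go through is local Lipschitz continuity of $\overline{F}^{-1}$ at $c^{(j)}$, obtained by invoking the inverse mapping theorem (\cref{thm:inverse mapping}) at $\Phi^{(j)}=\overline{F}^{-1}(c^{(j)})$; this is legitimate because $\overline{F}$ is $C^1$ by \cref{thm:extend} and $D\overline{F}(\Phi^{(j)})$ is invertible with a bounded inverse by \cref{lm:DF_invertible}, using $\norm{\Phi^{(j)}}_1\le H^{-1}(\theta)<r_\Phi$. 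Once that pointwise continuity is supplied, your limit argument closes, and in effect upgrades it to the uniform Lipschitz bound with constant $1/\wt{C}(\theta)$ on all of $B(0,\theta)$.
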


With the help of the inverse mapping theorem of Banach spaces, the proof of this theorem follows the same idea as \cref{lm:equiv_dist}.

Now we are ready to give a positive answer to the first question raised in \cref{sec:background}. 
Let $\Psi^{(d)}$ denote the symmetric phase factors and $\Phi^{(d)}$ denote the corresponding reduced phase factors. Although the solution to  $F(\Phi)=\mathcal{F}(f^{(d)})$ may not be unique, \cref{thm:inverse_l1} allows us to choose one $\Phi^{(d)}$ such that $\norm{\Phi^{(d)}}_1<r_{\Phi}$ as long as $\norm{\mathcal{F}(f^{(d)})}_1<r_c$. Assume that $c:=\mathcal{F}(f)$ satisfies $\norm{c}_1<r_c$ as well. 
Then $\lim_{d\to \infty} f^{(d)} = f$ implies that $\{\mathcal{F}(f^{(d)})\}$ converges to $c$ with respect to the vector $1$-norm. The convergence of $\{\Phi^{(d)}\}$ to $\overline{F}^{-1}(c)$ follows by applying the equivalence of distance in \cref{lm:equiv_dist_l1}.
\cref{thm:inverse_l1} provides a sufficient condition to the existence of the solution to $\overline{F}(\Phi)=c$. We would like to emphasize that similar to the case of polynomials, the solution might not be unique in $\ell^1$. 

\subsection{Structure of $DF$}
 In this section, we focus on the structure of the Jacobian matrix $DF$. Given any $\Phi\in \RR^{\infty}$, we let $n$ be its effective length. According to \cref{eqn:DF_column}, the $k$-th column of $DF(\Phi)$, i.e., $\frac{\partial F(\Phi)}{\partial \phi_k}$, is the Chebyshev-coefficient vector of $\frac{\partial g(x,\Phi)}{\partial \phi_k}$. Direct computation shows
        \begin{equation}\label{eq:partial_F}
            \frac{\partial F(\Phi)}{\partial \phi_k} = F(\Phi+\frac{\pi}{4}e_k)-F(\Phi-\frac{\pi}{4}e_k), \quad \forall k\in \NN, \Phi\in\RR^{\infty}.
        \end{equation}
Here, $e_k\in\RR^{\infty}$ is the vector with all components equal to $0$ except for the $k$-th component which is equal to $1$. As a reminder, when we refer to the component of a vector or matrix, the index begins with 0. 
        
        We consider the even case for simplicity. For $k< n$, according to \cref{eq:partial_F}, $\frac{\partial g(x,\Phi)}{\partial \phi_k}$ is a polynomial of degree at most $2n-2$. Thus the components $\left[\frac{\partial F(\Phi)}{\partial \phi_k}\right]_j=0$ for any $j\ge n$. For $k\geq n$,  $\frac{\partial g(x,\Phi)}{\partial \phi_k}$ is a polynomial of degree at most $2k$, and then $\left[\frac{\partial F(\Phi)}{\partial \phi_k}\right]_j=0$  for any $j>k$. 
Therefore $DF(\Phi)$ takes the form
        \begin{equation}
            \left(\begin{matrix}
                  D_1 &D_2\\
                    0&D_3
            \end{matrix}\right).
        \end{equation}
        Here, $D_1$ is a matrix of size $n\times n$ and $D_3$ is an upper triangular matrix of infinite dimension. As for matrix $D_2$, the number of rows is $n$, while the number of columns is infinite.
        
When $DF(\Phi)$ is invertible, the inverse takes the form
        \begin{equation}
            DF(\Phi)^{-1}=\left(\begin{matrix}
                 D_1^{-1}&-D_1^{-1} D_2 D_3^{-1}\\
                    0&D_3^{-1}
            \end{matrix}\right).
        \end{equation}
After extending $F$ to $\ell^1$, we can characterize the invertibility of matrix $D\overline{F}$ in $\ell^1$. 

    \begin{lemma}\label{lm:DF_invertible}
        For any $\Phi\in \ell^1$ with $\norm{\Phi}_1<r_{\Phi}$, $D\overline{F}(\Phi)$ is invertible and 
        \begin{equation}
            \norm{D\overline{F}(\Phi)^{-1}}_1\leq \frac{1}{2-h(\norm{\Phi}_1)}.
        \end{equation}
    \end{lemma}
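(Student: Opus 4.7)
The plan is to write $D\overline{F}(\Phi)$ as a perturbation of $2\bI$ and then invert it by a Neumann series argument in the Banach algebra $\mathcal{B}(\ell^1)$ of bounded operators on $\ell^1$. Concretely, I would begin by observing that \cref{lm:DF-2I_estimate} gives $\norm{DF(\Phi) - 2\bI}_1 \le h(\norm{\Phi}_1)$ for every $\Phi \in \RR^{\infty}$. Since $DF$ is Lipschitz continuous by \cref{lm:DF_lipschitz} and $\RR^{\infty}$ is dense in $\ell^1$, the results of \cref{sec:banach} (in particular \cref{thm:extend}) let me extend $DF$ to a continuous map $D\overline{F}: \ell^1 \to \mathcal{B}(\ell^1)$, and the inequality $\norm{D\overline{F}(\Phi) - 2\bI}_1 \le h(\norm{\Phi}_1)$ transfers to all $\Phi \in \ell^1$ by continuity of $h$ and of $\norm{\cdot}_1$.

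Next, I would exploit the definition $r_\Phi = h^{-1}(2)$ from \cref{eq:dfn_r1}, noting that $h$ is strictly increasing on $[0,\infty)$ with $h(0) = 0$. Hence $\norm{\Phi}_1 < r_\Phi$ implies $h(\norm{\Phi}_1) < 2$, so the operator
\begin{equation}
R := \tfrac{1}{2}\bigl(D\overline{F}(\Phi) - 2\bI\bigr) \in \mathcal{B}(\ell^1)
\end{equation}
satisfies $\norm{R}_1 \le \tfrac{1}{2} h(\norm{\Phi}_1) < 1$. Writing $D\overline{F}(\Phi) = 2(\bI + R)$, completeness of $\ell^1$ (hence of $\mathcal{B}(\ell^1)$) makes the Neumann series $(\bI + R)^{-1} = \sum_{k=0}^{\infty} (-R)^k$ convergent in operator $1$-norm, which furnishes the inverse
\begin{equation}
D\overline{F}(\Phi)^{-1} = \tfrac{1}{2} \sum_{k=0}^{\infty} (-R)^k.
\end{equation}

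Finally, the desired quantitative bound follows from the geometric series estimate
\begin{equation}
\norm{D\overline{F}(\Phi)^{-1}}_1 \le \tfrac{1}{2} \sum_{k=0}^{\infty} \norm{R}_1^k \le \tfrac{1}{2} \cdot \frac{1}{1 - \tfrac{1}{2} h(\norm{\Phi}_1)} = \frac{1}{2 - h(\norm{\Phi}_1)},
\end{equation}
which is exactly the claim. I do not expect a serious obstacle: the only subtlety is making sure that the extension of $DF$ to $\ell^1$ and the reliance on the Banach algebra structure of $\mathcal{B}(\ell^1)$ are applied cleanly, but both are standard once the Lipschitz control from \cref{lm:DF_lipschitz} and the norm bound from \cref{lm:DF-2I_estimate} are in place.
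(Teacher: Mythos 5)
Your proposal is correct and follows essentially the same route as the paper: rescale $D\overline{F}(\Phi)$ by $\frac{1}{2}$ so that $\cref{lm:DF-2I_estimate}$ gives $\bigl\|\tfrac{1}{2}D\overline{F}(\Phi)-\bI\bigr\|_1\le\tfrac{1}{2}h(\norm{\Phi}_1)<1$ when $\norm{\Phi}_1<r_\Phi$, then invoke the Neumann-series invertibility bound (the paper packages this as \cref{thm:invertible}) to conclude invertibility and the estimate $\norm{D\overline{F}(\Phi)^{-1}}_1\le\frac{1}{2-h(\norm{\Phi}_1)}$. You are a bit more explicit than the paper about extending the bound from $\RR^{\infty}$ to $\ell^1$ via \cref{thm:extend} and continuity, which the paper leaves implicit; the substance is identical.
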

    
    \begin{proof}
        When $\norm{\Phi}_1<h^{-1}(2)$, we have $\norm{\half D\overline{F}(\Phi)-\bI}_1\leq \frac{1}{2} h(\norm{\Phi}_1)<1$. Applying \cref{thm:invertible} in the Banach space $\ell^1$, $D\overline{F}(\Phi)$ is invertible and 
        \begin{equation*}
            \norm{D\overline{F}(\Phi)^{-1}}_1 = \half\norm{\left(\half D\overline{F}(\Phi)\right)^{-1}}_1\leq \frac{\half}{1-\frac{1}{2}h(\norm{\Phi}_1)}=\frac{1}{2-h(\norm{\Phi}_1)}.
        \end{equation*}
    \end{proof}

\section{Decay properties}\label{sec:decay}

As an immediate consequence of \cref{lm:equiv_dist_l1}, we now prove the decay properties of the reduced phase factors $\Phi\in\ell^1$ (\cref{thm:main_decay}) with an explicit characterization of the constant $C$.

\begin{proof}[Proof of \cref{thm:main_decay}]
Let $\Phi^{(n)} = (\phi_0,\ldots, \phi_n,0,0,\ldots)$, and $\theta = \norm{c}_1$ in \cref{lm:equiv_dist_l1}, then we get
    \begin{equation}
    \begin{split}
            C_1(H^{-1}(\norm{c}_1))\sum_{k>n}|\phi_k|&=C_1(H^{-1}(\norm{c}_1))\norm{\Phi-\Phi^{(n)}}_1\\
            &\ge\norm{\overline{F}(\Phi)-\overline{F}\left(\Phi^{(n)}\right)}_1=\norm{c-\overline{F}\left(\Phi^{(n)}\right)}_1\\
            &\ge\sum_{k>n}\abs{c_k},
    \end{split}
    \end{equation}
    where the last inequality follows the fact that $\overline{F}(\Phi^{(n)})$ is zero after the $n$-th component. 
    Similarly, we choose $c^{(n)} = (c_0,\ldots, c_n,0,0,\ldots)$, and \cref{lm:equiv_dist_l1} gives
    \begin{equation}
    \begin{split}
            \frac{1}{2-h(H^{-1}(\norm{c}_1))}\sum_{k>n}|c_k|&=\frac{1}{2-h(H^{-1}(\norm{c}_1))}\norm{c-c^{(n)}}_1\\
            &\ge\norm{\overline{F}^{-1}(c)-\overline{F}^{-1}\left(c^{(n)}\right)}_1=\norm{\Phi-\overline{F}^{-1}\left(c^{(n)}\right)}_1\\
            &\ge\sum_{k>n}\abs{\phi_k},
    \end{split}
    \end{equation}
    where the last inequality follows the fact that $\overline{F}^{-1}(c^{(n)})$ is zero after the $n$-th component. This completes the proof with $C'=\frac{1}{C_1(H^{-1}(\norm{c}_1))}$ and $C=\frac{1}{2-h(H^{-1}(\norm{c}_1))}$.
\end{proof}
If $\sum_{k>n} \abs{c_k}=O(n^{-\alpha})$ for some $\alpha >0$, then $\sum_{k>n}\abs{\phi_k}=O(n^{-\alpha})$. If $c_k$ decays super-algebraically or exponentially, so does the reduced phase factors. Therefore the tail decay of the reduced phase factors is determined by the smoothness of the target function.

\section{Fixed-point iteration for finding phase factors}\label{sec:algorithm}

\cref{alg:iterative-qsp} is a very simple iterative algorithm  based on fixed-point iteration for finding phase factors. Numerical results suggest that this algorithm is quite robust, starting from a \emph{fixed} initial point $\Phi^{0} = 0$ (or $\Phi^{1} = \half c$ according to \cref{alg:iterative-qsp}). 
We emphasize that the choice of the initial guess is important, and other initial points may make the algorithm diverge. 
Based on the developments in \cref{sec:inf_sym_QSP}, we prove that 
\cref{alg:iterative-qsp} converges linearly in $\ell^1$, and describe the computational complexity.

\subsection{Convergence}\label{sec:convergence}
To prove the convergence of \cref{alg:iterative-qsp}, it is sufficient to prove that 
        \begin{equation}
                G(\Phi):=\Phi-\frac{1}{2} F\left(\Phi\right)+\half F(\Phi^{\star})
        \end{equation}
        is a contraction map in a neighborhood of $\Phi^{\star}$. $\Phi^{\star}$ denotes the desired set of reduced phase factors. The contraction property of $G$ follows the observation that the Jacobian matrix $DG(\Phi^{\star}) = \bI - \half DF(\Phi^{\star}) = \half\left(DF(0) - DF(\Phi^{\star})\right)$ would be small when $\norm{\Phi^{\star}}_1$ is sufficiently small according to \cref{lm:DF-2I_estimate}. We define a function 
        \begin{equation}
            \gamma(r):=\frac{1}{2}\int_{0}^{1} h\left( r+s\left(\frac{1}{2}\sinh(2r)-r\right)\right)\rd s,
        \end{equation}
        and the following constants
        \begin{align}
                \wt{r}_{\Phi} &:= \half\arcsinh(\arccosh(2))\approx 0.544\\
                \wt{r}_c&:= H(\wt{r}_{\Phi})\approx 0.861\\
                \wt{\gamma}&:=\gamma(\wt{r}_{\Phi})\approx 0.8189.
        \end{align}
        We will also use the fact that  $\wt{r}_{\Phi}<\half\arccosh(2) = r_{\Phi} \approx 0.658$.

\begin{lemma}\label{lm:cm}
    If $\Phi^{\star}$ satisfies $c = \overline{F}(\Phi^{\star})$ and $\norm{\Phi^{\star}}_1\leq\wt{r}_{\Phi}$, $G$ is a contraction map in the open ball  $B:=B\left(\Phi^{\star}, \norm{\half F(\Phi^{\star})-\Phi^{\star}}_1\right)$. 
\end{lemma}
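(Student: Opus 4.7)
The strategy is to control the operator norm of the Jacobian $DG$ along segments originating at $\Phi^{\star}$. Since $DF(0) = 2\bI$ by \cref{lm:DF0}, we can write $DG(\Phi) = \bI - \tfrac{1}{2}DF(\Phi) = \tfrac{1}{2}\bigl(DF(0) - DF(\Phi)\bigr)$, and \cref{lm:DF-2I_estimate} then yields $\norm{DG(\Phi)}_1 \le \tfrac{1}{2}h(\norm{\Phi}_1)$. The first concrete step is to estimate the radius $\rho := \norm{\tfrac{1}{2}F(\Phi^{\star}) - \Phi^{\star}}_1$ in terms of $r := \norm{\Phi^{\star}}_1$: using $F(0) = 0$ together with $DF(0) = 2\bI$,
\begin{equation*}
\tfrac{1}{2}F(\Phi^{\star}) - \Phi^{\star} \;=\; \tfrac{1}{2}\int_{0}^{1} \bigl(DF(s\Phi^{\star}) - 2\bI\bigr)\Phi^{\star}\, ds,
\end{equation*}
so applying \cref{lm:DF-2I_estimate} inside the integral and a change of variables give $\rho \le \tfrac{1}{2}\int_0^r h(u)\, du = \tfrac{1}{2}\sinh(2r) - r$.

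For any $\Phi \in B$, the identity $G(\Phi^{\star}) = \Phi^{\star}$ together with the fundamental theorem of calculus yields
\begin{equation*}
G(\Phi) - \Phi^{\star} \;=\; \int_0^1 DG\bigl(\Phi^{\star} + s(\Phi - \Phi^{\star})\bigr)\, ds \cdot (\Phi - \Phi^{\star}).
\end{equation*}
Along this segment, the triangle inequality and the radius bound give $\norm{\Phi^{\star} + s(\Phi - \Phi^{\star})}_1 \le r + s\rho \le r + s\bigl(\tfrac{1}{2}\sinh(2r) - r\bigr)$, and combining this with the Jacobian estimate and monotonicity of $h$ produces $\norm{DG(\Phi^{\star} + s(\Phi - \Phi^{\star}))}_1 \le \tfrac{1}{2} h\bigl(r + s(\tfrac{1}{2}\sinh(2r) - r)\bigr)$. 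Integrating in $s$ and recognizing the definition of $\gamma$ on the right-hand side yields $\norm{G(\Phi) - \Phi^{\star}}_1 \le \gamma(r)\,\norm{\Phi - \Phi^{\star}}_1$.

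It remains to verify $\gamma(r) < 1$ under $r \le \wt{r}_\Phi$. The defining relation $\wt{r}_\Phi = \tfrac{1}{2}\arcsinh(\arccosh(2))$ is equivalent to $\tfrac{1}{2}\sinh(2\wt{r}_\Phi) = r_\Phi$, so $r \le \wt{r}_\Phi$ forces the argument of $h$ inside the integrand of $\gamma(r)$ to stay in $[r, r_\Phi]$; there $h$ is strictly below $h(r_\Phi) = 2$ except at the right endpoint, so the average is strictly below $1$, and the monotonicity of $\gamma$ gives $\gamma(r) \le \gamma(\wt{r}_\Phi) = \wt{\gamma} < 1$. Consequently $G$ contracts toward $\Phi^{\star}$ with rate $\wt{\gamma}$ and in particular maps $B$ into itself. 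The main subtlety—and the reason $\gamma$ is defined by an integral rather than a pointwise supremum—is that the cruder uniform estimate $\tfrac{1}{2}h(r + \rho)$ only yields $L \le 1$ in the limiting case $r = \wt{r}_\Phi$ with $\rho$ saturating its upper bound; averaging $h$ along the segment is precisely what introduces the strict gap $\wt{\gamma} < 1$ that is needed to conclude contraction.
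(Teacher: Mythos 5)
Your proof is correct and in fact more careful than the paper's own. The paper proves \cref{lm:cm} by showing, for any $\Phi\in B$, that $\norm{\Phi}_1<r_\Phi$ and hence $\norm{DG(\Phi)}_1\le\tfrac12 h(\norm{\Phi}_1)<\tfrac12 h(r_\Phi)=1$ pointwise; you instead integrate $h$ along segments emanating from $\Phi^\star$ and obtain a uniform contraction rate $\gamma(r)\le\wt\gamma<1$. This distinction matters precisely in the borderline case $\norm{\Phi^\star}_1=\wt r_\Phi$: there the supremum of $\tfrac12 h(\norm{\Phi}_1)$ over the open ball $B$ equals $1$, so the paper's pointwise argument does not by itself furnish a single Lipschitz constant strictly below $1$ valid for all of $B$. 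Your averaging closes that gap. In effect you have folded the computation the paper defers to \cref{lm:contraction} into the proof of \cref{lm:cm}, which is a legitimate and arguably cleaner organization. The one thing worth flagging is that what you establish is contraction toward the fixed point, $\norm{G(\Phi)-\Phi^\star}_1\le\gamma(r)\norm{\Phi-\Phi^\star}_1$, rather than the two-point Lipschitz bound $\norm{G(\Phi)-G(\Phi')}_1\le L\norm{\Phi-\Phi'}_1$ for all pairs in $B$; but the former is exactly what \cref{lm:contraction} and \cref{thm:main_converge} consume, and the paper's own argument has the same scope, so this is not a defect relative to the source.
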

\begin{proof}
First, we give an upper bound on the radius of ball $B$.
        \begin{equation}\label{eqn:radius}
             \begin{aligned}
                \norm{\half F(\Phi^{\star})-\Phi^{\star}}_1&=\half\norm{\int_0^1 \frac{\rd}{\rd t} \left(F(t\Phi^{\star})-2t\Phi^{\star}\right)\rd t}_1&\\
                &= \half\norm{\int_0^1 \left(DF(t\Phi^{\star})-2\bI\right)\cdot \Phi^{\star}\rd t}_1&\\
                &\le\half\int_0^{1}\norm{DF\left(t\Phi^{\star}\right)-2\bI}_1\norm{\Phi^{\star}}_1\rd t&\text{(use \cref{lm:DF-2I_estimate})}\\
                &\le\half\int_0^{\norm{\Phi^{\star}}_1}h(t)\rd t= \half\sinh(2\norm{\Phi^{\star}}_1)-\norm{\Phi^{\star}}_1.&
                \end{aligned}
       \end{equation}
                Thus, for any $\Phi\in B$, we have the following estimate
                \begin{equation}\label{eqn:estimate_phi}
                \begin{split}
                        \norm{\Phi}_1&< \norm{\Phi^{\star}}_1+\norm{\half F(\Phi^{\star})-\Phi^{\star}}_1\le\half \sinh(2\norm{\Phi^{\star}}_1)\\
                        &\leq\half \sinh(2\wt{r}_{\Phi}) = \half\arccosh(2) = r_{\Phi}. 
                \end{split}
                \end{equation}
                Then we can use \cref{lm:DF-2I_estimate} again to conclude
                \begin{equation}
                        \norm{DG(\Phi)}_1 = \norm{\bI-\half DF(\Phi)}_1\le \half h(\norm{\Phi}_1)< \half h(r_{\Phi}) = 1.
                \end{equation}
                This means $G$ is a contraction map in the ball $B$ and $\Phi^{\star}$ is its only fixed point.
\end{proof}
        \begin{lemma}\label{lm:contraction}
            If $\Phi^{\star}$ satisfies $c = \overline{F}(\Phi^{\star})$ and $\norm{\Phi^{\star}}_1\leq\wt{r}_{\Phi}$, \cref{alg:iterative-qsp} converges  $Q$-linearly to $\Phi^{\star}$. The rate of convergence is bounded by $\frac{1}{2}h(\norm{\Phi^{\star}}_1)$, i.e.,
            \begin{equation}
                \lim_{t\to \infty} \frac{\norm{\Phi^{t+1}-\Phi^{\star}}_1}{\norm{\Phi^{t}-\Phi^{\star}}_1}\le \frac{1}{2}h(\norm{\Phi^{\star}}_1).
                \end{equation}
            Here $\Phi^t$ is the set of reduced phase factors in the $t$-th iteration step.
            Furthermore, for every $t\ge 1$, 
                \begin{equation}
                        \frac{\norm{\Phi^{t+1}-\Phi^{\star}}_1}{\norm{\Phi^{t}-\Phi^{\star}}_1}\le \gamma(\norm{\Phi^{\star}}_1).
                \end{equation}
                In particular, $\gamma(\norm{\Phi^{\star}}_1)\leq \wt{\gamma}\approx 0.8189$. 
        \end{lemma}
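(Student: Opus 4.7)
The plan is to view the algorithm as iterating the map $G(\Phi)=\Phi-\tfrac12 F(\Phi)+\tfrac12 F(\Phi^{\star})$, whose unique fixed point in the ball $B$ of \cref{lm:cm} is $\Phi^{\star}$. I first want to show that every iterate $\Phi^t$ with $t\ge 1$ lies in (the closure of) $B$, then estimate $\|DG\|_1$ along the segment from $\Phi^\star$ to $\Phi^t$, and finally integrate this pointwise bound to obtain the stated contraction factors.

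First, since $F(0)=0$ (trivially from the definition of $U(x,0)$) and $\Phi^{0}=0$, one iteration of \cref{alg:iterative-qsp} gives $\Phi^{1}=\tfrac12 c=\tfrac12 F(\Phi^{\star})$, so
\[
\|\Phi^{1}-\Phi^{\star}\|_1=\bigl\|\tfrac12 F(\Phi^{\star})-\Phi^{\star}\bigr\|_1,
\]
which is exactly the radius of $B$; hence $\Phi^{1}$ lies in $\overline{B}$. By \cref{lm:cm}, $G$ is a contraction on $B$ with fixed point $\Phi^{\star}$, so by induction every subsequent iterate $\Phi^{t}$ ($t\ge 2$) lies in the open ball $B$, and the Banach fixed-point theorem yields $\Phi^{t}\to\Phi^{\star}$ in $\ell^1$.

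Next, write
\[
\Phi^{t+1}-\Phi^{\star}=G(\Phi^{t})-G(\Phi^{\star})=\int_{0}^{1} DG\bigl(\Phi^{\star}+s(\Phi^{t}-\Phi^{\star})\bigr)\,\mathrm{d}s\,\cdot(\Phi^{t}-\Phi^{\star}).
\]
Since $DG=\bI-\tfrac12 DF$ and $DF(0)=2\bI$ by \cref{lm:DF0}, \cref{lm:DF-2I_estimate} gives $\|DG(\Phi)\|_1\le\tfrac12 h(\|\Phi\|_1)$ whenever $\|\Phi\|_1<r_{\Phi}$. For the asymptotic rate, $\Phi^{t}\to\Phi^{\star}$ and continuity of $h$ yield
\[
\limsup_{t\to\infty}\frac{\|\Phi^{t+1}-\Phi^{\star}\|_1}{\|\Phi^{t}-\Phi^{\star}\|_1}\le\tfrac12 h(\|\Phi^{\star}\|_1),
\]
which is the first claim.

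For the uniform-in-$t$ bound, I use the ball bound from the first step: for $t\ge 1$,
\[
\|\Phi^{\star}+s(\Phi^{t}-\Phi^{\star})\|_1\le\|\Phi^{\star}\|_1+s\bigl(\tfrac12\sinh(2\|\Phi^{\star}\|_1)-\|\Phi^{\star}\|_1\bigr),
\]
using \cref{eqn:radius} to bound $\|\Phi^{t}-\Phi^{\star}\|_1\le\|\Phi^{1}-\Phi^{\star}\|_1$. Monotonicity of $h$ then lets me integrate the pointwise estimate on $\|DG\|_1$, which produces exactly $\gamma(\|\Phi^{\star}\|_1)$ as defined. Since $\|\Phi^{\star}\|_1\le\wt{r}_{\Phi}<r_{\Phi}$ and $\gamma$ is increasing (as both $h$ and the upper bound of integration are increasing in its argument), $\gamma(\|\Phi^{\star}\|_1)\le\gamma(\wt{r}_{\Phi})=\wt{\gamma}$.

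The main obstacle I foresee is the first step: one has to verify that $\Phi^{t}$ stays in the region where \cref{lm:DF-2I_estimate} applies for every $t\ge 1$, which requires both the explicit computation $\Phi^{1}=\tfrac12 c$ and a careful use of the fact that the radius of $B$ is bounded by $\tfrac12\sinh(2\wt{r}_{\Phi})-\wt{r}_{\Phi}$, so that $\|\Phi^{t}\|_1\le\tfrac12\sinh(2\|\Phi^{\star}\|_1)\le r_{\Phi}$. Everything else is a direct combination of the mean value inequality in $\ell^1$ and the pre-existing bound on $DF-2\bI$.
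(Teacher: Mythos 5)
Your proposal follows essentially the same route as the paper's proof: write the error as an integral of $DG$ along the segment from $\Phi^{\star}$ to $\Phi^{t}$, bound $\|DG\|_1$ by $\tfrac12 h(\|\cdot\|_1)$ via \cref{lm:DF-2I_estimate}, and then cash out the two claims (asymptotic rate and uniform rate) by choosing the appropriate bound on the endpoint norm. The definition of $\gamma$ is recovered exactly, and your limsup argument for the asymptotic rate is a valid variant of the paper's direct appeal to $\|DG(\Phi^{\star})\|_1$.

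The one place where your write-up is too quick is the step ``by \cref{lm:cm}, $G$ is a contraction on $B$ \ldots\ so by induction every subsequent iterate $\Phi^{t}$ ($t\ge 2$) lies in the open ball $B$.'' \cref{lm:cm} only asserts the contraction property on the \emph{open} ball $B$, while $\Phi^{1}$ sits exactly on $\partial B$, so the lemma cannot be applied to $\Phi^{1}$ to conclude $\Phi^{2}\in B$. What actually breaks this apparent circularity is the integral bound you derive afterwards: the segment $s\Phi^{1}+(1-s)\Phi^{\star}$ has norm at most $s\,r_{\Phi}+(1-s)\wt{r}_{\Phi}$, which is strictly less than $r_{\Phi}$ for $s<1$, so the integral of $\tfrac12 h(\cdot)$ over $s\in[0,1]$ is strictly less than $1$ (indeed $\leq\wt{\gamma}\approx0.8189$), and this gives $\|\Phi^{2}-\Phi^{\star}\|_1<\|\Phi^{1}-\Phi^{\star}\|_1$, hence $\Phi^{2}\in B$, after which \cref{lm:cm} takes over. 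The paper explicitly flags this (``we need a finer estimation'') and does the $t=1\to 2$ step with the integral estimate \emph{before} invoking \cref{lm:cm}. Your final paragraph shows you see this obstacle; the fix is just to reorder: prove the $t=1$ step by the explicit integral bound first, and only then invoke the induction.
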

        \begin{proof}
                 \cref{lm:cm} guarantees the convergence of \cref{alg:iterative-qsp} as long as $\Phi^{t}\in B$. However, we note that $\Phi^{1} = \half F(\Phi^{\star}) $ lies on the boundary of $B$. Hence, we need a finer estimation.
                \begin{equation}
                    \begin{split}
                        \norm{\Phi^{t+1}-\Phi^{\star}}_1&= \norm{G(\Phi^{t})-G(\Phi^{\star})}_1\\
                        &=\norm{\int_{0}^{1} DG\left(s\Phi^{t}+(1-s)\Phi^{\star}\right)\cdot\left(\Phi^{t}-\Phi^{\star}\right)\rd s}_1\\
                        &\leq \int_{0}^{1} \norm{DG\left(s\Phi^{t}+(1-s)\Phi^{\star}\right)}_1\norm{\Phi^{t}-\Phi^{\star}}_1\rd s\\
                        &\leq \frac{1}{2}\norm{\Phi^{t}-\Phi^{\star}}_1
                        \int_{0}^{1} h\left(\norm{s\Phi^{t}+(1-s)\Phi^{\star}}_1\right)\rd s\\
                        &\leq\frac{1}{2} \norm{\Phi^{t}-\Phi^{\star}}_1
                        \int_{0}^{1} h\left(s\norm{\Phi^{t}}_1+(1-s)\norm{\Phi^{\star}}_1\right)\rd s
                    \end{split}
                \end{equation}
                The last inequality follows that $h$ is monotonic increasing on $[0,\infty)$. 
                
               By replacing the ``$<$'' by ``$\leq$" in \cref{eqn:estimate_phi}, we get $\norm{\Phi^{1}}_1\leq r_\Phi$. Hence, 
                \begin{equation}
                    \begin{split}
                       \norm{\Phi^{2}-\Phi^{\star}}_1&\leq \frac{1}{2}\norm{\Phi^{1}-\Phi^{\star}}_1
                        \int_{0}^{1} h\left(s r_\Phi+(1-s)\wt{r}_\Phi\right)\rd s\\
                        &\leq \wt{\gamma} \norm{\Phi^{1}-\Phi^{\star}}_1.
                    \end{split}
                \end{equation}
               Note that $\wt{\gamma}\approx 0.8189$ implies that $\Phi^{2}\in B$. According to \cref{lm:cm}, we know that the rate of convergence is bounded by $\norm{DG(\Phi^{\star})}_1\leq \frac{1}{2}h(\norm{\Phi^{\star}}_1)$.
                
                Furthermore, for any $t\ge 1$ it holds that
                \begin{equation}
                    \begin{split}
                       \norm{\Phi^{t+1}-\Phi^{\star}}_1&\leq \frac{1}{2}\norm{\Phi^{t}-\Phi^{\star}}_1
                        \int_{0}^{1} h\left( \frac{s}{2}\sinh(2\norm{\Phi^{\star}}_1)+(1-s)\norm{\Phi^{\star}}_1\right)\rd s\\
                        & = \norm{\Phi^{t}-\Phi^{\star}}_1 \gamma(\norm{\Phi^{\star}}_1).
                    \end{split}
                \end{equation}
                Here, we use the result that for any $\Phi\in B$, $\norm{\Phi}_1\leq \frac{1}{2}\sinh(2\norm{\Phi^{\star}}_1)$ from \cref{eqn:estimate_phi}. This proves the lemma.
        \end{proof}
        As a remark, both the theoretical analysis and numerical results suggest that the region in which \cref{alg:iterative-qsp} converges should be larger than $B$. 
We now prove \cref{thm:main_converge} (i) with an explicit characterization of the constant $C$.

        \begin{proof}[Proof of \cref{thm:main_converge} (i)]
                \cref{eq:F_inverse} implies that $\Phi^{\star} := F^{-1}(c)$ satisfies $\norm{\Phi^{\star}}_1<\wt{r}_{\Phi}$, given $\norm{c}_1 < H(\wt{r}_{\Phi})$.  According to \cref{eqn:radius}, we get 
                \begin{equation}
                \begin{split}
                \norm{\Phi^{1}-\Phi^{\star}}_1&=\norm{\frac{1}{2} F(\Phi^{\star})-\Phi^{\star}}_1 \leq \half\sinh(2\norm{\Phi^{\star}}_1)-\norm{\Phi^{\star}}_1\\
                &\leq \half \sinh(2\wt{r}_{\Phi}) -\wt{r}_{\Phi}= r_{\Phi}-\wt{r}_{\Phi}.
                \end{split}
                \end{equation}
                Here we use the fact that $\half \sinh(2x) -x$ is monotonic increasing on $[0,\infty)$. Applying \cref{lm:contraction}, we obtain the error estimate 
\begin{equation}\label{eqn:error}
 \norm{\Phi^{t}-\Phi^{\star}}_1 \le (r_{\Phi}-\wt{r}_{\Phi})\wt{\gamma}^{t-1}, \quad \wt{\gamma} \approx 0.8189 ,\quad t\ge 1.
 \end{equation}
This finishes the proof of \cref{thm:main_converge} (i), and the constant $C$ is $(r_{\Phi}-\wt{r}_{\Phi})/\wt{\gamma}\approx 0.1393$.
        \end{proof}

\subsection{Complexity}

In this subsection, we discuss the complexity of \cref{alg:iterative-qsp}.
For any target function satisfying $\norm{c}_1 < \wt{r}_c$, to reach the $\ell^1$-error tolerance $\epsilon$, the number of iterations is at most
\begin{equation}
    \left\lceil\log\left(\frac{1}{(r_\Phi-\wt{r}_\Phi)\epsilon}\right)/\log\wt{\gamma}\right\rceil.
\end{equation}
That means the number of iterations is
$\mathcal{O}(\log\frac{1}{\epsilon})$, and the upper bound of the number of iterations is independent of the target function and the effective length of $\Phi$. Therefore, we only need to analyze the complexity of implementing $F$. 
On a computer we can only perform operations for matrices of finite sizes. For any set of phase reduced phase factors $\Phi$ whose effective length is $\wt{d}$, let $\mathfrak{G}(\Phi)=\left(g\left(x_{0}, \Phi\right), \cdots, g\left(x_{2d}, \Phi\right)\right)^{\top}$. Here, $d=2\wt{d}-2$ if the target polynomial is even, and $d=2\wt{d}-1$ if odd. The Chebyshev node is $x_j= \cos(\frac{2\pi j}{2d+1})$.

Observe that for $k,l=0,\cdots,d$,  
\begin{equation}
\begin{split}
    \Re\left(\sum_{j=0}^{2d} T_{k}(x_j)e^{-\I \frac{2\pi}{2d+1}l j}\right)&=\sum_{j=0}^{2d}\cos\left(\frac{2\pi k}{2d+1}j\right)\cos\left(\frac{2\pi l}{2d+1}j\right)\\
    &=\frac{1}{2}\sum_{j=0}^{2d}\cos\left(\frac{2\pi (k+l)}{2d+1}j\right)+ \frac{1}{2}\sum_{j=0}^{2d}\cos\left(\frac{2\pi (k-l)}{2d+1}j\right)\\
    & = \frac{2d+1}{2}\delta_{kl}(\delta_{k0}+1).
\end{split}
\end{equation}
Hence the Chebyshev-coefficient vector can be evaluated by applying the fast Fourier transform (FFT) to $\mathfrak{G}(\Phi)$. 
The output from applying FFT to $\mathfrak{G}(\Phi)$ is a vector of length $2d+1$, denoted as $v$, satisfying 
\begin{equation}
    \Re(v_0)= (2d+1) \wt{c}_0, \quad \Re(v_j) = \frac{2d+1}{2} \wt{c}_j, j=1,\cdots, d
\end{equation}
where $\wt{c}_j$, $j=0,\cdots, d$, are the Chebyshev coefficients of $g(x,\Phi)$ with respect to $T_j$. Recall that $F(\Phi)$ is the Chebyshev-coefficient vector of $g(x,\Phi)$ which is either $(\wt{c}_0,\wt{c}_2,\cdots,\wt{c}_d)$ or $(\wt{c}_1,\wt{c}_3,\cdots,\wt{c}_d)$ depending on the parity of $d$. 

For completeness, the procedure for computing $F(\Phi)$ is given in \cref{alg:fft}. 
The cost of evaluating $\mathfrak{G}(\Phi)$ is $\Or(d^2)$, and the cost of FFT is $\mathcal{O}(d \log d)$.   Therefore, the overall time complexity is $\mathcal{O}(d^2\log\frac{1}{\epsilon})$. This concludes the proof of \cref{thm:main_converge} (ii).

\cref{alg:iterative-qsp} is numerically stable, in the sense that the number of bits required in the computation is $\Or(\mathrm{polylog}(d/\epsilon))$\footnote{Generically, the number of bits cannot be rigorously bounded by a constant in a proof. Most numerically stable algorithms can be robustly implemented using fixed double precision arithmetic operations in practice.}. To show the numerical stability, we consider the model of finite precision arithmetic (see standard axioms in \cite{Higham}) and denote by $u$ the unit roundoff error. In each iteration step, the rounding error when evaluating $\mathfrak{G}(\Phi)$ is $\Or(d^2u)$. The rounding error occurred in FFT is $\Or(\log(d) u)$ \cite{Ramos}. So the total rounding error accumulated in each iteration step remains $\mathcal{O}(d^2 u)$. Therefore the number of bits required by \cref{alg:iterative-qsp} is $\mathcal{O}\left(\log\left(\frac{d\log(1/\epsilon)}{\epsilon}\right)\right)$.

\begin{algorithm}[htbp]
\caption{Compute $F(\Phi)$.}
\label{alg:fft}
\begin{algorithmic}
\STATE{\textbf{Input:} Reduced phase factors $\Phi$, parity, and its effective length $\wt{d}$.}
\IF{parity is even}
\STATE{Set $d=2\wt{d}-2$.}
\ELSE
\STATE{Set $d=2\wt{d}-1$.}
\ENDIF
\STATE{Initialize $\vg = (0, 0, \cdots) \in \RR^{2d+1}$.}
\STATE{Evaluate $\vg_j \leftarrow g(x_j, \Phi),x_j=\frac{2\pi j}{2d+1},j = 0, \cdots, 2d+1$.}
\STATE{Compute $v_l\leftarrow \Re\left(\sum_{j=0}^{2d-1} \vg_j e^{-\I \frac{2\pi}{2d+1}l j}\right),l=0,\ldots,d$ using FFT.}
\IF{parity is even}
\STATE{$F(\Phi)\leftarrow \frac{2}{2d+1}(\frac{v_0}{2}, v_2, v_4,\cdots, v_d)$.}
\ELSE
\STATE{$F(\Phi)\leftarrow \frac{2}{2d+1}(v_1, v_3, v_5,\cdots, v_d)$.}
\ENDIF
\STATE{\textbf{Output:} $F(\Phi)$.}
\end{algorithmic}
\end{algorithm}

\section{Numerical results}\label{sec:numerics}
We present a number of tests to demonstrate the efficiency of the fixed-point iteration (FPI) algorithm (\cref{alg:iterative-qsp}). All numerical tests are performed on a 6-core Intel Core i7 processor at 2.60 GHz with 16 GB of RAM. Our method is implemented in \textsf{MATLAB} R2019a.  

The target function is $f(x)=e^{-\I\tau x}$, which has applications in Hamiltonian simulation. It can be expanded using the Jacobi-Anger expansion as
\begin{equation}
\label{eq;Jacobi-Anger}
    e^{-\I\tau x}=J_0(\tau)+2\sum_{k\text{ even}}(-1)^{k/2} J_k(\tau) T_k(x) + 2\I \sum_{k \text{ odd}} (-1)^{(k-1)/2} J_k(\tau) T_k(x),
\end{equation}
where $J_k$'s are the Bessel functions of the first kind. 

We use \cref{alg:iterative-qsp} to find the phase factors respectively for the even part
\begin{equation}
    f_{\text{even}}(x):= J_0(\tau)+2\sum_{k\text{ even}, k<d}(-1)^{k/2} J_k(\tau) T_k(x),
\end{equation}
and the odd part 
\begin{equation}
    f_{\text{odd}}(x):= 2\sum_{k \text{ odd}, k<d} (-1)^{(k-1)/2} J_k(\tau) T_k(x).
\end{equation}
We choose $d=1.4\abs{\tau}+\log(1/\epsilon_0)$, where $\epsilon_0 = 10^{-14}$. Since the target polynomial should be bounded by 1, to ensure numerical stability, we scale $f_{\text{even}}(x)$ and $f_{\text{odd}}(x)$ by a factor of $\half$. Then we use \cref{alg:iterative-qsp} to find phase factors for target polynomials, $\half f_{\text{even}}(x)$ and $\half f_{\text{odd}}(x)$, respectively. \cref{fig:iter_res_HS} displays the corresponding residual error, $\norm{F(\Phi^{t})-c}_1$ with $\tau= 1000$, where $\Phi^{t}$ is the set of reduced phase factors at the $t$-th iteration.  Note that $\norm{c}_1$ is very large, and is equal to 9.8609 and 9.7403 for the even and odd case, respectively. 
Nonetheless, \cref{alg:iterative-qsp} converges starting from the fixed initial guess.

In \cref{fig:converge_HS}, we demonstrate that for a fixed target function, as the polynomial degree $d$ increases, the corresponding set of reduced phase factors $\Phi^{(d)}$ indeed converges to some $\Phi^{\star}$.  We still take $\half f_{\text{even}}(x)$ and $\half f_{\text{odd}}(x)$ as examples. We choose $\tau=200$, and $\Phi^{\star}$ are approximately computed by setting the degrees of $\half f_{\text{even}}(x)$ and $\half f_{\text{odd}}(x)$ to 312 and 313 respectively. We truncate $\half f_{\text{even}}(x)$ to even polynomials with degree $d=180,190,\cdots,310$. Similarly, we also truncate $\half f_{\text{odd}}(x)$ to odd polynomials with degree $d=181,191,\cdots,311$. Then we use \cref{alg:iterative-qsp} to compute the corresponding reduced phase factors, $\Phi^{(d)}$. We would like to emphasize that to our knowledge, only the optimization based method is able to find $\Phi^{(d)}$ that has a well defined limit as $d\to \infty$.

We also compare the performance of \cref{alg:iterative-qsp} with the quasi-Newton method implemented in \cite{DongMengWhaleyEtAl2021} on $\half f_{\text{even}}(x)$ and $\half f_{\text{odd}}(x)$ with $\tau=50, 100, \cdots, 1000$. The stopping criteria for \cref{alg:iterative-qsp} is  $\norm{F(\Phi^t)-c}_1\leq \epsilon$. As for quasi-Newton method, the iteration stops when
\begin{equation}
    L(\Phi):=\max_{j=1,\cdots,\wt{d}}\abs{ g(x_j,\Phi^t)-f_{\text{target}}(x_j)}\leq \epsilon.
\end{equation}
where $f_{\text{target}}$ is the target polynomial and $x_j = \cos\left(\frac{(2j-1)\pi}{4\wt{d}}\right)$, $j=1,\cdots, \wt{d}$ are the positive roots of the Chebyshev polynomial $T_{2\wt{d}}(x)$. For numerical demonstration, we choose $\epsilon=10^{-12}$. The results of comparison are displayed in \cref{fig:comparison}. Since $L(\Phi)\leq \norm{F(\Phi)-c}_1$ for any $\Phi$, the stopping criteria for \cref{alg:iterative-qsp} is actually tighter than that in quasi-Newton method. Thus \cref{fig:cpu_d_HS} implies that \cref{alg:iterative-qsp} converges faster than quasi-Newton method in this example. The degree of the target polynomial linearly increases as the value of $\tau$ increases, and the CPU time scales asymptotically as $\Or(\tau^2)=\Or(d^2)$. In \cref{fig:iter_d_HS}, we present the number of iterations required using FPI to find phase factors for $\half f_{\text{even}}(x) $ and $\half f_{\text{odd}}(x)$.  \cref{fig:iter_d_HS} indicates that the number of iterations is almost independent of the degree of the target polynomial.

Finally, we demonstrate the decay of phase factors in \cref{fig:numerical-pf-decay}. In the first example, we truncate the series expansion of $f(x)=0.8\abs{x}^3$ in terms of Chebyshev polynomials of the first kind up to degree $d=1000$ and use \cref{alg:iterative-qsp} to find the corresponding reduced phase factors. The third order derivative of $f(x)$ is discontinuous. In \cref{fig:decay_absx3}, we plot the magnitude of its Chebyshev-coefficient vector, as well as the reduced phase factors obtained by \cref{alg:iterative-qsp}.  Here, the 1-norm of Chebyshev coefficients is about $0.8149$, which is bounded by $r_c$. \cref{fig:decay_absx3} shows that the reduced phase factors decay away from the center with an algebraic decay rate around $4$, which matches the decay rate of Chebyshev coefficients. This also agrees with our theoretical results in \cref{thm:main_decay}. In the second example, we choose $\half f_{\text{odd}}(x)$ as target polynomial and present the magnitude of both Chebyshev-coefficient vector and the corresponding reduced phase factors in \cref{fig:decay_HS_odd}. The 1-norm of Chebyshev coefficients is around $3.2332$, which exceeds the norm condition in \cref{thm:main_decay}. 
Nonetheless, the decay of the tail of the phase factors closely matches that of the Chebyshev-coefficient vector.

\begin{figure}
\centering
\begin{subfigure}{0.45\textwidth}
  \includegraphics[width=60mm]{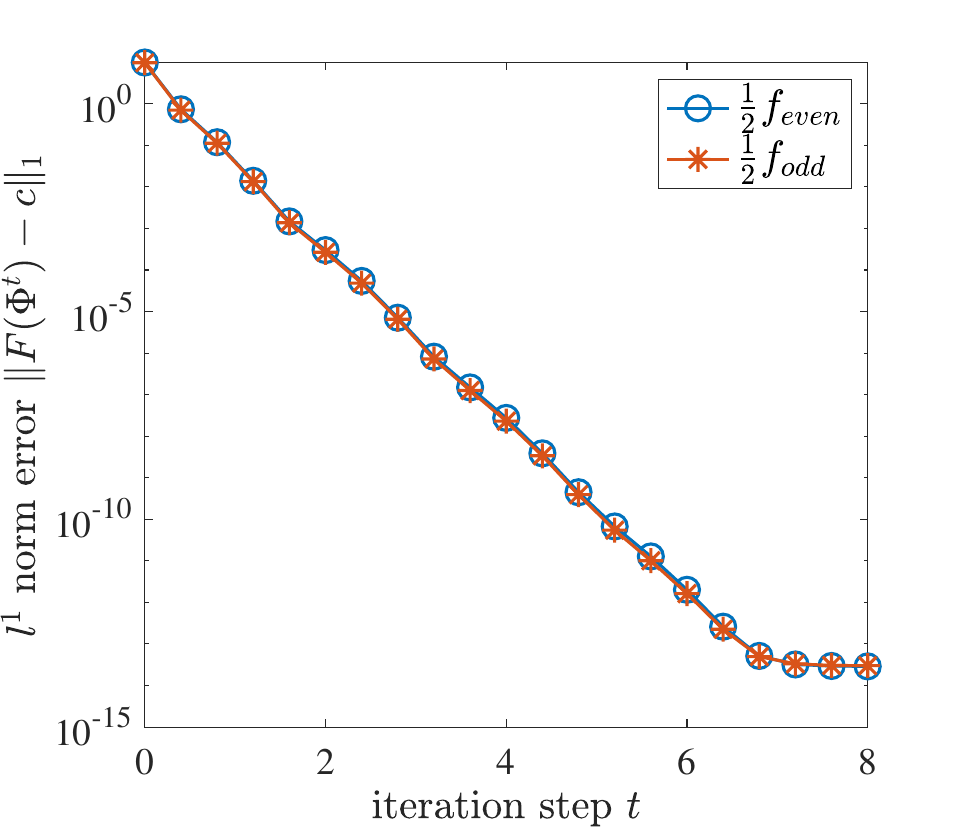}
  \caption{The residual error after each iteration by using \cref{alg:iterative-qsp} to determine phase factors for $\half f_{\text{even}}(x)$ and $\half f_{\text{odd}}(x)$ respectively, where $\tau=1000$.}
  \label{fig:iter_res_HS}
\end{subfigure}
\hfill
\begin{subfigure}{0.45\textwidth}
  \includegraphics[width=60mm]{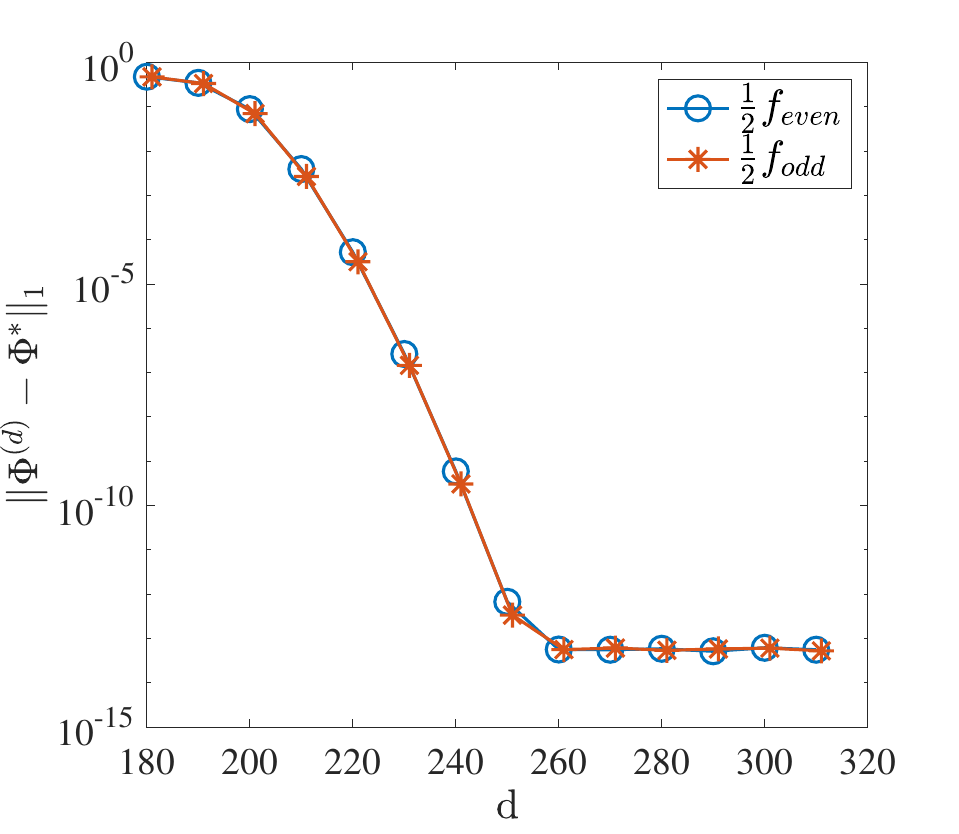}
  \caption{The convergence of $\Phi^{(d)}$ as the degree of polynomial approximating target function increases. The target functions are $\half f_{\text{even}}(x)$ and $\half f_{\text{odd}}(x)$ of degree 312 and 313 respectively, where $\tau =200$.}
  \label{fig:converge_HS}
\end{subfigure}
\caption{The performance of the fixed-point iteration (FPI) algorithm (\cref{alg:iterative-qsp}) to find phase factors for $\half f_{\text{even}}(x)$ and $\half f_{\text{odd}}(x)$.}
\end{figure}

\begin{figure}
\centering
\begin{subfigure}{0.45\textwidth}
  \includegraphics[width=60mm]{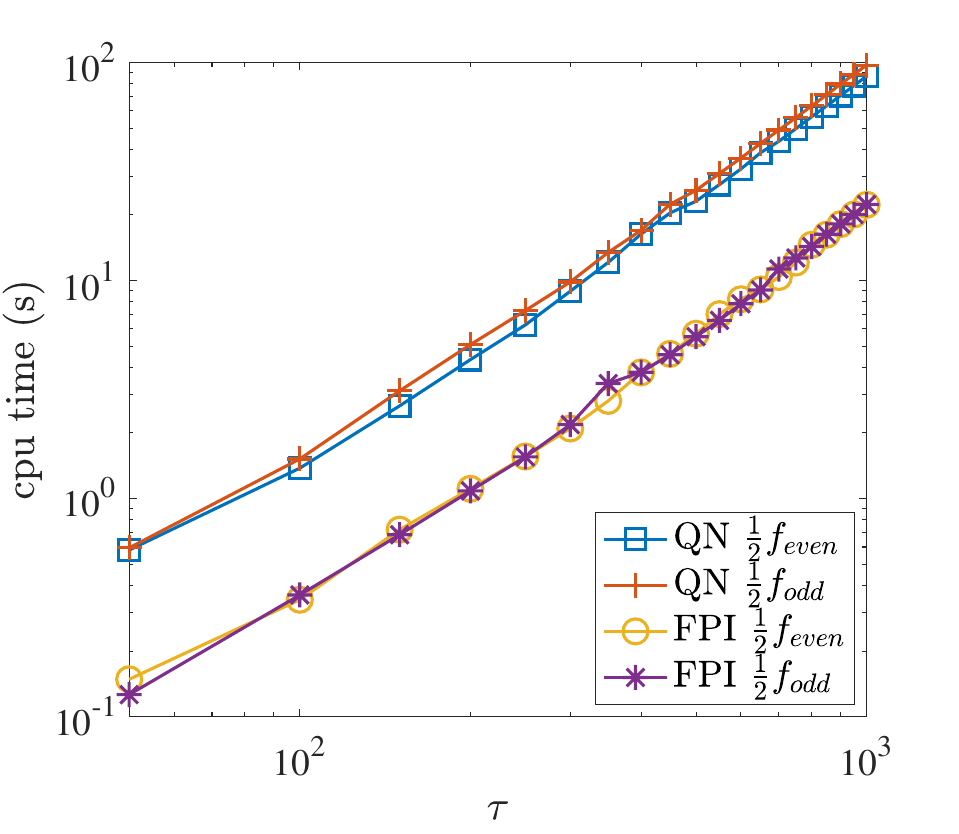}
  \caption{CPU time(s) required using FPI and  \textsf{QSPPACK} to find QSP phase factors. The slopes of the red and purple lines are about 2, representing CPU time = const$\cdot \tau^2$.}
  \label{fig:cpu_d_HS}
\end{subfigure}
\hfill
\begin{subfigure}{0.45\textwidth}
  \includegraphics[width=60mm]{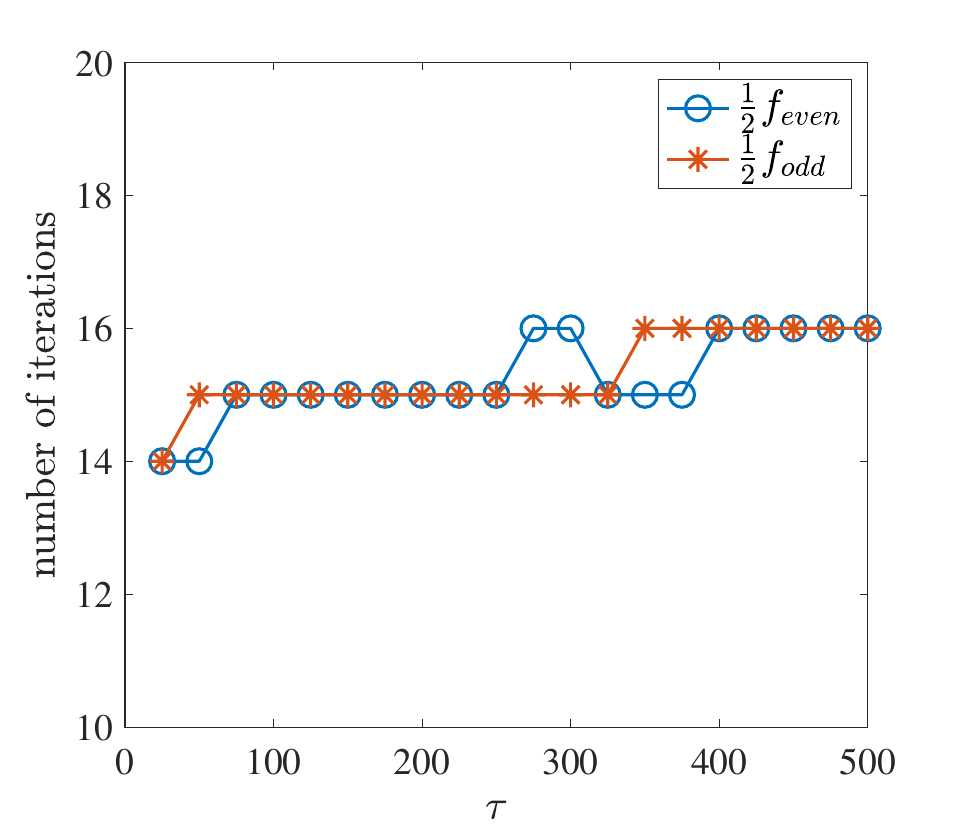}
  \caption{Number of iterations required using FPI to find QSP phase factors. The number of iterations required varies between 14 and 16 and has no direct relation to $\tau$.}
  \label{fig:iter_d_HS}
\end{subfigure}
\caption{Comparison of the performance of the fixed-point iteration (FPI) algorithm (\cref{alg:iterative-qsp}) with the quasi-Newton (QN) method in \cite{DongMengWhaleyEtAl2021} to find QSP phase factors for $\half f_{\text{even}}(x)$ and $\half f_{\text{odd}}(x)$ with $\tau=50, 100, 150, \cdots, 1000$. The error tolerance is $\epsilon=10^{-12}$.}
\label{fig:comparison}
\end{figure}

\begin{figure}
\centering
\begin{subfigure}{0.45\textwidth}
  \includegraphics[width=60mm]{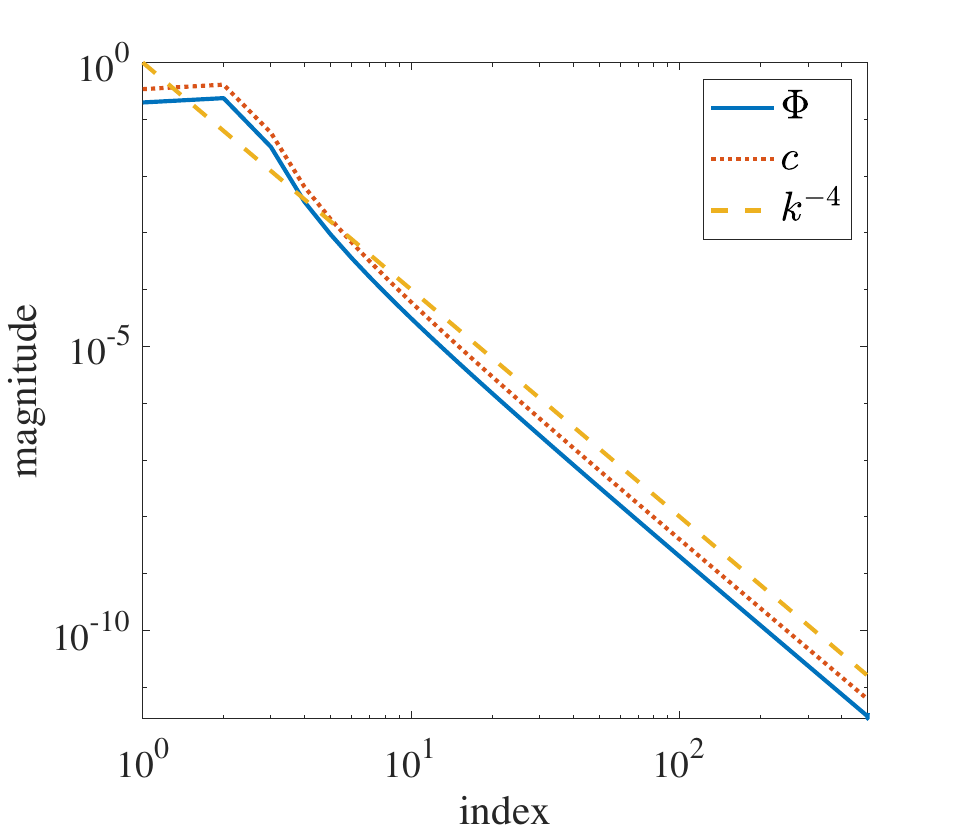}
  \caption{The target polynomial is $\sum_{k=0}^{1000}c_k T_{2k}(x)$ , where $c_k$ is the Chebyshev coefficient of $0.8\abs{x}^3$ w.r.t. $T_{2k}$. The slopes of blue and red curves are about $-4$, representing $\abs{\phi_k}\approx\text{const}\cdot k^{-4}$ and $\abs{c_k}\approx\text{const}\cdot k^{-4}$.  }
  \label{fig:decay_absx3}
\end{subfigure}
\hfill
\begin{subfigure}{0.45\textwidth}
  \includegraphics[width=60mm]{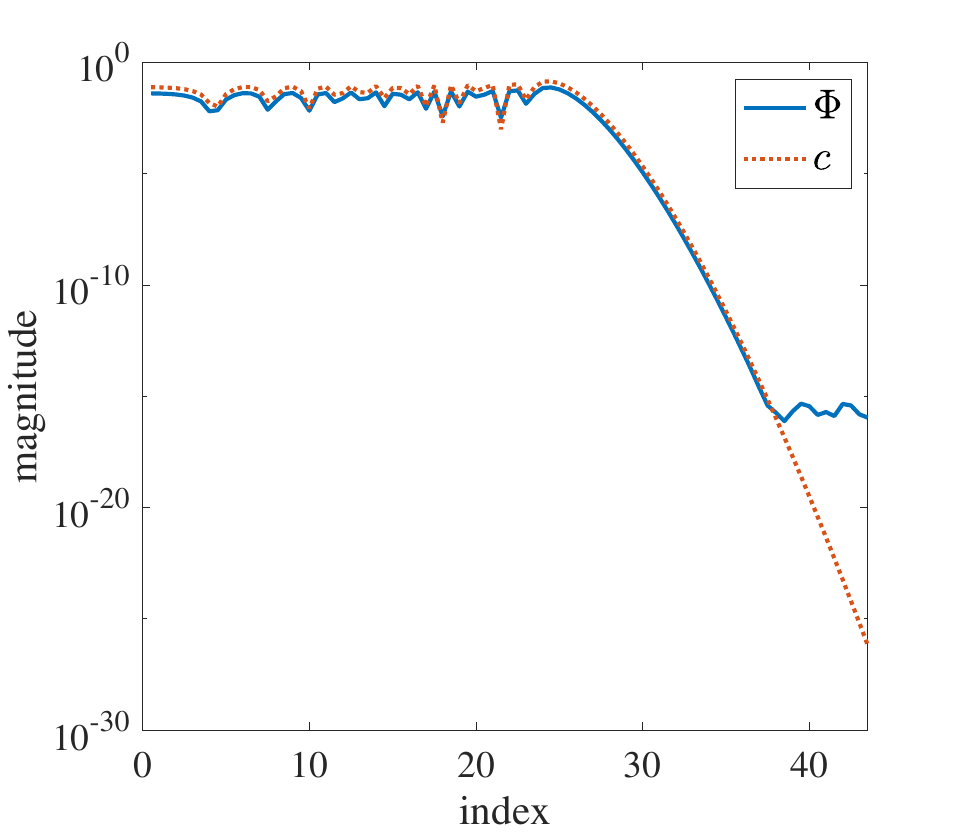}
  \caption{The target polynomial is $\half f_{\text{odd} }(x)$ with $\tau = 100$ of degree 173. }
  \label{fig:decay_HS_odd}
\end{subfigure}
\caption{Magnitude of the Chebyshev-coefficient vector $c$ and the corresponding reduced phase factors $\Phi$. }
\label{fig:numerical-pf-decay}
\end{figure}

\section{Discussion}

The question of infinite quantum processing (\cref{prob:iqsp}) asks whether there is a set of phase factors of infinite length in $\ell^1$ for representing target polynomials expressed as an infinite polynomial series. 
\cref{thm:main_l1} provides a positive answer, when the 1-norm of the Chebyshev coefficients $c$ of the target function is upper bounded by a constant $r_c$. 
While it is always possible to rescale the target function to satisfy the constraint, the constraint may be violated for many target functions without rescaling. For instance, in the Hamiltonian simulation problem, we have $\norm{c}_1=\Or(\tau)$, where $\tau$ is the simulation time and can be arbitrarily large. Numerical results in \cref{sec:numerics} indicate that both the fixed point algorithm and the decay properties persist even for large $\norm{c}_1$. 
Therefore it may be possible to significantly relax the condition $\norm{c}_1\le r_c$.

Ref.~\cite{WangDongLin2021} shows that the structure of phase factors using the infinity norm of the target polynomial $f^{(d)}$, and proves the convergence of the projected gradient method when $\norm{f^{(d)}}_{\infty}\le C d^{-1}$.
Using the tools developed in this paper, this condition can also be relaxed to $d^{-1/2}$, but the $d$-dependence may not be removed by the techniques in this paper alone.

 It is worth noting that the conditions on $\norm{f^{(d)}}_{\infty}$ and $\norm{c}_1$ are generally unrelated, i.e., neither implies the other. It is of particular interest to develop a method for computing phase factors that provably converges in the limit $\norm{f}_\infty \to 1$, which is called the fully coherent limit~\cite{MartynLiuChinEtAl2021}. 
This limit is important for the performance of certain amplitude amplification algorithms~\cite{GilyenSuLowEtAl2019,FangLinTong2022} and Hamiltonian simulation problems~\cite{MartynLiuChinEtAl2021}.

The decay properties of the reduced phase factors may also have some practical implications. For instance, for approximating smooth functions, the phase factors towards both ends of the quantum circuit are very close to being a constant. This may facilitate the compilation and error mitigation of future applications using the quantum singular value transformation.

\vspace{1em}

\noindent\textit{Note:} Since the initial posting of this paper, there have been significant algorithmic and theoretical advancements in this area. We recently proposed an algorithm based on Newton's method~\cite{DongLinNiEtAl2024}, which demonstrates rapid and robust numerical convergence for all functions admitting a QSP representation, including those where $\norm{f}_{\infty} \to 1$.   \cite{AlexisLinMnatsakanyanEtAl2024} provides the first provably stable algorithm for  finding phase factors for all polynomials with $\norm{f}_{\infty} < 1$. This is due to the recent theoretical development connecting the iQSP problem with nonlinear Fourier analysis first established in~\cite{AlexisMnatsakanyanThiele2023}.  The analysis of \cite{AlexisMnatsakanyanThiele2023} provides a positive solution to  \cref{prob:iqsp} for functions with $\norm{f}_{\infty} < 1/\sqrt{2}$, and~\cite{AlexisLinMnatsakanyanEtAl2024} generalizes this to include nearly all functions (polynomials and beyond) that admit a QSP representation.

Another recent development is the introduction of generalized quantum signal processing (GQSP)~\cite{MotlaghWeibe2023}, which relaxes the parity constraint on the target polynomial. A promising direction for future research is to investigate the potential for extending GQSP to represent smooth functions.

\vspace{1em}
\noindent\textbf{Acknowledgment}

This work was supported by the U.S. Department of Energy under the Quantum Systems Accelerator program under Grant No. DE-AC02-05CH11231 (Y.D. and L.L.), and by  the Challenge Institute for Quantum Computation (CIQC) funded by National Science Foundation (NSF) through grant number OMA-2016245 (J.W.). L.L. is a Simons Investigator in Mathematics. The authors thank discussions with Andr\'as Gily\'en and Felix Otto.

\bibliographystyle{abbrvurl}

\newpage
\appendix

\section{Some useful results related to Banach space}
\label{sec:banach}

\begin{definition}\label{dfn:C1}
Let $X,Y$ be normed vector spaces. A map $F:X\rightarrow Y$ is called $C^1$ if for every $x\in X$, there exists a (unique) bounded linear map $DF_x:X\rightarrow Y$ such that $\lim\limits_{h\rightarrow 0}\frac{\norm{F(x+h)-F(x)-DF_x(h)}}{\norm{h}}=0$, and the map $X\rightarrow \mathcal{L}(X,Y),x\mapsto DF_x$ is continuous. Here $\mathcal{L}(X,Y)$ is the set of all bounded linear maps from $X$ to $Y$ with norm topology.
\end{definition}

Here is a useful result from functional analysis.
\begin{theorem}\label{thm:invertible}
    Let $X$ be a Banach space equipped with some norm $\norm{\cdot}$ and $T:X\to X$ be a bounded linear operator. Suppose $\norm{T} < 1$, then $\bI-T$ is invertible with
    \begin{equation}
        \norm{(\bI - T)^{-1}}\le \frac{1}{1-\norm{T}},
    \end{equation}
    where $\bI$ is the identity operator.
\end{theorem}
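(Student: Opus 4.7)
The plan is to prove \cref{thm:invertible} via the Neumann series, i.e., to show that $(\bI - T)^{-1} = \sum_{n=0}^{\infty} T^n$ with the series converging in the operator norm. The key point is that the space $\mathcal{L}(X, X)$ of bounded linear operators on a Banach space $X$ is itself a Banach space under the operator norm, so convergence of the partial sums only requires showing they form a Cauchy sequence.

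First I would form the partial sums $S_N := \sum_{n=0}^{N} T^n$ and use submultiplicativity $\norm{T^n} \le \norm{T}^n$ together with the triangle inequality to bound $\norm{S_N - S_M} \le \sum_{n=M+1}^{N} \norm{T}^n$ for $N > M$. Since $\norm{T} < 1$, this geometric tail tends to $0$ as $M \to \infty$, so $\{S_N\}$ is Cauchy in $\mathcal{L}(X, X)$. By completeness of $\mathcal{L}(X, X)$, the sum $S := \sum_{n=0}^{\infty} T^n$ exists as a bounded linear operator, and passing the norm bound to the limit yields
\begin{equation*}
    \norm{S} \le \sum_{n=0}^{\infty} \norm{T}^n = \frac{1}{1 - \norm{T}}.
\end{equation*}

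Next I would verify that $S$ is indeed the inverse of $\bI - T$. Computing directly, $(\bI - T) S_N = \sum_{n=0}^{N} T^n - \sum_{n=1}^{N+1} T^n = \bI - T^{N+1}$, and since $\norm{T^{N+1}} \le \norm{T}^{N+1} \to 0$, taking $N \to \infty$ and using continuity of left-multiplication by the bounded operator $\bI - T$ shows $(\bI - T) S = \bI$. The identity $S (\bI - T) = \bI$ follows symmetrically. Hence $\bI - T$ is invertible with $(\bI - T)^{-1} = S$, and the desired norm bound has already been established.

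The proof is essentially routine, and I do not foresee a genuine obstacle: the only subtlety is making sure that convergence of $S_N$ in operator norm (rather than merely strong or pointwise convergence) is used, so that one can commute $\bI - T$ past the limit. This is precisely where completeness of $\mathcal{L}(X, X)$ — inherited from completeness of $X$ — is invoked.
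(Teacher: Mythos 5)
Your proof is correct and follows essentially the same route as the paper: the Neumann series $S=\sum_{n\ge 0}T^n$, shown to converge in $\mathcal{L}(X,X)$ via completeness and the geometric bound $\norm{T^n}\le\norm{T}^n$, then verified to invert $\bI-T$ and to satisfy $\norm{S}\le(1-\norm{T})^{-1}$. Your write-up is just a bit more explicit about the Cauchy-sequence step and the telescoping identity $(\bI-T)S_N=\bI-T^{N+1}$.
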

\begin{proof}
        Consider the series $\sum_{k=0}^{\infty} \norm{T}^k = \frac{1}{1-\norm{T}}$, it converges since $\norm{T}<1$. Hence $\sum_{k=0}^{\infty} T^k$ converges in $\mathcal{L}(X,X)$ and we denote its limit as $S$. Observe that $S(\bI-T)=(\bI-T)S=\bI$. Hence $\bI-T$ is invertible and $\norm{(\bI - T)^{-1}} = \norm{S}\le \frac{1}{1-\norm{T}}$.
\end{proof}

The terminologies used in the statement of the following theorem are explained in \cref{dfn:C1}. 
\begin{theorem}\label{thm:extend}
        Suppose that $X, Y$ are dense subspaces of Banach spaces $\overline{X}, \overline{Y}$. Let $F$ be a $C^{1}$ map, $F: X \rightarrow Y$, such that for any $\left\{x_{n}\right\}$ a Cauchy sequence in $X, F\left(x_{n}\right)$ is Cauchy in $Y$ and $D F_{x_{n}}$ is Cauchy in $\mathcal{L}(X, Y)$. Then there exists a unique $C^{1}$ map $\overline{F}: \overline{X} \rightarrow \overline{Y}$ such that $\left.\overline{F}\right|_{X}=F$ and $\left.D \overline{F}\right|_{X}=D F$.
\end{theorem}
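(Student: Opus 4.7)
The plan is to build $\overline{F}$ and $D\overline{F}$ by taking limits along approximating sequences from the dense subspaces, to lift each $DF_x$ to an operator on the ambient Banach spaces via the BLT (bounded linear transformation) theorem, and finally to check the Fr\'echet derivative condition via the mean value inequality.

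Concretely, for every $\overline{x}\in\overline{X}$, fix a sequence $\{x_n\}\subset X$ with $x_n\to\overline{x}$; such a sequence exists by density and is automatically Cauchy. The hypothesis supplies that $\{F(x_n)\}$ is Cauchy in $Y$, hence has a limit in the Banach space $\overline{Y}$, and I set $\overline{F}(\overline{x}):=\lim_n F(x_n)$. Independence from the choice of approximating sequence follows from the standard interlacing argument: if $\{x_n\},\{y_n\}\subset X$ both converge to $\overline{x}$, then $x_1,y_1,x_2,y_2,\ldots$ is Cauchy, its image under $F$ is Cauchy, and both subsequential limits coincide. The constant sequence gives $\overline{F}|_X=F$. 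For the derivative, the BLT theorem extends each $DF_x:X\to Y$ uniquely to a norm-preserving bounded linear map $\widetilde{DF}_x:\overline{X}\to\overline{Y}$. Because this extension is an isometric embedding of $\mathcal{L}(X,Y)$ into $\mathcal{L}(\overline{X},\overline{Y})$, the Cauchy hypothesis on $\{DF_{x_n}\}$ is preserved in the larger operator space, which is complete since $\overline{Y}$ is Banach, and I set $D\overline{F}_{\overline{x}}:=\lim_n\widetilde{DF}_{x_n}$; the same interlacing trick yields well-definedness.

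The main obstacle will be verifying that $D\overline{F}_{\overline{x}}$ is genuinely the Fr\'echet derivative of $\overline{F}$ at $\overline{x}$. The key tool is the mean value inequality on $X$: for $x,x+h\in X$,
\[
\|F(x+h)-F(x)-DF_x(h)\|_Y \le \|h\|_X \sup_{t\in[0,1]}\|DF_{x+th}-DF_x\|_{\mathcal{L}(X,Y)},
\]
which follows from $F\in C^1$. Given $\overline{x},\overline{h}\in\overline{X}$, I approximate by $x_n\to\overline{x}$ and $h_n\to\overline{h}$ with $x_n,x_n+h_n\in X$, apply the inequality to each $(x_n,h_n)$, and pass to the limit in $n$ using the definitions of $\overline{F}$ and $D\overline{F}$ together with the norm-preserving BLT extension. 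This yields the analogous inequality with bars; as $\|\overline{h}\|\to 0$, the supremum on the right tends to zero by the continuity of $t\mapsto D\overline{F}_{\overline{x}+t\overline{h}}$, so the Fr\'echet condition is verified.

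Continuity of $\overline{x}\mapsto D\overline{F}_{\overline{x}}$ on the whole of $\overline{X}$ follows by a diagonal argument: for $\overline{y}_k\to\overline{x}$, pick $y_{k,n(k)}\in X$ with $y_{k,n(k)}\to\overline{x}$ and $\|\widetilde{DF}_{y_{k,n(k)}}-D\overline{F}_{\overline{y}_k}\|<1/k$, and use the Cauchy-to-Cauchy hypothesis on $DF$ to conclude $\widetilde{DF}_{y_{k,n(k)}}\to D\overline{F}_{\overline{x}}$. Uniqueness of $\overline{F}$ is immediate from density: any continuous map on $\overline{X}$ that agrees with $F$ on the dense subspace $X$ is determined pointwise by limits along Cauchy sequences, and the same observation applied to $D\overline{F}$ pins down the derivative.
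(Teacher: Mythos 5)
Your proposal follows essentially the same construction as the paper: define $\overline{F}$ by Cauchy limits, extend each $DF_x$ to the completions via the unique bounded-linear extension, take operator-norm limits to get $D\overline{F}$, and then verify the Fr\'echet condition by a mean-value/continuity estimate. The one step to tighten (in both your write-up and the paper's) is passing $n\to\infty$ through $\sup_{t\in[0,1]}\|DF_{x_n+th_n}-DF_{x_n}\|$, which does not follow from pointwise operator convergence alone; the clean fix is to fix $\epsilon>0$, use continuity of $D\overline{F}$ at $\overline{x}$ to bound every $\|DF_{x_n+th_n}-DF_{x_n}\|$ by $2\epsilon$ for $\|\overline{h}\|$ small and $n$ large, and only then let $n\to\infty$ in the resulting estimate $\|F(x_n+h_n)-F(x_n)-DF_{x_n}h_n\|\le 2\epsilon\|h_n\|$.
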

\begin{proof}
$\forall x\in\overline{X}$, there exists $\{x_n\}\subset X$ such that $x=\lim\limits_{n\to \infty}{x_n}$, then we define $\overline{F}(x)=\lim\limits_{n\to \infty} F(x_n)$. It is a well-defined and unique continuous map from $\overline{X}$ to $\overline{Y}$ extending $F$. 

To show $\overline{F}$ is $C^1$, fix $x\in X$, then $DF_x:X\to Y$ is a bounded linear map, hence it has a unique extension $G_x:\overline{X} \to \overline{Y}$. Since $F\in C^1$,$\forall \epsilon>0, \exists \delta>0$, if $h\in X$, $\norm{h}<\delta$, then 
\begin{equation}
    \norm{F(x+h)-F(x)-DF_x(h)}\leq \epsilon\norm{h}.
\end{equation}
By continuity of $\overline{F}$, we have for $\forall h\in \overline{X}$,
\begin{equation}\label{differentiable}
    \norm{\overline{F}(x+h)-\overline{F}(x)-G_x(h)}\leq \epsilon\norm{h}.
\end{equation}
Now if $x\in\overline{X}$, write $x=\lim\limits_{n\to \infty} x_n$ for $x_n \in X$, define $G_x:\overline{X}\to \overline{Y}$ by $G_x=\lim\limits_{n\to \infty} G_{x_n} $, where the limit is in norm topology. Then by assumption, $G_x$ is a well-defined, bounded linear map, and the map  $X\rightarrow \mathcal{L}(\overline{X},\overline{Y}),x\mapsto G_x$ is continuous. 

By \cref{differentiable} and the continuity of $\overline{F}(\cdot), G_\cdot(h)$, we have $\forall \epsilon>0$, $\exists \delta>0$, if $x\in \overline{X}$, $h\in \overline{X}$, $\norm{h}<\delta$,  
\begin{equation}
    \norm{\overline{F}(x+h)-\overline{F}(x)-G_x(h)}\leq \epsilon\norm{h}.
\end{equation} Hence $\overline{F}:\overline{X}\to \overline{Y}$ is $C^1$ such that $\overline{F}|_{X}=F$ and $D\overline{F}_x =G_x$, $\forall x\in \overline{X}$. 
\end{proof}

The following inverse mapping theorem can be found in e.g., {\cite[Theorem 1.2, Chapter XIV]{lang2012real}}.
\begin{theorem}[Inverse Mapping Theorem]\label{thm:inverse mapping}
        Let $X, Y$ be Banach spaces and $F: X \rightarrow Y$ be a $C^{1}$ map. Let $x_{0} \in X$ and assume that $D F_{x_{0}}$ is invertible as a bounded linear map. Then there exist open sets $U \subset X, V \subset Y$ such that $x_{0} \in U, F\left(x_{0}\right) \in V$ and $F: U \rightarrow V$ is bijective and $F^{-1}$ is $C^{1}$.
\end{theorem}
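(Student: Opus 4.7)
\medskip

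\noindent\textbf{Proof proposal.} The plan is to follow the classical contraction-mapping proof of the inverse function theorem, adapted to the Banach space setting. The whole argument is local, so I would first perform standard reductions: replace $F$ by the map $\widetilde F(h) := (DF_{x_0})^{-1}\bigl(F(x_0+h) - F(x_0)\bigr)$, which is $C^1$ on a neighborhood of $0\in X$, satisfies $\widetilde F(0)=0$, and has $D\widetilde F_0 = \mathrm{id}_X$. If the theorem is proven for $\widetilde F$ at the origin, the conclusion for $F$ at $x_0$ follows by composing the local inverse of $\widetilde F$ with the affine maps. So I reduce to the case $X=Y$, $x_0=0$, $F(0)=0$, $DF_0=\mathrm{id}$.

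Next, introduce the auxiliary map $G(x) := x - F(x)$, so that $DG_0 = 0$. By the continuity of the map $x\mapsto DG_x$ (this is where the $C^1$ hypothesis, as made precise in \cref{dfn:C1}, enters), there exists $r>0$ such that $\|DG_x\|\le \tfrac12$ for all $x$ with $\|x\|\le r$. The mean value inequality (obtained by integrating $DG$ along line segments, which lie in the ball by convexity) then yields $\|G(x)-G(x')\|\le \tfrac12\|x-x'\|$ on $\bar B(0,r)$, and in particular $\|G(x)\|\le \tfrac12\|x\|\le r/2$. For any $y\in Y$ with $\|y\|\le r/2$, define $T_y(x) := y + G(x) = y + x - F(x)$. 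Then $T_y$ maps $\bar B(0,r)$ into itself and is a $\tfrac12$-contraction, so by the Banach fixed point theorem (valid since $\bar B(0,r)$ is a complete metric space) there is a unique $x\in\bar B(0,r)$ with $T_y(x)=x$, i.e. $F(x)=y$. Setting $V := B(0,r/2)$ and $U := F^{-1}(V)\cap B(0,r)$ gives the desired open sets and a bijection $F:U\to V$ with inverse $F^{-1}=:H$. Continuity of $H$ with Lipschitz constant $2$ follows from the estimate $\tfrac12\|x-x'\|\le \|F(x)-F(x')\|$, which is a rearrangement of the contraction bound on $G$.

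For differentiability of $H$ at an arbitrary $y\in V$ with $x=H(y)$, I would shrink $r$ if necessary so that $DF_x$ is invertible for all $x\in U$: indeed $DF_x = \mathrm{id} - (\mathrm{id}-DF_x) = \mathrm{id}-DG_x$ with $\|DG_x\|\le \tfrac12<1$, so \cref{thm:invertible} gives invertibility and $\|DF_x^{-1}\|\le 2$. Then for $k\in Y$ small, set $h:=H(y+k)-H(y)$, so $F(x+h)=y+k$ and $k = DF_x(h) + o(\|h\|)$ by the definition of $DF_x$. Applying $DF_x^{-1}$ yields $H(y+k)-H(y) = DF_x^{-1}(k) + o(\|h\|)$, and since $\|h\|\le 2\|k\|$ by the Lipschitz bound on $H$, the $o(\|h\|)$ term is $o(\|k\|)$. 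Hence $DH_y = DF_{H(y)}^{-1}$. Finally, $H$ is $C^1$ because it is the composition of three continuous maps: $y\mapsto H(y)$ (continuous, as shown), $x\mapsto DF_x$ (continuous by the $C^1$ assumption on $F$), and $A\mapsto A^{-1}$ on the open set of invertible operators in $\mathcal L(X,Y)$ (continuous by the Neumann series argument in \cref{thm:invertible}).

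The main obstacle is the differentiability step: one must carefully justify that the $o(\|h\|)$ remainder from $F$'s differentiability translates into an $o(\|k\|)$ remainder after passing through the inverse, which is exactly where the uniform bound $\|DF_x^{-1}\|\le 2$ and the Lipschitz bound $\|h\|\le 2\|k\|$ are indispensable. Everything else is a routine ball-shrinking argument and an application of the Banach fixed point theorem, both of which go through verbatim in infinite dimensions because $X$ and $Y$ are complete.
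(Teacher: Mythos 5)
The paper does not prove this theorem; it cites it as a standard result from Lang's \emph{Real and Functional Analysis} (\cite[Theorem 1.2, Chapter XIV]{lang2012real}). Your proposal is the classical contraction-mapping proof of the inverse function theorem, which is essentially the argument given in that reference (and in every standard treatment of the Banach-space version), and it is correct: the reduction to $x_0=0$, $DF_0=\mathrm{id}$ is legitimate because $DF_{x_0}^{-1}$ is a bounded linear operator; the choice of $r$ with $\norm{DG_x}\le\tfrac12$ on $\bar B(0,r)$ gives both the self-mapping and contraction properties of $T_y$ for $\norm{y}\le r/2$; the Lipschitz estimate $\tfrac12\norm{x-x'}\le\norm{F(x)-F(x')}$ gives continuity of the inverse; and the differentiability argument correctly uses both the uniform bound $\norm{DF_x^{-1}}\le 2$ (from \cref{thm:invertible}) and the Lipschitz bound $\norm{h}\le 2\norm{k}$ to upgrade the $o(\norm{h})$ remainder to $o(\norm{k})$. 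No gaps.
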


\section{Proof for \cref{lm:inverse}}\label{sec:pf_inverse}
        \begin{proof}
                The existence of $F$ is guaranteed by inverse mapping theorem. \cref{lm:DF-2I_estimate} implies that $DF(\Phi)$ is invertible for any $\Phi$ with $\norm{\Phi}_1<r_{\Phi}$. Let $\wt{F}$ be the restriction of $F$ on $B(0,r_{\Phi})$. The proof is divided into two parts:
                \begin{enumerate}
                        \item show that $\wt{F}$ is  injective. Moreover, given $\norm{\Phi}_1\leq r_\Phi$, the effective length of $\wt{F}(\Phi)$ is equal to that of $\Phi$.
                        \item show that for any $c\in \mathbb{R}^{\infty}$ with $\norm{c}_1 < r_c$, $\wt{F}^{-1}(c)$ exists.
                \end{enumerate} 
                For the first part, we prove it by contradiction. Suppose that $\wt{F}$ is not injective, then there exists $\Phi^{(1)}\ne\Phi^{(2)}$ such that $\wt{F}(\Phi^{(1)}) = \wt{F}(\Phi^{(2)})$. Denote $r = \max\left\{\norm{\Phi^{(1)}}_1,\norm{\Phi^{(2)}}_1\right\} < r_{\Phi}$, and define $\wt{G}(\Phi) := \wt{F}(\Phi) - 2\Phi$ for $\Phi \in B(0,r)$. Hence $\norm{D\wt{G}}_1=\norm{D\wt{F}-2\bI}_1\le h(r) < 2$, which derives that 
                \begin{equation}
                        \norm{\wt{G}(\Phi^{(1)}) - \wt{G}(\Phi^{(2)})}_1 \le h(r) \norm{\Phi^{(1)}-\Phi^{(2)}}_1.
                \end{equation}
                Plug in that $\wt{F}(\Phi^{(1)}) = \wt{F}(\Phi^{(2)})$, and we get $2\norm{\Phi^{(2)}-\Phi^{(1)}}_1\le h(r) \norm{\Phi^{(2)}-\Phi^{(1)}}_1$, which is a contradiction. Hence, $\wt{F}$ is injective.
                
                Next we show that the effective length of $\wt{F}(\Phi)$ is equal to that of $\Phi$ given $\norm{\Phi}_1\leq r_{\Phi}$. We use $l_{\Phi}$ and $l_{c}$ to denote the effective length of $\Phi$ and $\wt{F}(\Phi)$, respectively.  Hence, $\Phi=(\phi_0,\phi_1,\cdots, \phi_{l_{\Phi}-1},0,\cdots)$. According to \cref{thm:qsp}, $l_{\Phi}$ is no less than $l_{c}$. Suppose that $l_{\Phi}>l_{c}$. We may apply \cite[Lemma 10]{WangDongLin2021} to compute the Chebyshev coefficient of $\Im[\langle 0|U(x,\Phi(s_n))|0\rangle]$ with respect to $T_{2(l_{\Phi}-1)}$ and get 
                \begin{equation}
                    0= \sin(2\phi_{l_{\Phi}-1})\cos(2\phi_0)\prod_{i=1}^{l_{\Phi}-2}\cos^2(\phi_i).
                \end{equation}
                Hence, there exists $i\le l_{\phi}-2$ such that $\phi_i=\frac{\pi}{4}j$ for some nonzero integer $j$. Then, $\norm{\Phi}_1\geq \frac{\pi}{4}>r_\Phi$, which is a contradiction. Hence, $l_{\Phi}=l_{c}$.

                For the second part, we also prove it by contradiction. Suppose that there is an $\overline{s}<r_c$ and a $Z\in \RR^{\infty}$ with $\norm{Z}_1 = 1$ such that $\overline{s}Z$ does not lie in the range of $\wt{F}$, then we can define
                \begin{equation}
                        s^* = \inf \{s\in[0,\overline{s}]: sZ \text{\ doesn't lie in the range of } \wt{F}\}.
                \end{equation}
                From inverse mapping theorem, we know that $\wt{F}$ is invertible near $0$, hence $s^*$ is well defined. We also define $\Phi(r):=\wt{F}^{-1}(rZ)$ for any $r\in [0,s^*)$.
                
                We claim that there exists $r'<r_{\Phi}$ such that $\norm{\Phi(s)}_1<r'$ for any $s\in (0,s^*)$. That is because
                \begin{equation}
                    \begin{split}
                        s&=\norm{\wt{F}(\Phi(s))}_1=\norm{\int_{0}^1 \frac{\rd}{\rd t} \wt{F}(t\Phi(s)) \rd t}_1\\
                        &=\norm{\int_{0}^1  D\wt{F}(t\Phi(s))\cdot \Phi(s) \rd t}_1\\
                        &=\norm{\int_{0}^1  \left(2\bI+D\wt{F}(t\Phi(s))-2\bI\right)\cdot \Phi(s) \rd t}_1\\
                        &\geq \norm{\int_{0}^1  2\bI\cdot \Phi(s) \rd t}_1-\norm{\int_{0}^1  \left(D\wt{F}(t\Phi(s))-2\bI\right)\cdot \Phi(s) \rd t}_1\\
                        &\geq 2\norm{\Phi(s)}_1- \norm{\Phi(s)}_1\int_0^1\norm{D\wt{F}(t\Phi(s))-2\bI}_1 \rd t.
                    \end{split}\label{eq:F_inverse_bound}
                \end{equation}
                We apply \cref{lm:DF-2I_estimate} and get 
                \begin{equation}
                    s\geq \norm{\Phi(s)}_1\int_0^1 2- h(t\norm{\Phi(s)}_1) \rd t =H(\norm{\Phi(s)}_1).
                \end{equation}
                Note that $H(x)$ is monotonically increasing over $[0,r_{\Phi}]$ and $r_c:=H(r_{\Phi})>\overline{s}>s$. We choose $r'\in (0,r_{\Phi})$ such that $H(r')=\overline{s}$. It follows that $\norm{\Phi(s)}_1\leq r'$ for any $s\in (0,s^*)$ with $\Phi(s)\in B(0,r_{\Phi})$.
                
                Let $\{s_n\}_{n=0}^{\infty}$ be arbitrary series such that $\lim_{n\to \infty} s_n=s^*$ and $s_n<s^*$. Notice that $t\Phi(s_n)+(1-t)\Phi(s_m)\in B_1(0,r')$ for any $t\in (0,1)$. Denote $v:=\Phi(s_m)-\Phi(s_n)$ and then one has
                \begin{equation}
                \begin{split}
                   &\abs{s_n-s_m}=\norm{\wt{F}(\Phi(s_n))-\wt{F}(\Phi(s_m))}_1\\
                    &=\norm{\int_0^1 \frac{\rd}{\rd t}\wt{F}(\Phi(s_n)+tv)\rd t}_1\\
                    &=\norm{\int_0^1 D\wt{F}(\Phi(s_n)+tv)\cdot v\rd t}_1\\
                    &=\norm{\int_0^1 \left(2\bI+D\wt{F}(\Phi(s_n)+tv)-2\bI\right)\cdot v\rd t}_1\\
                    &\geq \norm{\int_0^1 2 v\rd t}_1-\norm{\int_0^1 \left(D\wt{F}(\Phi(s_n)+tv)-2\bI\right)\cdot v\rd t}_1\\
                    &\geq 2\norm{\Phi(s_n)-\Phi(s_m)}_1 -\int_0^1 \norm{\left(D\wt{F}(\Phi(s_n)+tv)-2\bI\right)\cdot v}_1\rd t\\
                    &\geq 2\norm{\Phi(s_n)-\Phi(s_m)}_1 -\int_0^1 h\left(\norm{(1-t)\Phi(s_n)+t\Phi(s_m)}_1\right) \norm{\Phi(s_n)-\Phi(s_m)}_1\rd t\\
                    &\geq 2\norm{\Phi(s_n)-\Phi(s_m)}_1 -h(r')\norm{ (\Phi(s_n)-\Phi(s_m))}_1\\
                    &\geq \left(2-h(r')\right)\norm{\Phi(s_n)-\Phi(s_m)}_1.
                \end{split}
                \end{equation}
                Since $2-h(r')>0$, we know that 
                \begin{equation}
                    \norm{\Phi(s_n)-\Phi(s_m)}_1\to 0, \quad\mathrm{as}\quad n,m\to \infty.
                \end{equation}
                According to the proof for the first part, for any $n$, the effective length of $\Phi(s_n)$ is $l_{c}$, where $l_c$ is the effective length of $Z$. We may view $\Phi(s_n)$ as vectors in $\RR^{l_{c}}$ instead. The limit of $\{\Phi(s_n)\}$ exists in $\RR^{l_{c}}$ and is unique, denoted by $\Phi^*$. Moreover, $\Phi^*$ can also be viewed as an element of $\RR^{\infty}$ and $\Phi^*\in B(0,r_{\Phi})$. By the continuity of $\wt{F}$, we know that 
                \begin{equation}
                    \wt{F}(\Phi^*)=\lim_{s\to s^*} \wt{F}(\Phi(s))=s^*Z,
                \end{equation}
                i.e., $\wt{F}^{-1}(s^*Z)$ exists.
                
                Then we can use \cref{thm:inverse mapping} at $\Phi^*$ and obtain that $\wt{F}^{-1}(sZ)$ exists for $s\in [s^*,s^*+\epsilon]$ for some $\epsilon$, which contradicts with the definition of $s^*$.

        When it comes to the proof of \cref{eq:F_inverse}, we only need to let $s = \norm{c}_1$ and $Z = \frac{c}{s}$, and do the calculation in \cref{eq:F_inverse_bound} again.
        \end{proof}

\end{document}